\newtheorem{theorem}{Theorem}[section]
\newtheorem{corollary}[theorem]{Corollary}
\newtheorem{proposition}[theorem]{Proposition}
\newtheorem{lemma}[theorem]{Lemma}
\theoremstyle{definition}
\newtheorem{definition}[theorem]{Definition}
\theoremstyle{remark}
\newtheorem{remark}[theorem]{Remark}
\theoremstyle{plain}
\newtheorem*{assumption}{Ground Assumption}
\newcommand{\od}{\stackrel{\mbox {\tiny {def}}}{=}}
\newcommand{\code}[2]{\textnormal{code}(#1,#2)}
\newcommand{\exccode}[1]{\mathcal{C}(#1)}
\newcommand{\stbcode}[1]{\mathcal{SC}(#1)}
\newcommand{\inhib}{\mathcal{I}}
\newcommand{\exc}{\mathcal{E}}
\newcommand{\excinhib}{\mathcal{E_I}}
\newcommand{\excuninhib}{\mathcal{E_U}}
\newcommand{\excsup}[1]{\textnormal{supp}_+ \, #1}
\definecolor{Red}{rgb}{1.0, 0.0, 0.0}
\definecolor{NavyBlue}{rgb}{0.0, 0.0, 0.5}
\definecolor{OliveGreen}{rgb}{0.33, 0.42, 0.18}
\newcommand\new[1]{#1}
\newcommand{\supp}{{\text{supp}}}
\newcommand{\C}{{\mathcal{C}}}
\newcommand\Dale[1]{\mathbb D_{{#1}}}
\newcommand{\outnbhd}[2]{N^{+}_{#1}(#2)}
\begin{document}
\noindent {\Large \bf  The combinatorial  code and the graph rules of Dale networks}\\
\noindent  { Nikola Mili\'cevi\'c and Vladimir Itskov}\\ 
Department of Mathematics, The Pennsylvania State University, University Park, PA 16802.\\

\begin{abstract}
 We describe the combinatorics of equilibria  and  steady states of  neurons in threshold-linear networks that satisfy  \new{Dale’s law}. The combinatorial code of a Dale network is characterized in terms of  two conditions: (i) a condition on the network connectivity graph, and (ii) a spectral condition on the synaptic matrix.  We find that in the weak coupling regime the combinatorial code depends only on the connectivity graph, and not on the particulars of the synaptic strengths. Moreover, we prove that the combinatorial code of a weakly coupled network is a sublattice, and  we provide a learning rule for encoding a sublattice in a weakly coupled  excitatory network.    In the strong coupling regime we prove  that the combinatorial code of  a generic Dale network  is intersection-complete and is therefore a convex code, as is common in some sensory systems in the brain.
\end{abstract}

\begin{keywords}
Dale's law, recurrent neural networks,
combinatorial neural codes
\end{keywords}

\begin{MSCcodes}
92B20
\end{MSCcodes}


\section{Introduction}
We study the  equilibria  and  steady states of threshold-linear networks that satisfy  \new{Dale's} law. 
\new{Dale's law, introduced by Sir Henry Dale in 1935 \cite{dale1935pharmacology,osborne2013dale},  postulates  that  a single neuron  utilizes the same set of chemical messengers. This usually implies that each neuron's efferent (outgoing) synapses are either all excitatory or all inhibitory. While this restriction on the signs  of the synaptic weights has a number of exceptions in {\it some} neural systems, it is observed in most neural circuits on the brain \cite{kandel:neural}.}

Threshold-linear networks  ha\new{ve} been  extensively studied in  \cite{HahnSeungSlotine,XieHahnSeung, Hahn2000, curto2012flexible, net-encoding, pattern-completion, nozari2020hierarchical}, and a number of results regarding the stable fixed points of these networks were obtained, especially in the case of a symmetric synaptic matrix. 
Networks that satisfy \new{Dale's} law have been previously investigated in the context of large random networks \cite{rajan2006eigenvalue,ahmadian2015properties,ipsen2020consequences} (this list is very incomplete), where statistical properties such as the spectrum,  the number of equilibria, and phase transitions have been investigated.  In contrast, we are interested in describing the exact combinatorics of the neural code of these networks, as opposed to investigating statistical features. 

To this end we developed a method for understanding the combinatorial codes of threshold-linear networks that obey \new{Dale's law}. A combinatorial code is the collection of patterns of neuronal activation at the equilibria that is possible \new{in} a given network. In particular, this discards the details of the firing rates, and only keeps track of what neurons are co-active \new{at a fixed point}.  It turns out \new{it is} possible  to directly translate features of the connectivity graph into the combinatorial code of a Dale network. We show that the connectivity features completely determine the combinatorial neural code in the weak coupling regime. In the strong-coupling regime, the code is described in terms of connectivity and certain spectral conditions on the excitatory subnetworks. We also show that these combinatorial codes are \emph{convex}, that is, they are compatible with patterns of overlaps of convex receptive fields that are common in many sensory systems of the brain \cite{cruz2019open}.

This paper is organized as follows. In \cref{section:Dale_networks} we introduce the necessary background for threshold-linear networks and Dale's law and define the combinatorial and stable combinatorial codes. In \cref{section:main_results} we state our main results. The proofs are relegated to \cref{section:proofs}.  \cref{section:appendix_A} contains some necessary results concerning the stability of fixed points of threshold-linear networks.

\section{Preliminaries: The combinatorial codes of Dale networks}
\label{section:Dale_networks}
 We consider a  standard  firing rate model of a recurrent neural network of $n$ neurons, 
where the dynamics of the  firing rates $x_i(t)\geq 0$   is  described by the differential equations 
\begin{equation}
\label{eq:network_dynamics}
    \dot{x}_i=-x_i+\left[\sum_{j=1}^{n}W_{ij}x_j+b_i\right]_+, \quad i=1,\dots, n,
\end{equation}
and  $\left [ y\right]_+=\max(0,y)$ is the ReLU transfer function.

We assume that this excitatory-inhibitory network respects \emph{\new{Dale's} law} \cite{eccles1976electrical}, \new{where} the neurons are either excitatory (denoted as $\exc$) or inhibitory (denoted as $\inhib$), and the synaptic weights satisfy the following sign constrains:
\begin{align*} \text{excitatory synapses: } i\in \exc  &\implies W_{\new{ji}}\geq 0,  \quad \forall j=1\dots n  \\
 \text{inhibitory  synapses: } i\in \inhib  &\implies W_{\new{ji}}\leq 0,  \quad \forall j=1\dots n .
\end{align*} 
Furthermore, we assume that the diagonal entries of $W$ are zeros,  $W_{ii}=0$ and denote the collection of all $n\times n$ Dale matrices by $\mathbb{D}_n$.

\begin{wrapfigure}{r}{.35\linewidth} 
    \includegraphics[scale=0.15]{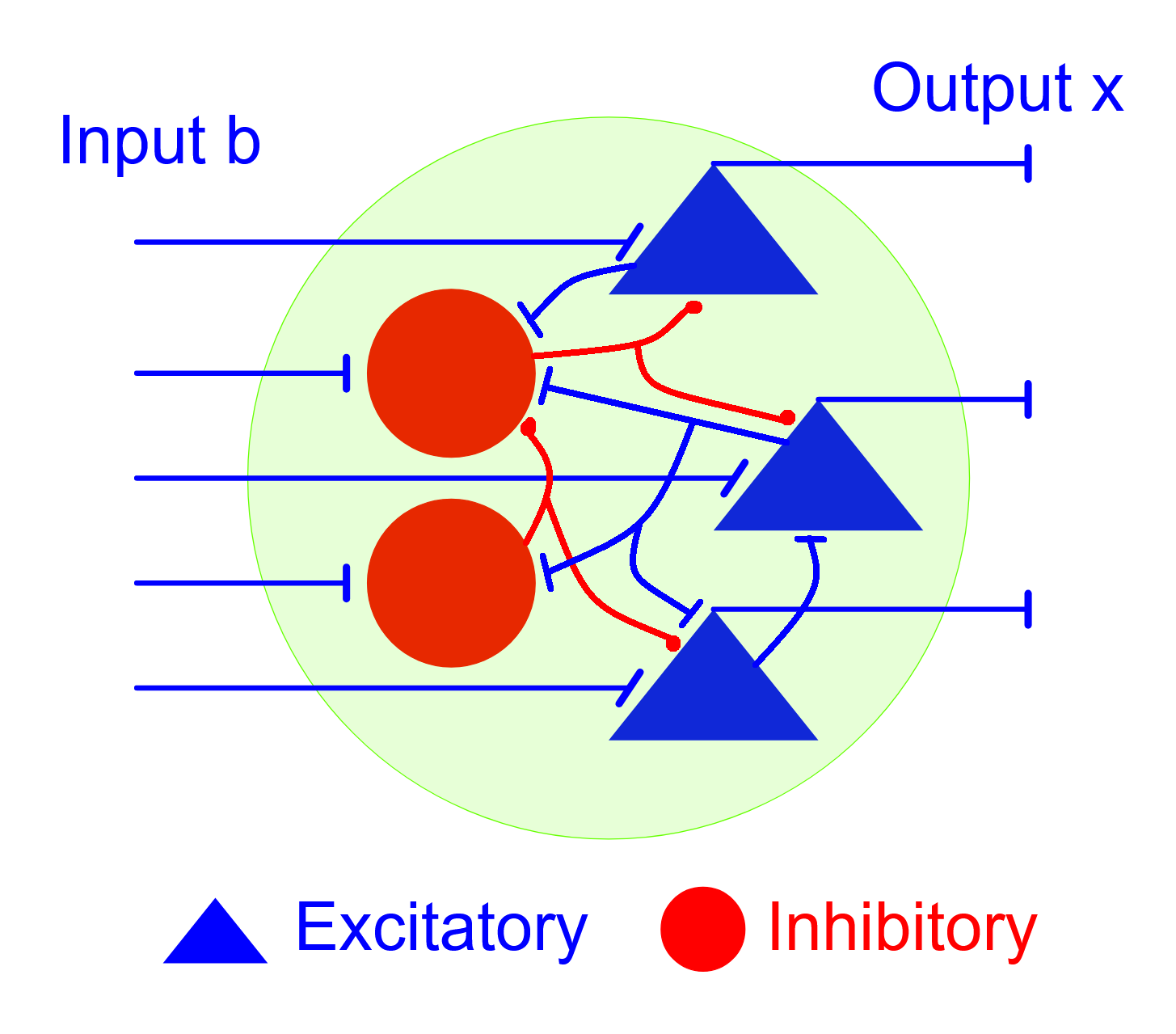}
    \caption{A Dale recurrent network.}
    \label{fig:Dale_network}
\end{wrapfigure}
Following a common architecture of the neocortex, we also assume  that the excitatory neurons ``broadcast'' the output, while the activity of the inhibitory neurons is not observable {\it directly} outside of the network.  We thus consider the setup where the inputs to the network are excitatory, while  only the excitatory output    can be ``read'' from the network (\cref{fig:Dale_network}).  We shall call such a network \emph{a Dale network}. 

We are  interested in the combinatorics of the excitatory output of Dale networks. A combinatorial code    is the set of possible patterns  of neural activation at the fixed points (or steady states). 
For a    firing rate vector $x\in \mathbb R^n_{\geq 0}$,  we     consider the \emph{excitatory support}, i.e. the set of active excitatory neurons: 
$$\excsup {x}=\{i \in \mathcal{E} \, \vert \, x_i>0\}\subset \mathcal{E} .$$  
Recall that  $x^*\in \mathbb{R}^n$ is a fixed point of a network \eqref{eq:network_dynamics} if $x(t)=x^*$ is a constant solution. 
For a given  Dale network \eqref{eq:network_dynamics}, we denote  the set of excitatory supports of all the possible fixed points  as 
$$\new{FP}_+(W,b)=\left\{\supp_+ \, x^* \left \vert \,\,  x^*\in \mathbb R^n_{\geq 0}  \text{ is a fixed point of  \eqref{eq:network_dynamics}}\right. \right\}.$$
Here the  plus sign  highlights  the difference from a somewhat different  definition in \cite{morrison2024diversity}, which considers the combinatorics of all fixed points, which were previously investigated.

 The \emph{combinatorial code} of a Dale synaptic matrix $W$ is the collection of all possible  excitatory support \new{of} all  fixed points in response to all possible inputs: 
 \begin{equation} \label{eq:combinatorial:code} \exccode{W}\od \bigcup_{b\in \mathbb{R}^n_{\ge 0}} FP_+(W,b).\end{equation}
 
The \emph{stable combinatorial code} is the set of the excitatory supports of asymptotically stable fixed points:
\[\stbcode{W}\od\bigcup_{b\in \mathbb{R}^n_{\ge 0}}\{\excsup{x^*}\,|\,x^*\in \mathbb{R}^n_{\ge 0} \textnormal{ is an asymptotically stable fixed point \new{of } \eqref{eq:network_dynamics}}\}.\] 

It turns out that the  combinatorial codes  of  Dale networks can be completely described  in terms of connectivity and spectral radius of the synaptic matrix.  We also show that the combinatorial code is always intersection complete, which implies it is also always a convex combinatorial code \cite{cruz2019open}.

\section{The main results}
\label{section:main_results}
Here we state  the main results, while all the proofs of the theorems are provided in Section \ref{section:proofs}. 
To simplify the mathematical analyses of the network, we make the following mild assumption. 
\begin{assumption}
\label{assumption:switching_assumption} 
The synaptic matrix $W$ of the network \eqref{eq:network_dynamics} 
satisfies the condition that for every non-empty subset of neurons $\sigma\subset[n]$ the principal submatrix $(I-W)_{\sigma}$ is \new{nonsingular}.
\end{assumption}
Note that for a asquare $n\times n $ matrix $A$ and a subset $\sigma\subset[n]$ we denote by $A_\sigma$ the appropriate principal submatrix.
\label{lab:ground:assumption}
\medskip  \noindent In all our results the Ground Assumption is always implicitly assumed.  Note that the set of matrices that do \emph{not} have this property has measure zero, thus this assumption is generically satisfied in any network without fine-tuning. 

\subsection{The role of excitatory-inhibitory  connectivity in shaping  the combinatorial code}
We first observe that to understand the combinatorial code, one can  streamline the excitatory-inhibitory connectivity to its ``essential features'' as follows. 

\begin{theorem}
\label{theorem:replacing_all_inhibitory_neurons_by_one}
Let $W\in \mathbb{D}_{n}$ be a Dale matrix with a set of excitatory neurons $\exc$ and inhibitory neurons $\inhib$. Let $m=|\exc|+1$ and let $W'\in \mathbb{D}_{m}$ be any Dale matrix such that $W'_{\exc}=W_{\exc}$ and for all $i\in \exc$, 
\begin{equation*}  W'_{im}=\begin{cases}
-1, & \exists  j\in \inhib\,   \text{with } W_{ij}\neq 0,\\
0, & \forall j\in \inhib   \,  \text{with }    W_{ij}=0.
\end{cases}
\end{equation*} 
Then $\exccode{W}=\exccode{W'}$.
\end{theorem}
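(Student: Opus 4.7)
The plan is to establish both inclusions of $\exccode{W}=\exccode{W'}$ by the same idea: transfer a fixed point from one network to the other by driving the inhibitory side to a uniformly large value and absorbing the discrepancy into the external input $b$. As a preliminary step I would record the following restatement of the fixed-point equations: $\sigma\in\exccode{V}$ for a Dale matrix $V$ iff there exist $\tau$ with $\tau\cap\exc=\sigma$ and $x\ge 0$ with $\supp x=\tau$ satisfying $x_i-(Vx)_i\ge 0$ for $i\in\tau$ and $(Vx)_i\le 0$ for $i\notin\tau$; the input $b_i=x_i-(Vx)_i$ on $\tau$ and $b_i=0$ off $\tau$ then realizes $x$ as a fixed point of the $V$-network.

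For the inclusion $\exccode{W}\subseteq\exccode{W'}$, given a fixed point $x^*$ of \eqref{eq:network_dynamics} with $\excsup{x^*}=\sigma$, I would set $\tilde x_i=x^*_i$ for $i\in\exc$ and $\tilde x_m=M$ with $M$ large, and take $b'_i=\tilde x_i-\sum_{k\in\exc}W_{ik}\tilde x_k-W'_{im}M$ on $\sigma$, $b'_i=0$ on $\exc\setminus\sigma$, and $b'_m=M-\sum_k W'_{mk}\tilde x_k$. When $W'_{im}=-1$ the term $+M$ restores non-negativity of $b'_i$ on $\sigma$ and forces the inactive inequality $\sum_{k\in\exc}W_{ik}\tilde x_k-M\le 0$ for $M$ large. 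When $W'_{im}=0$, by the defining condition $W_{ij}=0$ for all $j\in\inhib$, so the inhibitory contribution in the original $W$-fixed-point equation is already zero; then $b'_i=b_i\ge 0$ and the inactive inequality $\sum_{k\in\exc}W_{ik}\tilde x_k\le 0$ follow directly from the $W$-fixed-point conditions, without any compensation from $M$.

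For the reverse inclusion $\exccode{W'}\subseteq\exccode{W}$, given a fixed point $\tilde x$ with $\excsup{\tilde x}=\sigma$, I would symmetrically set $x^*_i=\tilde x_i$ on $\exc$ and $x^*_j=C$ for every $j\in\inhib$ with $C$ large. For $i\in\sigma$ the equation $b_i=\tilde x_i-\sum_{k\in\exc}W_{ik}\tilde x_k+C\sum_{j\in\inhib}|W_{ij}|$ is non-negative: either some $W_{ij}<0$ and the $C$-term dominates, or $W_{ij}=0$ for every $j\in\inhib$, which forces $W'_{im}=0$ so that the $W'$-inequality $\tilde x_i-\sum_{k\in\exc}W_{ik}\tilde x_k\ge 0$ already supplies the bound. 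The inactive excitatory inequalities for $i\in\exc\setminus\sigma$ are handled by the same dichotomy. Finally, for each $j\in\inhib$, $b_j=C(1+\sum_{l\in\inhib}|W_{jl}|)-\sum_{k\in\exc}W_{jk}\tilde x_k$ is non-negative for $C$ large, so the constructed $x^*$ is a valid fixed point of $(W,b)$ with $\excsup{x^*}=\sigma$.

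The main obstacle, and the reason the precise form of $W'_{im}$ in the theorem is essential, is the boundary case $W'_{im}=0$: here neither $M\to\infty$ in the forward direction nor $C\to\infty$ in the reverse direction produces any useful compensation, and one must invoke the structural equivalence $W'_{im}=0\iff W_{ij}=0\ \forall j\in\inhib$ to transfer the fixed-point inequalities verbatim. Any other definition of $W'_{im}$ would leave residual inhibitory terms unaccounted for and break the equivalence.
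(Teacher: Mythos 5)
Your argument is correct, and the key structural insight is exactly the right one: the dichotomy $W'_{im}=0\iff W_{ij}=0$ for all $j\in\inhib$ is what lets the boundary case transfer verbatim, while the case $W'_{im}=-1$ is handled by driving the inhibitory rates to a large value and absorbing the discrepancy into $b$. All the inequalities you sketch check out (in the forward direction $b'_m=M-\sum_{k\in\exc}W'_{mk}\tilde x_k\ge 0$ for large $M$ because the $m$-th row of $W'$ is nonnegative on $\exc$ and has zero diagonal; in the reverse direction the inhibitory inputs $b_j$ are nonnegative for large $C$ because $W_{jl}\le 0$ for $l\in\inhib$). The route differs from the paper's in organization rather than in mechanism. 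The paper does not replace the whole inhibitory population in one shot: it first proves an add-one-inhibitory-neuron invariance (\cref{theorem:code_is_invariant_if_adding_inhib_neuron_with_no_extra_connections}), appends a single helper neuron targeting exactly $\excinhib$, and then deletes the original inhibitory neurons one at a time (\cref{corollary:replacing_all_inhibitory_neurons_by_one}); the large-constant trick you use appears there with the same role, except that the constant is built from terms of the form $-\tfrac{1}{W_{ij_i}}\vert\cdots\vert$ for a chosen witness $j_i\in\inhib$ with $W_{ij_i}<0$ — the pointwise version of your ``the $C$-term dominates because $\sum_j\vert W_{ij}\vert>0$.'' What the paper's modular route buys is a collection of reusable intermediate facts, notably the upward-closedness of inhibitory supports (\cref{prop:positive_code_upward_closed}) and \cref{corollary:characterization_of_positive_code}, which are needed again in the proofs of \cref{theorem:replace_inhibitory_rows_by_0,theorem:positive_code_characterization}; what your one-shot argument buys is brevity and self-containment. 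One small point worth making explicit in a full write-up: your construction only uses the sufficiency direction of \cref{lemma:fixed_point_conditions}, so the Ground Assumption plays no role in it, whereas the paper's iterated-deletion argument has to verify that each intermediate matrix remains generic.
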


In other words, the combinatorial code $\exccode{W}$  remains unchanged if we replace all the inhibitory neurons  with a single inhibitory neuron  that mimics the connectivity of the entire inhibitory population to each excitatory neuron. Note that here the numerical values of the inhibitory-excitatory weights of the synaptic matrix $W$  do not affect the combinatorial code, even though they may determine the stability of the appropriate fixed points. Furthermore, the following result  states that the connections from excitatory to inhibitory neurons have no influence over the combinatorial code $\exccode{W}$, while they still may determine the stability of the appropriate fixed points, as well as other dynamical properties.

\begin{theorem}
\label{theorem:replace_inhibitory_rows_by_0} Let $W\in \mathbb{D}_{n}$ be a Dale matrix with a set of excitatory neurons $\exc$ and inhibitory neurons $\inhib$, and 
 let $W'\in \mathbb{D}_{n}$ be such that 
\[W'_{ij}=\begin{cases}
0, &\forall i\in \inhib,\, \forall j\in \exc\\
W_{ij}, & \emph{otherwise}
\end{cases}.\]
Then $\exccode{W}=\exccode{W'}$.
\end{theorem}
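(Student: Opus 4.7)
My approach is to deduce the statement from \cref{theorem:replacing_all_inhibitory_neurons_by_one}, exploiting the fact that the latter leaves the excitatory-to-inhibitory weights of the reduced single-inhibitory network completely free. I will collapse both $W$ and $W'$ to \emph{the same} reduced matrix, which forces their combinatorial codes to coincide.

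First, I apply \cref{theorem:replacing_all_inhibitory_neurons_by_one} to $W$ to obtain $\tilde W\in \mathbb{D}_{|\exc|+1}$ with $\exccode{W}=\exccode{\tilde W}$. The hypotheses fix $\tilde W_{\exc}=W_{\exc}$ and the inhibitory-to-excitatory weights $\tilde W_{im}$ according to the prescribed rule, but leave the excitatory-to-inhibitory weights $\tilde W_{mi}$ unconstrained (any non-negative values are Dale-admissible), so I simply set $\tilde W_{mi}=0$ for every $i\in \exc$. Applying the same theorem to $W'$ produces $\tilde W'$ with $\exccode{W'}=\exccode{\tilde W'}$, and I again choose $\tilde W'_{mi}=0$.

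Since the modification going from $W$ to $W'$ alters only the $\inhib\times \exc$ block, the entries $W_{ij}$ with $i\in \exc$ and $j\in \inhib$ coincide in $W$ and $W'$. Consequently the rule determining $\tilde W'_{im}$ returns the same value as for $\tilde W_{im}$ at every $i\in \exc$, and together with the matching excitatory blocks and the zero excitatory-to-inhibitory weights this yields $\tilde W=\tilde W'$. The chain
\[\exccode{W}=\exccode{\tilde W}=\exccode{\tilde W'}=\exccode{W'}\]
then completes the proof.

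The only substantive step is the invocation of \cref{theorem:replacing_all_inhibitory_neurons_by_one}; once that is granted, the rest is a routine bookkeeping comparison, so there is essentially no obstacle. A direct proof is also feasible but less clean: the inclusion $\exccode{W}\subseteq \exccode{W'}$ is immediate — given a fixed point $x^*$ of $(W,b)$ with support $\sigma$, absorb the excitatory drive into the inhibitory bias by taking $b'_i=b_i+\sum_{j\in \exc}W_{ij}x^*_j\ge 0$ for $i\in \inhib$ and $b'_i=b_i$ otherwise, making $x^*$ a fixed point of $(W',b')$ with the same excitatory support. The reverse inclusion is the real difficulty, because the symmetric substitution $b_i=b'_i-\sum_{j\in \exc}W_{ij}y^*_j$ can be negative; a direct argument would require producing an enlarged inhibitory configuration $\tilde v\ge v$ solving the $W$-inhibitory fixed-point equation with some $b_\inhib\ge 0$ and compensating in $b_\exc$, an existence argument that the route through \cref{theorem:replacing_all_inhibitory_neurons_by_one} neatly sidesteps.
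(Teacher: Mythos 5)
Your proof is correct, but it takes a genuinely different route from the paper's. The paper proves \cref{theorem:replace_inhibitory_rows_by_0} directly from \cref{lemma:fixed_point_conditions}: the inclusion $\exccode{W}\subseteq \exccode{W'}$ is exactly the drive-absorption argument you sketch at the end, and the harder inclusion is handled by first invoking \cref{corollary:characterization_of_positive_code} to assume the fixed point is supported on $\sigma\sqcup\inhib$, then explicitly inflating the inhibitory firing rates by a large constant $c$ so that all sign conditions for a fixed point of $(W,b)$ are restored. Your argument instead observes that the reduction of \cref{theorem:replacing_all_inhibitory_neurons_by_one} depends only on $W_{\exc}$ and on the zero pattern of the $\exc\times\inhib$ block, and leaves the $m$-th row of the reduced matrix free; since $W$ and $W'$ differ only in the $\inhib\times\exc$ block, both collapse to the same $\tilde W$, and the codes coincide. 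This is non-circular (the paper's proof of \cref{theorem:replacing_all_inhibitory_neurons_by_one} goes through \cref{theorem:code_is_invariant_if_adding_inhib_neuron_with_no_extra_connections} and does not use \cref{theorem:replace_inhibitory_rows_by_0}), and it is shorter, at the cost of hiding all the analytic work inside the cited theorem. One small point you should make explicit: the cited theorem, like all results in the paper, implicitly assumes the Ground Assumption for its input matrix, and here $W'$ is a \emph{constructed} object, so you must check it inherits nonsingularity of all principal submatrices of $I-W'$ from $W$; this follows in one line from the block-triangular decomposition $\det((I-W')_{\sigma})=\det((I-W)_{\exc\cap\sigma})\det((I-W)_{\inhib\cap\sigma})$, which is exactly how the paper's own proof opens. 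With that remark added (and noting that your choice of zero $m$-th row likewise keeps the reduced matrices within the Ground Assumption), your derivation is complete.
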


\bigskip 
\subsection{Characterization of the  combinatorial code of a Dale network}
\cref{theorem:replacing_all_inhibitory_neurons_by_one,theorem:replace_inhibitory_rows_by_0} imply that only $excitatory- excitatory$ synapses and a particular feature of the inhibitory-excitatory connectivity play a role in shaping the combinatorial code. To describe these features we make the following definitions. 
\begin{definition}
\label{def:matrix_graph}
Let $W $ be a Dale matrix, and $\exc$ and $\inhib$  denote the set of its excitatory and inhibitory neurons respectively. 
\begin{itemize}[left=0pt]
\item  An excitatory connectivity graph of $W$ is  a directed graph   $G_\exc $ whose vertices are the exitatory neurons $\exc$,  and whose  arcs   are defined as  
$$ i\to j \iff  W_{ji}\new{>} 0,   \quad \text{where }  i\neq j\in \exc.$$ 
\item The \new{uninhibited} set $\new{\excuninhib}\subset \exc$ is the subset of excitatory neurons that \emph{do not} receive any \new{inhibition}, i.e. 
\[\new{\excuninhib}\od \left\{j\in {\exc}\,\,|\,\, \forall i\in {\inhib},\, W_{ji}=0\right\}.\] 
\item \new{The inhibited set $\new{\excinhib}\subset \exc$ is the subset of excitatory neurons that receive inhibition, i.e. $\excinhib=\exc\setminus \excuninhib$.}
\item For a directed graph $G$  on a set of vertices $\exc$  and a subset $\new{\excuninhib}\subset \exc$,  we define a code of a pair $\left(G, \new{\excuninhib}\right)$ as 
\begin{equation*}
\code{G}{\new{\excuninhib}}  \od \left \{ \sigma\subset\exc \, \vert\,  \new{\outnbhd{G}{\sigma}}\cap \new{\excuninhib} \subset \sigma \right \}\new{=\{\sigma\subset \exc\,|\, \outnbhd{G_{\exc}}{\sigma}\setminus \sigma\subset\mathcal{E_I}\}},
\end{equation*} 
where $$ \new{\outnbhd{G}{\sigma}} \od  \bigcup_{i\in \sigma }\left\{ j \in \exc \, \vert \,  i\to j  \right \}$$ 
\new{denotes the {\it out-neighborhood} of a set of vertices $\sigma$ in the graph $G$. We  consider an excitatory connectivity graph of some Dale network $G=G_{\exc}$ and refer to  the set of excitatory neurons $\outnbhd{G_{\exc}}{\sigma}$  as {\it the synaptic targets}  of a subset   $\sigma$.}
\end{itemize} 
\end{definition}

\noindent \new{These connectivity features are illustrated on an example in  \cref{fig:connectivity_graph}.}
\begin{figure}[H]
\center 
    \includegraphics[scale=0.6]{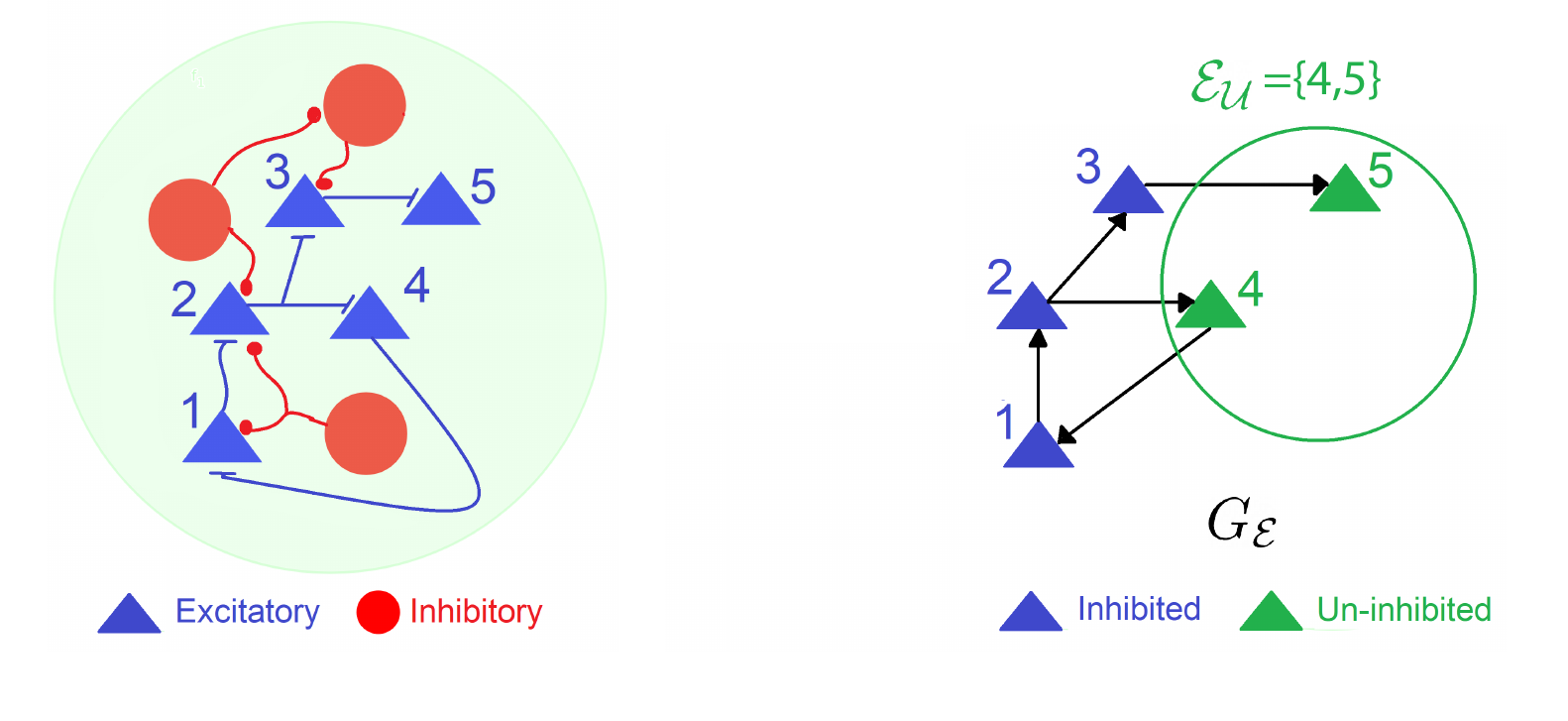}
    \caption{A Dale network (left) and its  connectivity features  (right).}
    \label{fig:connectivity_graph}
    \end{figure}
\noindent \new{Intuitively,  a network has  a fixed point, supported at excitatory neurons $\sigma\subset \exc$, if  all the other excitatory neurons are  silent. This will not occur if they receive excitatory input from neurons in $\sigma$ and don’t have any inhibitory neurons to silence  them.  This intuition is formalized in the definition of   
$\code{G_{\exc}}{\new{\excuninhib}}$.}

\bigskip 
The following theorem describes the combinatorial code of a Dale network. 

\begin{theorem}
\label{theorem:positive_code_characterization}
Let $W $ be a Dale matrix, and $\sigma\subset \exc$ be a non-empty subset of excitatory neurons.   
Then  $  \sigma\in 
\exccode{W}$ if and only if the following two conditions are both satisfied:
\begin{enumerate}
   \item[(i)]  (the spectral condition) $\rho(W_{\new{\excuninhib\cap\sigma}})<1$, 
\item[(ii)] (the graph condition) $\sigma\in \code{G_{\exc}}{\new{\excuninhib}}$,
\end{enumerate}
where $W_{\new{\excuninhib\cap\sigma}}$ denotes the synaptic weights of the excitatory sub-network on the subset $\new{\excuninhib\cap\sigma}$, and  $\rho(W_{\new{\excuninhib\cap\sigma}})$ denotes the spectral radius of the matrix $W_{\new{\excuninhib\cap\sigma}}$.
\end{theorem}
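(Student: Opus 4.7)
The plan is to prove each direction separately, after first using \cref{theorem:replacing_all_inhibitory_neurons_by_one,theorem:replace_inhibitory_rows_by_0} to reduce to a simplified network: replace all inhibitory neurons by a single inhibitory neuron $n_0$ with $W'_{i,n_0}=-1$ for $i\in\excinhib$ and $0$ otherwise, and zero out all excitatory-to-inhibitory weights. In this reduced network, $x^*_{n_0}=[b_{n_0}]_+$ is a free parameter, and the fixed-point equations decouple nicely. Both conditions depend only on the connectivity features preserved by these reductions, so this is harmless.

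For \textbf{necessity}, assume $\sigma\in\exccode{W}$ via a fixed point $x^*$ with excitatory support $\sigma$. The graph condition (ii) is immediate: if some $i\in\excuninhib\setminus\sigma$ were a synaptic target of $\sigma$, then $i$ would receive strictly positive excitation $W_{ij}x^*_j>0$ from some $j\in\sigma$ and zero inhibition, forcing $x^*_i>0$, a contradiction. For the spectral condition (i), write $\tau=\excuninhib\cap\sigma$. Restricting the fixed-point equation to $\tau$ (where the ReLU is active and $i$ receives no inhibition) yields $(I-W_\tau)x^*_\tau=c_\tau$ with $c_\tau\ge 0$ (the right-hand side collects $b_i\ge 0$ plus non-negative excitation from $\sigma_I:=\excinhib\cap\sigma$) and $x^*_\tau>0$. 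Apply Perron-Frobenius to the non-negative matrix $W_\tau$: let $u\ge 0$, $u\ne 0$, be a left eigenvector with $u^\top W_\tau=\rho(W_\tau)\,u^\top$. Then
\[
0\le u^\top c_\tau=u^\top(I-W_\tau)x^*_\tau=(1-\rho(W_\tau))\,u^\top x^*_\tau,
\]
and since $x^*_\tau>0$ strictly on $\tau$, $u^\top x^*_\tau>0$, forcing $\rho(W_\tau)\le 1$. The Ground Assumption rules out $\rho(W_\tau)=1$ (it would make $I-W_\tau$ singular), so $\rho(W_\tau)<1$.

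For \textbf{sufficiency}, I would construct an explicit fixed point and input vector. Pick any $x^*|_{\sigma_I}>0$ arbitrarily (e.g.\ all ones). This determines $\alpha_\tau:=\sum_{j\in\sigma_I}W_{\cdot j}x^*_j\ge 0$ as the excitation from $\sigma_I$ into $\tau$. Because $\rho(W_\tau)<1$, the Neumann series gives $(I-W_\tau)^{-1}=\sum_{k\ge 0}W_\tau^k\ge 0$. Set
\[
x^*_\tau=(I-W_\tau)^{-1}(\alpha_\tau+\mathbf{1}_\tau),\qquad b_\tau=\mathbf{1}_\tau,
\]
so that $x^*_\tau\ge\mathbf{1}_\tau>0$ and the fixed-point equation on $\tau$ is satisfied with $b_\tau\ge 0$. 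Choose inhibitory drive $b_{n_0}=I$ large; then $x^*_{n_0}=I$. For $i\in\sigma_I$ set $b_i=x^*_i-\sum_{j\in\sigma}W_{ij}x^*_j+I$, which is $\ge 0$ once $I$ is chosen larger than all the (finitely many) excitatory terms. For $i\in\exc\setminus\sigma$ set $b_i=0$: if $i\in\excinhib$, the condition $\sum_{j\in\sigma}W_{ij}x^*_j-I\le 0$ holds for $I$ large; if $i\in\excuninhib$, the graph condition (ii) forces $W_{ij}=0$ for all $j\in\sigma$, hence the ReLU argument is $0$ and $x^*_i=0$ is consistent.

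The main obstacle is the Perron-Frobenius step in the necessity direction: one must handle the case where $W_\tau$ is reducible (no strictly positive Perron eigenvector in general), which is why I use the left eigenvector together with $x^*_\tau>0$ strictly on all of $\tau$ to guarantee $u^\top x^*_\tau>0$. The rest of the argument is bookkeeping, with the Ground Assumption ensuring invertibility throughout.
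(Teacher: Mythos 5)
Your proposal is correct, and its overall architecture coincides with the paper's: the same reduction to a single inhibitory neuron with a zero outgoing row via \cref{theorem:replacing_all_inhibitory_neurons_by_one,theorem:replace_inhibitory_rows_by_0}, the same decomposition of $\exc$ into $\excuninhib\cap\sigma$, $\excinhib\cap\sigma$, and $\exc\setminus\sigma$, and the same ``make the inhibitory drive large'' trick to back-solve for a nonnegative input $b$. Where you genuinely diverge is in the matrix-theoretic core. The paper packages the equivalence between ``$\exists x>0$ with $(I-W)_{(\excuninhib\cap\sigma)\sigma}x\ge 0$'' and ``$\rho(W_{\excuninhib\cap\sigma})<1$'' into \cref{lemma:rectangular_semipositivity}, which in turn leans on the semipositivity characterization of nonsingular $M$-matrices (\cref{theorem:M_matrix_equivalence}, \cref{lemma:invertible_semipositive_characterization}) cited from Berman--Plemmons. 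You instead prove both directions from scratch: necessity via a nonnegative left Perron eigenvector $u$ of $W_\tau$ paired against the strictly positive $x^*_\tau$ (correctly handling reducibility, since $u\ge 0$ with $u\neq 0$ against $x^*_\tau>0$ still gives $u^\top x^*_\tau>0$), and sufficiency via the Neumann series $(I-W_\tau)^{-1}=\sum_k W_\tau^k\ge I$. This buys a self-contained, elementary argument at the cost of re-deriving standard $M$-matrix facts; the paper's route is shorter given the cited machinery and isolates the reusable \cref{lemma:rectangular_semipositivity}. Two small points to polish: (1) your exclusion of $\rho(W_\tau)=1$ via the Ground Assumption is valid precisely because the spectral radius of a nonnegative matrix is itself an eigenvalue (Perron--Frobenius), so $\rho=1$ would make $(I-W)_\tau$ singular --- state this explicitly; (2) reusing the symbol $I$ both for the identity matrix and for the large scalar inhibitory drive invites confusion and should be renamed.
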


 \begin{figure}[H]
 \center 
    \includegraphics[scale=0.6]{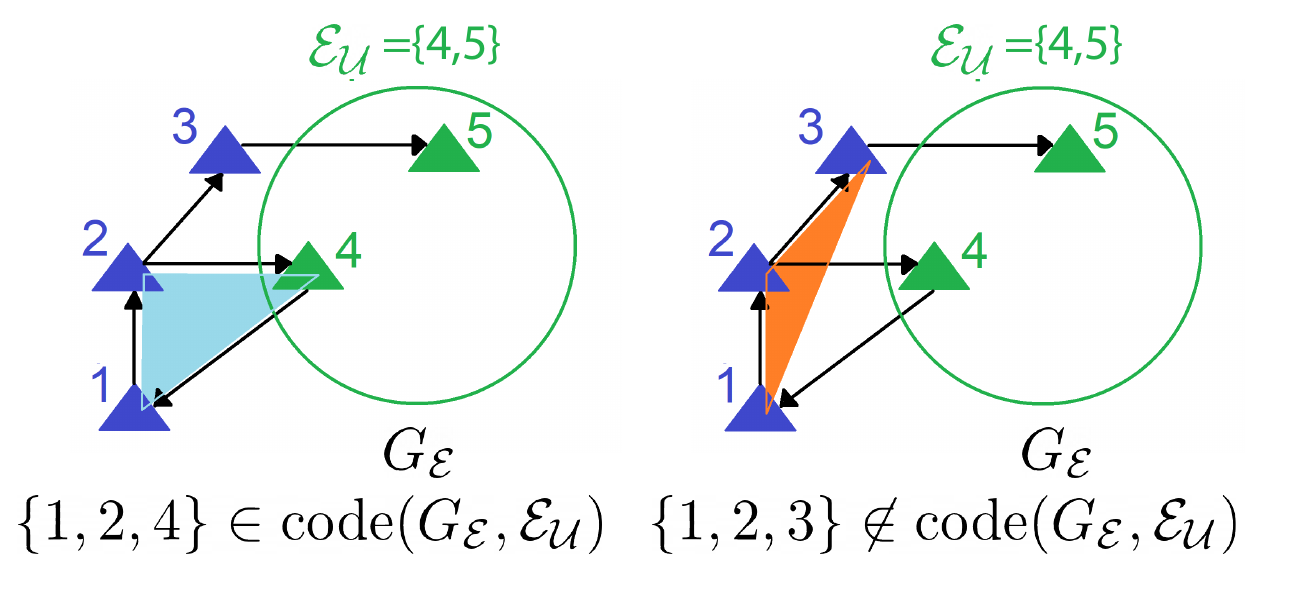}
    \caption{An excitatory connectivity graph of a Dale network. A codeword in $\code{G_{\exc},\excuninhib}$ (left) and a codeword not in $\code{G_{\exc},\excuninhib}$ (right).}
    \label{fig:codewords}
    \end{figure} 
\new{To illustrate an application of \cref{theorem:positive_code_characterization}, consider the excitatory connectivity graph of a Dale network from \cref{fig:connectivity_graph}. 
From \cref{fig:codewords}, we see that  $\outnbhd{G_{\exc}}{\excuninhib}(\{1,2,4\})=\{1,2,3,4\}$ and  $\outnbhd{G_{\exc}}{\excuninhib}(\{1,2,3\})=\{2,3,4,5\}$. Thus
\[\outnbhd{G_{\exc}}{\excuninhib}(\{1,2,4\})\cap\excuninhib=\{4\}\subset \{1,2,4\}\]
and 
\[\outnbhd{G_{\exc}}{\excuninhib}(\{1,2,3\})\cap\excuninhib=\{4,5\}\not\subset \{1,2,3\}.\]
Thus, since $\{1,2,4\}\in \code{G_{\exc},\excuninhib}$, we have a {\it candidate} for an element in $\exccode{W}$. Assuming one has access to the synaptic weights, the last remaining  step would be  checking the spectral radius condition to see if $\{1,2,4\}$ is indeed in the code.}

\bigskip 
We say that the network \eqref{eq:network_dynamics}
is \emph{weakly coupled} \new{ \cite{van1996chaos,hoppensteadt2012weakly,mastrogiuseppe2018linking}} if the Frobenius matrix norm 
$$||W||_F=\sqrt{\text{trace}(W^TW)}$$ 
of its synaptic matrix $W$ is smaller than $1$.
It is natural to consider the weak coupling regime separately, as in this regime the spectral condition in \cref{theorem:positive_code_characterization}  is always satisfied.

 \begin{theorem}
\label{theorem:excitatory_code_stability}
Let $W$ be a nonsingular Dale matrix that is weakly coupled, i.e.  $||W||_{F}<1$. Then every fixed point is \new{asymptotically stable}, and the combinatorial code is completely described by the graph condition: 
\[\mathcal{C}(W)=\mathcal{SC}(W)=\code{G_{\exc}}{\new{\excuninhib}}.\] Furthermore, for all $b\in \mathbb{R}^n_{\ge 0}$ there is a unique globally exponentially stable fixed point of \eqref{eq:network_dynamics}.
\end{theorem}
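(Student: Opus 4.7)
The proof splits into three claims: (A) the identity $\mathcal{C}(W)=\code{G_\exc}{\excuninhib}$; (B) every fixed point is asymptotically stable, so $\mathcal{SC}(W)=\mathcal{C}(W)$; and (C) existence, uniqueness, and global exponential stability of the fixed point for every $b\in\mathbb{R}^n_{\ge 0}$. The single analytic input driving all three is the chain of inequalities
\[\rho(W_\tau)\le\|W_\tau\|_2\le\|W_\tau\|_F\le\|W\|_F<1,\]
valid for every principal submatrix indexed by $\tau\subset[n]$. In particular $\|W\|_2<1$, and $\rho(W_\tau)<1$ for every such $\tau$.

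For (A) I would simply invoke \cref{theorem:positive_code_characterization}: $\sigma\in\mathcal{C}(W)$ if and only if both the spectral condition $\rho(W_{\excuninhib\cap\sigma})<1$ and the graph condition $\sigma\in\code{G_\exc}{\excuninhib}$ hold. The opening inequality applied to $\tau=\excuninhib\cap\sigma$ makes the spectral condition automatic, and the claim follows.

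For (B), given a fixed point $x^*$ with $\supp x^*=\sigma$, I would use the stability criterion assembled in \cref{section:appendix_A}: under the Ground Assumption the Jacobian of the ReLU vector field at $x^*$ is (after reordering) the block-triangular matrix
\[J=\begin{pmatrix}-I_\sigma+W_\sigma & W_{\sigma,\sigma^c}\\ 0 & -I_{\sigma^c}\end{pmatrix},\]
whose spectrum is $\{-1\}\cup\operatorname{spec}(-I_\sigma+W_\sigma)$. Every eigenvalue $\lambda$ of $W_\sigma$ satisfies $|\lambda|\le\|W\|_F<1$ by the opening inequality, so $-I_\sigma+W_\sigma$ has spectrum in the open left half-plane. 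Hence $x^*$ is asymptotically stable and $\mathcal{SC}(W)=\mathcal{C}(W)$.

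For (C), define $F:\mathbb{R}^n_{\ge 0}\to\mathbb{R}^n_{\ge 0}$ by $F(x)=[Wx+b]_+$. Componentwise $1$-Lipschitzness of the ReLU together with $\|W\|_2<1$ gives $\|F(x)-F(y)\|_2\le\|W\|_F\|x-y\|_2$, so $F$ is a strict self-contraction and Banach's theorem produces a unique fixed point $x^*$. For global exponential stability, differentiate $V(t)=\|x(t)-x^*\|_2^2$ along any trajectory using $\dot x-\dot x^*=-(x-x^*)+F(x)-F(x^*)$ and Cauchy--Schwarz to obtain $\dot V\le -2(1-\|W\|_F)V$, and apply Gr\"onwall. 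The subtlest step I expect is (B): reducing the Jacobian to the clean block-triangular form requires that the net input to every silent neuron at $x^*$ is strictly negative, which is the genericity content packaged into the Ground Assumption and the appendix---this is where the real bookkeeping lives.
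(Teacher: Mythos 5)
Your part (A) coincides with the paper's argument: $\rho(W_\eta)\le\|W_\eta\|_F\le\|W\|_F<1$ for every principal submatrix, so the spectral condition in \cref{theorem:positive_code_characterization} is automatic and $\exccode{W}=\code{G_{\exc}}{\excuninhib}$. For (C) you genuinely depart from the paper, which does not argue directly but instead invokes the totally-$\mathcal{L}$-stable machinery of \cref{section:appendix_A}: $\|W\|_2\le\|W\|_F<1$ gives $W\in\mathcal{L}$ by \cref{lemma:inclusion_between_matrix_classes}, and then \cref{proposition:stability_of_switching_system} (imported from Nozari et al.) supplies the unique globally exponentially stable fixed point. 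Your Banach-contraction argument for $F(x)=[Wx+b]_+$ together with the Lyapunov function $V=\|x-x^*\|_2^2$ and Gr\"onwall is correct, more elementary, and self-contained; it amounts to a direct proof of the special case of \cref{proposition:stability_of_switching_system} that the theorem actually needs, at the cost of not extending to the weaker hypotheses ($\rho(|W|)<1$) that the cited result also covers.

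The one genuine defect is your part (B). The paper never linearizes: it deduces $\stbcode{W}=\exccode{W}$ as a corollary of (C), since each fixed point is, for its own input $b$, the unique globally exponentially stable fixed point and hence asymptotically stable. Your Jacobian argument requires the net input $\sum_j W_{ij}x^*_j+b_i$ to be \emph{strictly} negative at every silent neuron $i$, and you assert that this is the genericity content of the Ground Assumption. It is not: the Ground Assumption only asserts nonsingularity of $(I-W)_\sigma$ for all $\sigma$, whereas condition 3 of \cref{lemma:fixed_point_conditions} allows the boundary case $b_{\overline{\sigma}}=-W_{\overline{\sigma}\sigma}x^*_{\sigma}$, at which the vector field is not differentiable and your block-triangular Jacobian does not exist. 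Since your (C) already yields (B) by the paper's own reasoning, the remedy is to delete the Jacobian argument rather than repair it.
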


 \bigskip 
The above theorem  implies that the combinatorial code of weakly coupled networks is completely determined by the connectivity features  \emph{alone}, and does not depend on the strengths of the synapses as long as the network is in the weak coupling regime.  
The following cautionary example illustrates two Dale matrices that are \emph{not} weakly coupled. These matrices have the same connectivity, but they exhibit different eigenvalue spectra of excitatory subnetworks, resulting in different combinatorial codes.
Consider Dale matrices  
\begin{equation*}\label{example:positive_code_counter_example}
\begin{small} W=\begin{pmatrix}
0 & 2 & 0 & -2\\
1 & 0 & 2 & 0\\
0 & 1 & 0 & 0\\
1 & 1 & 1 & 0
\end{pmatrix}, \end{small}  \quad \text{and}   \quad \begin{small}U=\begin{pmatrix}
0 & 3 & 0 & -2\\
0.5& 0 &0.5& 0\\
0 & 0.5 & 0 & 0\\
1& 1 & 1 & 0
\end{pmatrix}
\end{small}
\end{equation*}
and note that both these matrices have the same excitatory connectivity graph $G_{\exc}$ and the same excitatory neurons  $\new{\excuninhib}=\{2,3\}$ that do not receive inhibition. Let $\sigma=\{1,2,3\}$, thus $\new{\excuninhib\cap\sigma}=\{2,3\}$ and observe that $\rho(W_{\new{\excuninhib\cap\sigma}})=\sqrt{2} >1$,  thus by \cref{theorem:positive_code_characterization} $\sigma\not\in \exccode{W}$. On the other hand $\rho(U_{\new{\excuninhib\cap\sigma}})=0.5<1$. Furthermore, because  $\sigma=\exc$,   the graph condition  $\new{\outnbhd{G_{\exc}}{\sigma}}\cap \new{\excuninhib}\subset\sigma$ is satisfied and thus by \cref{theorem:positive_code_characterization}, $\sigma\in \exccode{U}$.

\subsection{Combinatorial  codes of  Dale recurrent networks are convex}

 Perhaps the most surprising  implication of  Theorem \ref{theorem:positive_code_characterization}  is that Dale networks naturally produce convex combinatorial codes.  To explain the relevant background, we first motivate and define  convex combinatorial codes. 
 \\
 
  There are two complimentary viewpoints on what determines neural activity in sensory systems.  One viewpoint is that  the  brain represents information via patterns of neural activity that arise as a result of neural dynamics.  A different perspective is that  the neural activity in sensory neural systems is induced by external stimuli, whereby each neuron responds to a given stimulus according to its own receptive field.  \new{If these two views are both correct, then the patterns of neural activity resulting from the dynamics should  be compatible with those determined by the intersection patterns of the receptive fields.} 
\\

We define a receptive field of an individual neuron as a subset $U\subset \mathbb R^d$ in  a stimulus space $\mathbb R^d$, such that the firing rate $x(t)$ of a given neuron is non-zero at the times when  the stimulus is in the region $U$.  Given a collection of receptive fields $\mathcal{U}=\{ U_i\}$ of  a population of neurons $\exc$,  consider \emph{the code} of $\mathcal{U}$, that describes all possible intersection patterns of the receptive fields $U_i$ as   
\[\text{code}({\mathcal{U}}) \od   \{\sigma\subset \exc\,|\, R_{\sigma}\neq \varnothing \},\]
where $$R_{\sigma}\od\left(\bigcap_{i\in \sigma}U_i\right)\setminus\bigcup_{j\not\in \sigma}U_j$$
 denotes the region in the stimulus space where each of the neurons in $
\sigma$  is activated and no other considered neuron is active, and $R_\varnothing\od \mathbb R^d\setminus \bigcup_{j\in \exc}U_j$. The regions $R_\sigma$ partition the stimulus space $\mathbb R^d$ (see \cref{ex:codeexample}). 

\begin{figure}[H]
\begin{center}
\definecolor{qqccqq}{rgb}{0,0.8,0}
\definecolor{qqaaqq}{rgb}{0,0.6,0}
\definecolor{qqffqq}{rgb}{0,1,0}
\definecolor{qqqqzz}{rgb}{0,0,0.6}
\definecolor{rrqqqq}{rgb}{0.8,0,0}
\definecolor{ffqqqq}{rgb}{1,0,0}
\definecolor{qqqqaa}{rgb}{0,0,0.4}
\definecolor{qqzzzz}{rgb}{0,0.6,0.6}
\definecolor{qqrrrr}{rgb}{0,0.4,0.4}
\begin{tikzpicture}[line cap=round,line join=round,x=1.0cm,y=1.0cm, scale=1]
\draw [line width=0.4pt,dash pattern=on 3pt off 3pt,color=qqccqq,fill=qqccqq,fill opacity=0.25] (-2.0,2.9) ellipse (2.3cm and 1.15cm);
\draw [line width=0.4pt,dash pattern=on 3pt off 3pt,color=qqqqzz,fill=qqqqzz,fill opacity=0.25] (1.26,3.04) ellipse (2.3cm and 1.15cm);
\draw [line width=0.4pt,dash pattern=on 3pt off 3pt,color=rrqqqq,fill=rrqqqq,fill opacity=0.25] (0.15,3.1) ellipse (2.5cm and .45cm);
\draw [line width=0.4pt,dash pattern=on 3pt off 3pt,color=qqzzzz,fill=qqzzzz,fill opacity=0.25] (1.1,2.8) ellipse (0.35cm and 2.0cm);
\begin{scriptsize}
\definecolor{ttfftt}{rgb}{0.2,0.8,0.2}
\definecolor{qqqqff}{rgb}{0,0,1}
\definecolor{ffqqtt}{rgb}{1,0,0.2}
\definecolor{uququq}{rgb}{0.25,0.25,0.25}
\definecolor{zzttqq}{rgb}{0.6,0.2,0}
\definecolor{ffqqff}{rgb}{1,0,1}
\draw[color=ffqqqq] (-1.9,3.1) node {$U_1$};
\draw[color=qqqqff] (0.4,2.3) node {$U_3$};
\draw[color=qqaaqq] (-3.32,2.34) node {$U_2$};
\draw[color=qqrrrr] (1.1,1.25) node {$U_4$};
\draw (0.25,5) node{$R_{13}$};
\draw[->] (0.25,4.8) -- (0.45,3.1);
\draw[->] (0.25,4.8) arc (65:25:3.5cm);
\end{scriptsize}
\end{tikzpicture}
\caption{An example of a collection of  receptive fields $\mathcal{U}=\{U_1,U_2,U_3,U_4\}$ and its code, $\mathcal{C}=\textnormal{code}(\mathcal{U})= \{\varnothing, 2, 3, 4, 12, 13, 23, 34, 123, 134\}$.  Here we denote a codeword $\{i_1, i_2, \dots, i_k\} \in \mathcal{C}$ by the string $i_1i_2\dots i_k$;  for example, $\{1,3,4\}$ is abbreviated to $134$. 
Note that  all $U_i$ are convex (and thus connected), but the regions $R_\sigma$ are typically non-convex for non-maximal subsets $\sigma $. For example $R_{13}$ is neither convex nor connected.}
\label{ex:codeexample}
\end{center}
\end{figure}

Note that $\text{code}({\mathcal{U}}) $ is the set of all possible patterns of neuronal activation that are compatible with  the receptive fields $\mathcal{U}$. The same patterns should be allowed by the neural network dynamics. 
\new{We associate a combinatorial neural code $\mathcal{C}(W)$ to a Dale network  \eqref{eq:network_dynamics}, and  we interpret the fixed points of the excitatory neurons, {\it both stable and unstable}, as 
our  model for transiently activated stimulus representations. This is because the neural activity spends a significant amount of time not only in the neighborhood of stable fixed points, but also  around saddle fixed points, while  evolving along heteroclinic trajectories (see e.g. \cite{huerta2004reproducible,ashwin2005instability,rabinovich2011robust}).}  
\new{Following these interpretations we posit that the combinatorial 
code of the network  must  be the same as  the intersection patterns of the receptive fields:  
 \[\mathcal{C}(W)=\text{code}({\mathcal{U}}).
\]
}
\bigskip 
Recall that a set $U\subset \mathbb R^d$ is called \emph{convex} if for any two points $x,y\in U$ the line segment $[x,y]$ also lies in $U$. 
A number  of sensory areas in the brain possess convex receptive fields; a very incomplete  list of such areas includes the   hippocampus, the primary visual cortex, the primary auditory cortex, etc.   Following  \cite{CIVY13,cruz2019open,hyperplane2020,curto2017makes}, we 
define  convex combinatorial codes as follows. 

\begin{definition}
\label{def:convex_codes}
A combinatorial code $\mathcal{C}\subset 2^{[n]}$ is \emph{open convex} 
if $\mathcal{C}=\text{code}(\mathcal{U})$  for a collection $\mathcal{U}=\{U_i\}$ of open convex 
subsets in a Euclidean space $ \mathbb{R}^d$, for some $d\ge 1$.   
\end{definition}

\begin{wrapfigure}{r}{.28\linewidth} 
\centering
\includegraphics[ trim = 0.0in 0.0in 0.00in 0.0in, width=1.0in]{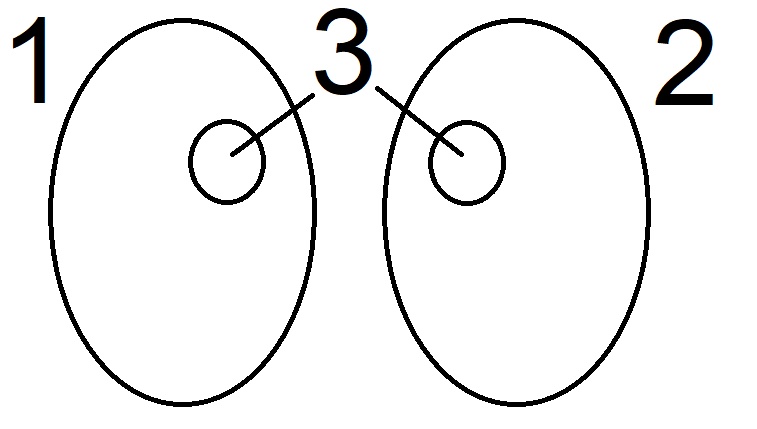}
 \vspace{.12in}
\caption{ A  non-convex  code $\mathcal{C}=\{1, 2, 13, 23\}$. 
}
\label{fig:nonconvex}
\end{wrapfigure}
It has been  previously established in  \cite{CIVY13, giusti2014no, cruz2019open} that not every combinatorial code is open convex.  The smallest example of a non-convex code is illustrated in \cref{fig:nonconvex}. The topological properties that prevent a combinatorial code from being convex  were studied in \cite{cruz2019open,curto2017makes,LienkaemperShiuWoodstock}. In fact, a randomly chosen\footnote{This, of course, requires a proper definition of the probability distribution on the set of all codes that we omit here.} combinatorial code on a large number of neurons is non-convex with a high probability. 

Following \cite{cruz2019open}, recall that a combinatorial code $\mathcal{C}$ is called \emph{intersection-complete}, if \new{the} intersection of any two codewords in $\mathcal{C}$ also belongs to  $\mathcal{C}$: 
$$ \sigma,\nu\in \mathcal{C} \implies \sigma\cap \nu \in \mathcal{C}. $$
If a code is intersection-complete and  also has the property that a union of two codewords is a codeword, then such code is   a \emph{sublattice} of $2^\exc$ \cite{roman2010lattices}. 
It has been  previously established in \cite{cruz2019open} that  intersection-complete codes are open convex.

\begin{theorem}{\cite{cruz2019open}}
\label{theorem:convex_codes}
\new{Every intersection-complete code $\mathcal{C}\subset 2^{[n]}$ is open convex.}
\end{theorem}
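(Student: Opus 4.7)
My plan is to construct an explicit open convex realization of $\mathcal{C}$. For each codeword $\sigma \in \mathcal{C}$, I would place a point $p_\sigma \in \mathbb{R}^d$ with $d = |\mathcal{C}|-1$, positioned so that the $p_\sigma$ are affinely independent---as the vertices of a simplex. For each neuron $i \in [n]$, let $C_i := \text{conv}\{p_\sigma : i \in \sigma\}$ (a face of this simplex) and define $U_i$ as an open convex neighborhood of $C_i$, for instance the open barycentric half-space
\[U_i := \Bigl\{p \in \mathbb{R}^d : \sum_{\sigma \in \mathcal{C},\, \sigma \ni i}\alpha_\sigma(p) > 1-\epsilon\Bigr\}\]
defined via the affine barycentric coordinate functions $\alpha_\sigma$ satisfying $\alpha_{\sigma'}(p_\sigma) = \delta_{\sigma,\sigma'}$, possibly intersected with a bounded open convex region around the simplex to control the behavior of $\alpha_\sigma$ outside it.

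The key combinatorial identity driving the construction is the simplicial fact
\[\bigcap_{i \in \tau}C_i = \text{conv}\{p_\sigma : \tau \subseteq \sigma\}\]
valid for every $\tau \subseteq [n]$. Realization of codewords is then immediate: for $\sigma \in \mathcal{C}$, affine independence gives $p_\sigma \in U_i$ if and only if $i \in \sigma$ (once $\epsilon < 1$), so $p_\sigma \in R_\sigma$ and $\mathcal{C} \subseteq \text{code}(\mathcal{U})$. The substantive direction is excluding non-codewords, and this is where intersection-completeness is essential. For $\tau \notin \mathcal{C}$, set $\mathcal{C}_\tau := \{\sigma \in \mathcal{C} : \tau \subseteq \sigma\}$. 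If $\mathcal{C}_\tau = \emptyset$, the simplicial identity gives $\bigcap_{i \in \tau}C_i = \emptyset$, and a compactness argument extends this to $\bigcap_{i \in \tau}U_i = \emptyset$ for sufficiently small $\epsilon$. If $\mathcal{C}_\tau \neq \emptyset$, intersection-completeness supplies $\hat\tau := \bigcap_{\sigma \in \mathcal{C}_\tau}\sigma \in \mathcal{C}$; since $\tau \notin \mathcal{C}$, the containment $\tau \subseteq \hat\tau$ is strict, so one can choose $j \in \hat\tau \setminus \tau$. Because $j$ then lies in every $\sigma \in \mathcal{C}$ with $\tau \subseteq \sigma$, the simplicial identity gives $\bigcap_{i \in \tau}C_i \subseteq C_j$, and transferring this closed-set inclusion to the open neighborhoods yields $\bigcap_{i \in \tau}U_i \subseteq U_j$ with $j \notin \tau$, which annihilates $R_\tau$.

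I expect the main obstacle to be the technical lifting of the closed-set inclusion $\bigcap_{i \in \tau}C_i \subseteq C_j$ to the corresponding inclusion for the open sets $U_i$. A naive uniform Euclidean $\epsilon$-thickening of faces is insufficient, because simplex faces meeting at acute angles have $\epsilon$-neighborhoods whose intersection extends far beyond the $\epsilon$-neighborhood of their intersection. The barycentric half-space formulation instead reduces the lifting to a linear estimate on the $\alpha_\sigma$; the union-bound loss of a factor $|\tau|$ is absorbed by taking $\epsilon$ of order $1/n$ (or by choosing slightly different thresholds per neuron). Because $\mathcal{C}$ and $2^{[n]}$ are both finite, a single small enough $\epsilon$ makes all required inclusions and separations hold simultaneously.
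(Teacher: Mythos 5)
The theorem is not proved in this paper at all; it is imported from \cite{cruz2019open}, where it is obtained by an inductive construction that adds one maximal codeword at a time, realizing the new sets as ``tents'' erected over a hyperplane containing the previous realization. Your one-shot simplicial construction is a different route, and its first half is sound: the simplicial identity, the realization of each $\sigma\in\mathcal{C}$ by the vertex $p_\sigma$, and the case $\mathcal{C}_\tau=\emptyset$ all go through. The genuine gap is exactly where you anticipated trouble --- lifting $\bigcap_{i\in\tau}C_i\subseteq C_j$ to $\bigcap_{i\in\tau}U_i\subseteq U_j$ --- and it is not repairable by shrinking $\epsilon$ or by choosing per-neuron thresholds. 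The union bound only yields $\sum_{\sigma\ni j}\alpha_\sigma(p)>1-|\tau|\epsilon$ for $p\in\bigcap_{i\in\tau}U_i$, whereas membership in $U_j$ requires $>1-\epsilon_j$; the deficit is multiplicative in $|\tau|$, not additive, so it survives every rescaling of the thresholds.

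Concretely, take $\mathcal{C}=\{\varnothing,\{1\},\{2\},\{3\},\{1,2,3\}\}$, which is intersection-complete. Here $C_1,C_2,C_3$ are the three edges of the $4$-simplex joining $p_{\{1\}},p_{\{2\}},p_{\{3\}}$ to the common vertex $p_{\{1,2,3\}}$, and for $\tau=\{1,2\}$ you must show $U_1\cap U_2\subseteq U_3$ (and cyclically). A point $p$ of the simplex with barycentric coordinates $\alpha_{\{1,2,3\}}=1-t_1-t_2$, $\alpha_{\{1\}}=t_1$, $\alpha_{\{2\}}=t_2$ lies in $U_1\cap U_2$ iff $t_2<\epsilon_1$ and $t_1<\epsilon_2$, but lies in $U_3$ iff $t_1+t_2<\epsilon_3$; since $1,2,3$ are the only neurons, any such $p$ outside $U_3$ puts $\{1,2\}$ into $\mathrm{code}(\mathcal{U})$. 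Avoiding this for all three pairs forces $\epsilon_3\ge\epsilon_1+\epsilon_2$, $\epsilon_1\ge\epsilon_2+\epsilon_3$ and $\epsilon_2\ge\epsilon_1+\epsilon_3$, whose sum gives $\epsilon_1+\epsilon_2+\epsilon_3\le 0$, which is impossible. (The code is nevertheless open convex: three narrow convex ``cones'' sharing a common ball and pointing in pairwise distinct directions realize it in $\mathbb{R}^2$.) So the half-space thickening of simplex faces provably manufactures spurious codewords near vertices of high multiplicity, and a different mechanism for controlling the intersections --- such as the inductive tent construction of \cite{cruz2019open} --- is genuinely needed.
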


\medskip 

We use this result to show that the combinatorial codes of a Dale network are convex. 
Consider a directed graph  $G_\exc$, whose  vertices are the excitatory neurons $\exc$, and whose edges are derived from the   the excitatory connectivity as in  \cref{def:matrix_graph}.  
First we observe the following.

\begin{proposition}
\label{proposition:graph_rules}
The code $\code{G_\exc}{\new{\excuninhib}}$ is a sublattice of the Boolean lattice $2^\exc$.
\end{proposition}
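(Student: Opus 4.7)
The plan is to verify that $\code{G_\exc}{\excuninhib}$ is closed under both union and intersection, which is the standard sublattice criterion in the Boolean lattice $2^\exc$. I would first recast the definition in its more convenient form, namely that $\sigma \in \code{G_\exc}{\excuninhib}$ exactly when $\outnbhd{G_\exc}{\sigma} \cap \excuninhib \subset \sigma$. The two properties of the out-neighborhood operator I would isolate at the outset are (a) distributivity over unions, $\outnbhd{G_\exc}{\sigma \cup \tau} = \outnbhd{G_\exc}{\sigma} \cup \outnbhd{G_\exc}{\tau}$, which is immediate from the definition of $\outnbhd{G_\exc}{\cdot}$ as a union over vertices; and (b) monotonicity, which yields $\outnbhd{G_\exc}{\sigma \cap \tau} \subset \outnbhd{G_\exc}{\sigma} \cap \outnbhd{G_\exc}{\tau}$.

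Given these, closure under union is an easy chase. Taking $\sigma,\tau \in \code{G_\exc}{\excuninhib}$ and using (a) together with distributivity of intersection over union on the ambient set, one obtains
\[
\outnbhd{G_\exc}{\sigma \cup \tau} \cap \excuninhib \;=\; \bigl(\outnbhd{G_\exc}{\sigma} \cap \excuninhib\bigr) \cup \bigl(\outnbhd{G_\exc}{\tau} \cap \excuninhib\bigr) \;\subset\; \sigma \cup \tau,
\]
so $\sigma \cup \tau \in \code{G_\exc}{\excuninhib}$. For closure under intersection, I would use (b) to estimate
\[
\outnbhd{G_\exc}{\sigma \cap \tau} \cap \excuninhib \;\subset\; \bigl(\outnbhd{G_\exc}{\sigma} \cap \excuninhib\bigr) \cap \bigl(\outnbhd{G_\exc}{\tau} \cap \excuninhib\bigr) \;\subset\; \sigma \cap \tau,
\]
giving $\sigma \cap \tau \in \code{G_\exc}{\excuninhib}$.

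There is no real obstacle here; the argument is essentially bookkeeping once one observes that the two elementary properties of $\outnbhd{G_\exc}{\cdot}$ interact cleanly with the defining condition. The only subtle asymmetry worth flagging is that the out-neighborhood operator genuinely distributes over union but only satisfies a one-sided containment for intersection, and it is precisely this asymmetry which still suffices for both closure properties because the defining relation is itself a containment, not an equality. I would conclude by noting that $\varnothing$ (and $\exc$) trivially lie in $\code{G_\exc}{\excuninhib}$, so the structure is indeed a bounded sublattice of $2^\exc$.
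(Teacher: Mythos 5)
Your proof is correct and follows essentially the same route as the paper's: closure under union via distributivity of $\outnbhd{G_\exc}{\cdot}$ over unions, closure under intersection via the one-sided containment $\outnbhd{G_\exc}{\sigma\cap\tau}\subset\outnbhd{G_\exc}{\sigma}\cap\outnbhd{G_\exc}{\tau}$, plus the trivial membership of $\varnothing$ and $\exc$. No differences worth noting.
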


This observation, combined with \cref{theorem:excitatory_code_stability}  implies  that the combinatorial code of a weakly coupled network is a sublattice. If the network is \emph{not} weakly coupled, then the spectral condition in  \cref{theorem:positive_code_characterization} may prevent the code of the network from being a sublattice, however the code $\exccode{W}$ of a Dale network  remains  intersection-complete. 

\begin{theorem}
\label{theorem:positive_code_is_convex}
For a Dale synaptic matrix $W$, its code  $\exccode{W}$ is intersection-complete and is thus convex.
\end{theorem}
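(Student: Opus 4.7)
The plan is to verify intersection-completeness by applying \cref{theorem:positive_code_characterization} to each of the two conditions separately, and then conclude convexity via \cref{theorem:convex_codes}. Let $\sigma,\nu\in\exccode{W}$ be two nonempty codewords. I need to show $\sigma\cap\nu\in\exccode{W}$, and by \cref{theorem:positive_code_characterization} this reduces to checking that both the graph condition and the spectral condition are inherited by the intersection.

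First I would handle the graph condition. By \cref{proposition:graph_rules} the code $\code{G_{\exc}}{\excuninhib}$ is a sublattice of $2^{\exc}$, hence closed under intersection. Since both $\sigma$ and $\nu$ satisfy $\outnbhd{G_\exc}{\sigma}\setminus\sigma\subset\excinhib$ (and similarly for $\nu$), this immediately gives $\sigma\cap\nu\in\code{G_\exc}{\excuninhib}$.

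Next I would handle the spectral condition. The matrix $W_{\excuninhib\cap\sigma}$ is a principal submatrix of $W_{\exc}$ and all of its entries are nonnegative, since by Dale's law every excitatory-to-excitatory weight is $\geq 0$. The key classical fact from Perron–Frobenius theory that I would invoke is: for an entrywise nonnegative matrix $A$, the spectral radius of any principal submatrix $A_\tau$ satisfies $\rho(A_\tau)\leq\rho(A)$. Applying this with $A=W_{\excuninhib\cap\sigma}$ and $\tau=\excuninhib\cap(\sigma\cap\nu)\subset\excuninhib\cap\sigma$, we get
\[
\rho\bigl(W_{\excuninhib\cap(\sigma\cap\nu)}\bigr)\leq\rho\bigl(W_{\excuninhib\cap\sigma}\bigr)<1,
\]
where the last inequality is the spectral condition guaranteed by $\sigma\in\exccode{W}$. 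Hence the spectral condition holds for $\sigma\cap\nu$. If $\sigma\cap\nu=\varnothing$, then $\varnothing\in\exccode{W}$ trivially since $x^{*}=0$ is a fixed point of \eqref{eq:network_dynamics} when $b=0$. Combining the two conditions with \cref{theorem:positive_code_characterization} yields $\sigma\cap\nu\in\exccode{W}$, so $\exccode{W}$ is intersection-complete, and convexity then follows directly from \cref{theorem:convex_codes}.

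The only potentially subtle step is the appeal to monotonicity of the spectral radius under the passage to principal submatrices; this is standard for nonnegative matrices but is crucial here because in general (for matrices with mixed signs) it can fail. The reason we are entitled to use it is precisely that, after restricting to $\excuninhib$, the submatrix involves only excitatory-to-excitatory weights, which are nonnegative by Dale's law. Everything else in the argument is a routine bookkeeping of the two characterizing conditions.
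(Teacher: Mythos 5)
Your proof is correct and takes essentially the same route as the paper: the graph condition is inherited via \cref{proposition:graph_rules}, and the spectral condition is inherited via monotonicity of the spectral radius under passage to principal submatrices of a nonnegative matrix (the paper's \cref{lemma:eigenradius_of_submatrix}), after which \cref{theorem:convex_codes} gives convexity. Your explicit treatment of the case $\sigma\cap\nu=\varnothing$ is a small but welcome addition, since \cref{theorem:positive_code_characterization} is stated only for nonempty subsets.
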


The proof of this theorem is given in \cref{sec:the:proofs}. 
We suggest that this theorem may provide a new explanation for the prevalence of convex receptive fields of excitatory neurons in recurrent circuits of sensory systems.

\subsection{Constructing a Dale network from a combinatorial code}
Can one  encode an arbitrary combinatorial code as the combinatorial code of a Dale network? \cref{theorem:positive_code_is_convex}  implies that the code needs to be intersection-complete, to be realized on a generic Dale network. In addition, \cref{proposition:graph_rules} tells us that  if one wants to build a weakly coupled network with a prescribed code, this code needs to be a sublattice. 
It turns out that any  sublattice can be encoded on a weakly coupled Dale network. 

\begin{theorem}
\label{theorem:constructing_network} Given a sublattice ${\C}\subset 2^{\exc}$ with $\varnothing,\exc\in \C$, 
 define  a map $c:2^{\exc}\to \C$ as 
\begin{equation}\label{eq:cmap}  
c(\sigma)\od \bigcap_{{\nu\in \C},\atop {\nu\supset \sigma}} \nu,
\end{equation} 
and consider a directed graph $G_c$, whose vertices are $\exc$ and whose edges are  defined via the rule 
\begin{equation}\label{eq:Gc}  \quad i\to j \iff  j\in c(\{i\}) \text{ and }  i\neq j.\end{equation} 
Then  $\C=\code{G_c}{\exc}$.
\end{theorem}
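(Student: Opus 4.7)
The plan is to recognize the map $c$ in \eqref{eq:cmap} as the closure operator attached to the intersection-closed family $\C$. Because $\exc\in\C$, the intersection in \eqref{eq:cmap} is nonempty, and the intersection-completeness of $\C$ guarantees $c(\sigma)\in\C$; in particular, $c(\{i\})$ is the smallest member of $\C$ containing $i$. Since $\excuninhib=\exc$ in the statement, the condition defining $\code{G_c}{\exc}$ in \cref{def:matrix_graph} collapses to ``$\sigma$ is closed under out-neighbors in $G_c$,'' namely $\outnbhd{G_c}{\sigma}\subset \sigma$. So the task reduces to showing that the family of out-neighbor--closed subsets of $G_c$ equals $\C$.

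For the inclusion $\C\subset\code{G_c}{\exc}$, I would fix $\sigma\in\C$ and $i\in\sigma$. Since $\sigma$ itself participates in the intersection defining $c(\{i\})$, one gets $c(\{i\})\subset\sigma$; by \eqref{eq:Gc}, the out-neighbors of $i$ in $G_c$ are precisely the elements of $c(\{i\})\setminus\{i\}$, so $\outnbhd{G_c}{\{i\}}\subset\sigma$. Taking the union over $i\in\sigma$ gives $\outnbhd{G_c}{\sigma}\subset\sigma$. This direction only uses intersection-closure of $\C$ and is essentially an unwinding of definitions.

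For the reverse inclusion I would exploit the full sublattice structure. The key identity
\[
c(\{i\})=\{i\}\cup \outnbhd{G_c}{\{i\}}
\]
is immediate from \eqref{eq:Gc} together with the fact that $i\in c(\{i\})$. Now suppose $\outnbhd{G_c}{\sigma}\subset \sigma$. Then $c(\{i\})\subset \sigma$ for every $i\in\sigma$, so
\[
\sigma=\bigcup_{i\in\sigma}\{i\}\subset \bigcup_{i\in\sigma} c(\{i\})\subset \sigma,
\]
which forces $\sigma=\bigcup_{i\in\sigma}c(\{i\})$. Each $c(\{i\})\in\C$, and since $\C$ is a sublattice (closed under finite unions), we conclude $\sigma\in\C$. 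The case $\sigma=\varnothing$ is handled by the hypothesis $\varnothing\in\C$.

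The only genuinely delicate point is the last step of the reverse direction: intersection-completeness alone would only let one recover $\sigma$ as a \emph{union} of elements of $\C$, not as a member of $\C$ itself. This is precisely why the theorem requires $\C$ to be a sublattice, matching the conclusion of \cref{proposition:graph_rules}. Everything else in the argument is a direct manipulation of the closure operator $c$ and the arcs of $G_c$.
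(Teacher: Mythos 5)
Your proof is correct and follows essentially the same route as the paper's: both arguments rest on viewing $c$ as the closure operator of $\C$, on $c(\sigma)=\bigcup_{i\in\sigma}c(\{i\})$ (which needs closure of $\C$ under unions), and on the identification of $\C$ with the $c$-closed sets (which needs closure under intersections). The paper merely packages this through an auxiliary reflexive graph with $\outnbhd{G}{\sigma}=c(\sigma)$, whereas you verify the two inclusions directly; the content is the same, and you correctly flag that the sublattice (union-closure) hypothesis is what makes the reverse inclusion work.
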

\noindent \new{Note that $\code{G_c}{\exc}$ from \cref{theorem:constructing_network} is realized by a network with no inhibition.}
This theorem, combined with \cref{theorem:excitatory_code_stability} translates into the following 
\begin{corollary} \label{cor:sublattice:encoding} Given a sublattice $\C\subset 2^{\exc}$ with $\varnothing,\exc\in \C$, one can find a weakly coupled network $W$ of  excitatory  neurons  with $\C=\mathcal{C}(W)$. 
\end{corollary}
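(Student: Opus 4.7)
The plan is to combine \cref{theorem:constructing_network} with \cref{theorem:excitatory_code_stability}. Starting from a sublattice $\C\subset 2^{\exc}$ with $\varnothing,\exc\in\C$, I would first apply \cref{theorem:constructing_network} to obtain a directed graph $G_c$ on $\exc$ satisfying $\code{G_c}{\exc}=\C$. It then suffices to realize $G_c$ as the excitatory connectivity graph of a weakly coupled Dale network consisting only of excitatory neurons: in that case $\excuninhib=\exc$, and \cref{theorem:excitatory_code_stability} immediately yields
\[
\mathcal{C}(W)\;=\;\code{G_\exc}{\excuninhib}\;=\;\code{G_c}{\exc}\;=\;\C.
\]

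For the construction, I would take the neuron set to be $\exc$ itself with no inhibitory neurons and fix a small parameter $\epsilon>0$ to be chosen. Set
\[
W_{ji}\;=\;\begin{cases}\epsilon,& i\neq j\text{ and } i\to j\text{ in }G_c,\\ 0,& \text{otherwise,}\end{cases}
\]
with $W_{ii}=0$. This $W$ is manifestly a Dale matrix with only excitatory synapses, its excitatory connectivity graph equals $G_c$ by construction, and $\|W\|_F^2=\epsilon^2\,|E(G_c)|$. Choosing any $\epsilon<1/\sqrt{|E(G_c)|+1}$ makes $\|W\|_F<1$, so the network is weakly coupled. For $\epsilon$ sufficiently small each principal submatrix $(I-W)_\sigma$ is a near-identity perturbation and hence invertible, so the Ground Assumption holds automatically.

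The only delicate point, and the main obstacle to a purely mechanical proof, is the nonsingularity hypothesis on $W$ itself in \cref{theorem:excitatory_code_stability}. If $\{i\}\in\C$ then the map $c$ of \cref{theorem:constructing_network} gives $c(\{i\})=\{i\}$, so vertex $i$ has no outgoing arcs in $G_c$ and the $i$-th column of $W$ vanishes; an analogous issue appears whenever $G_c$ has no vertex-disjoint collection of directed cycles covering $\exc$. I would address this by observing that the nonsingularity of $W$ enters \cref{theorem:excitatory_code_stability} only for the uniqueness/global exponential stability addendum, whereas the code identity $\mathcal{C}(W)=\code{G_\exc}{\excuninhib}$ uses the Ground Assumption alone, which has already been verified. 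If a literal reading of the hypothesis is required, I would instead let the arc weights $\epsilon_{ji}$ vary independently and invoke the standard genericity argument: the determinant is a nontrivial polynomial in these weights whenever the combinatorics of $G_c$ permits a derangement compatible with its arcs, and a generic choice of positive weights of size $O(\epsilon)$ makes $W$ nonsingular without altering $G_\exc$ or the weak-coupling bound. In either case \cref{theorem:excitatory_code_stability} applies and yields $\mathcal{C}(W)=\C$, proving the corollary.
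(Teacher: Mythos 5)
Your proposal is correct and follows essentially the same route as the paper, which obtains the corollary by combining \cref{theorem:constructing_network} with \cref{theorem:excitatory_code_stability} via exactly this ``learning rule'' of placing small positive weights on the arcs of $G_c$. Your extra discussion of the nonsingularity hypothesis is a fair point the paper glosses over, and your first resolution is the right one: the proof of \cref{theorem:excitatory_code_stability} only ever uses the Ground Assumption on the submatrices $(I-W)_\sigma$ together with $\|W\|_F<1$, not the invertibility of $W$ itself (which indeed can fail whenever some vertex of $G_c$ has no outgoing arcs, e.g.\ when $\{i\}\in\C$).
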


\new{This} ``\emph{learning rule}" for an excitatory  synaptic matrix $W$  amounts to  assigning small non-zero excitatory synaptic weights according to the directed arcs in the graph $G_c$ defined in equations  \eqref{eq:cmap},\eqref{eq:Gc}.

The following example  illustrates  the procedure described above to construct a graph of connectivity from  a given code. 
Consider a sublattice  ${C}=\{\varnothing,\{4\},\{2,4\},\{3,4\},\newline \{2,3,4\},\{1,2,3,4\}\}$. \new{One can directly compute that $c(\{1\})=\{1,2,3,4\}$, $c(\{2\})=\{2,4\}$, $c(\{3\})=\{3,4\}$ and $c(\{4\})=\{4\}$.}
It is \new{then} straightforward to verify that $G_c$ is the graph in  \cref{fig:constructing_network} and thus $ {C}=\code{G_c}{\{1,2,3,4\}}$.  

\begin{figure}[H]
  \begin{center}
     \begin{tikzpicture}[scale=0.5]       
      \node[label={[xshift=-0.2cm, yshift=-0.3cm]$1$},shape=circle,fill=red!60,draw=black,scale=0.375] (A) at (1,-4) {};          
    \node[label={[xshift=0.2cm, yshift=-0.3cm]$2$},shape=circle,fill=red!60,draw=black,scale=0.375] (B) at (4,-4) {};
     \node[label={[xshift=-0.2cm, yshift=-0.3cm]$3$},shape=circle,fill=red!60,draw=black,scale=0.375] (C) at (1,-1) {};
      \node[label={[xshift=0.2cm, yshift=-0.3cm]$4$},shape=circle,fill=red!60,draw=black,scale=0.375] (D) at (4,-1) {};           
           \path [->,line width=1](A) edge node[left] {} (B);
     \path [->,line width=1](C) edge node[left] {} (D);
     \path [->,line width=1](A) edge node[left] {} (C);
     \path [->,line width=1](B) edge node[left] {} (D);
         \path [->,line width=1](A) edge node[left] {} (D); 
    \end{tikzpicture}
    \end{center}
    \caption{The graph $G_c$ on four vertices with   $\code{G_c}{[4]}=\{\varnothing,\{4\},\{2,4\},\{3,4\},\{2,3,4\},[4]\}$.}
        \label{fig:constructing_network}
\end{figure}

\begin{remark}
\label{remark:constructing_network}
\new{
This construction is common and known in finitely generated topologies, or equivalently \emph{Alexandroff topological spaces} \cite{Alexandroff:1937}. Note that $\varnothing,\exc\in C$ and $C$ being a lattice, means that $C$ is a collection of closed sets of a topology on $\exc$. In particular, the code $\mathcal{C}$ are the closed sets of a topology on a finite set. The digraph we obtained is also known as the \emph{specialization preorder}\footnote{Technically, the digraph we would obtain would have all self-loops as it is a preorder and hence a reflexive relation on the finite set. However, we ignore the loops in our construction which is not a problem  in our context.}, associated to the topology of $\C$.}
\end{remark}

\section{Discussion} 
We investigated threshold-linear recurrent networks that satisfy \new{Dale's} law and established  what features of the synaptic connectivity are responsible for determining their combinatorial codes. \cref{theorem:positive_code_characterization} describes the combinatorial code in terms of the connectivity graph and the spectral radii of the appropriate sub-matrices. In the case of weakly coupled networks, the spectral conditions are always satisfied and the combinatorial code depends only on the features in $\code{G}{\new{\excuninhib}} $ that are derived from the connectivity graph 
(\cref{theorem:excitatory_code_stability}). In this situation, all the fixed points are stable, moreover the combinatorial code is a sublattice. 
We also proved that any \new{sublattice} can be encoded as a combinatorial code of an excitatory network (\cref{cor:sublattice:encoding}). 

If the network is \emph{not} weakly coupled, the spectral condition in  \cref{theorem:positive_code_characterization} may prevent the combinatorial code from being a sublattice. Intuitively,  this is because the union of two 
codewords in $\exccode{W}$ may not be in $\exccode{W}$, as the spectral radius of the appropriate larger matrix may (or may not)  increase.  We have proven in \cref{theorem:positive_code_is_convex} that every combinatorial code is intersection-complete and therefore convex. Can any intersection-complete code can be encoded on a Dale network?  The answer to this question is currently unknown, as it requires a better understanding of the interplay of the spectral and the graph conditions in \cref{theorem:positive_code_characterization} .

Finally, we hypothesize that the result (\cref{theorem:positive_code_is_convex}) that the combinatorial codes of Dale recurrent networks are always convex  may provide a natural explanation for the ubiquity of  convex receptive fields in  many sensory systems in the mammalian brain.

\section{Proofs}
\label{section:proofs} 

In this section we \new{give} proofs for our main results stated in \cref{section:main_results}.  
First, we \new{recall}  the conditions for having a fixed point of the dynamics of \eqref{eq:network_dynamics}.
\begin{lemma}{\cite[Proposition 2.1]{curto2012flexible}}
\label{lemma:fixed_point_conditions} For a threshold linear network in eq. \eqref{eq:network_dynamics}, characterized by $(W,b)$, a point $x^*\in \mathbb{R}^n$ is a fixed point with support $\sigma\subset [n]$ if and only if the following conditions are all satisfied
\begin{enumerate}
    \item $(I-W)_{\sigma}x^*_{\sigma}=b_{\sigma}$,
    \item $x^*_{\sigma}>0$, and
    \item $b_{\overline{\sigma}}\le -W_{\overline{\sigma}\sigma}x^*_{\sigma}$.
\end{enumerate}
\bigskip 

Here $b_\sigma$ \new{is} the restriction of a  vector $b$ to the indices in $\sigma$, $\overline{ \sigma}$ denotes the complement in $[n]$ to a subset $\sigma$,  $W_{\overline{\sigma}\sigma}$ is the rectangular submatrix of $W$, restricted to the  rows in $\overline{\sigma}$ and columns in $\sigma$, and $(I-W)_\sigma=(I-W)_{ \sigma\sigma}$ is the principal submatrix restricted to the subset $\sigma$. 
Note that if $(I-W)_{\sigma}$ is invertible, then there is at most one fixed point with support $\sigma$ and it is given by $x^*_{\sigma}=(I_{\sigma}-W_{\sigma})^{-1}b_{\sigma}$.
\end{lemma}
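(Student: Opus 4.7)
The plan is to prove the lemma by unpacking the fixed-point equation $\dot x = 0$ coordinate by coordinate, splitting the index set into $\sigma$ and its complement $\overline\sigma$. A vector $x^*$ is a fixed point of \eqref{eq:network_dynamics} precisely when, for every $i$,
\[ x^*_i = \left[\sum_{j=1}^n W_{ij} x^*_j + b_i\right]_+, \]
and, by definition, $x^*$ has support $\sigma$ means $x^*_i > 0$ for $i \in \sigma$ and $x^*_i = 0$ for $i \in \overline\sigma$. My strategy is to show each of the three listed conditions is equivalent to the corresponding piece of this ReLU equation.

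First I would handle the $\sigma$-block. For $i \in \sigma$, the left hand side is strictly positive, so the ReLU must be active, yielding
\[ x^*_i = \sum_{j=1}^n W_{ij} x^*_j + b_i. \]
Since $x^*_j = 0$ for $j \in \overline\sigma$, this reduces to $x^*_i = \sum_{j \in \sigma} W_{ij} x^*_j + b_i$, which in vector form is $(I-W)_\sigma x^*_\sigma = b_\sigma$, matching condition (1). Condition (2), $x^*_\sigma > 0$, is simply the restatement that $\sigma$ is exactly the support. Moreover, if $(I-W)_\sigma$ is nonsingular (as under the Ground Assumption), (1) uniquely determines $x^*_\sigma = (I-W)_\sigma^{-1} b_\sigma$, establishing the concluding remark of the lemma.

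Next I would handle the $\overline\sigma$-block. For $i \in \overline\sigma$, we need $0 = [\sum_j W_{ij} x^*_j + b_i]_+$, which is equivalent to the inequality $\sum_j W_{ij} x^*_j + b_i \le 0$. Using once more that $x^*_j = 0$ for $j \in \overline\sigma$, this is $b_i \le -\sum_{j \in \sigma} W_{ij} x^*_j$, which in block-matrix form reads $b_{\overline\sigma} \le -W_{\overline\sigma \sigma} x^*_\sigma$, i.e.\ condition (3). The converse direction is the same chain of equivalences read in reverse: given (1), (2), and (3), extending $x^*_\sigma$ by zeros to $\overline\sigma$ produces a vector that satisfies the fixed-point equation on every coordinate.

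There is no real obstacle here; the only delicate point is the equivalence, for $i \in \overline\sigma$, between $[\,\cdot\,]_+ = 0$ and the non-strict inequality in condition (3), which is exactly the defining property of the ReLU at the boundary. No spectral or connectivity information is used, so the argument is purely a coordinatewise rewriting of $\dot x = 0$ in light of the ReLU piecewise-linearity.
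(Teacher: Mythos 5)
Your proof is correct and follows exactly the route the paper indicates for this cited result (the remark after the lemma notes that condition (1) comes from the ReLU evaluating to a positive number and condition (3) from it evaluating to zero): a coordinatewise unpacking of $x^*_i=[\sum_j W_{ij}x^*_j+b_i]_+$ split over $\sigma$ and $\overline{\sigma}$, using $x^*_{\overline{\sigma}}=0$ to reduce to the blocks $(I-W)_\sigma$ and $W_{\overline{\sigma}\sigma}$. Nothing is missing, and the uniqueness remark under nonsingularity of $(I-W)_\sigma$ is handled correctly.
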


\new{The conditions for having a fixed point in \cref{lemma:fixed_point_conditions} follow (with some work) from the dynamical system in \eqref{eq:network_dynamics}. In particular, $(I-W)_{\sigma}x_{\sigma}^*=b_{\sigma}$ follows from needing to evaluate the ReLU nonlinearity to a positive number, and $b_{\overline{\sigma}}\le -W_{\overline{\sigma}\sigma}x_{\sigma}^* $ follows from evaluating the ReLU nonlinearity to $0$.}

A key ingredient in our  proofs is the observation that the first condition in \cref{lemma:fixed_point_conditions} can be translated to the language of semipositive matrices (\cref{def:semipositive_matrix} below). From there we make use of the  theory   $M$-matrices (\cref{def:M-matrix} below). Note that the machinery of semipositive matricies has been previously used   in a different  context of recurrent networks with Heaviside  transfer function  in \cite{TournoyDoiron}.

\subsection{Necessary matrix theory results}
First, we recall some definitions and results from the theory of $M$-matrices, following  \cite{berman1994nonnegative}. An ${m\times n}$ matrix  $A$ is \emph{positive} (\emph{nonnegative}) if $A_{ij}>0$ ($A_{ij}\ge 0$) for all $ i $ and  $ j$. If $A$ is positive (nonnegative) we denote this by $A>0$ ($A\ge 0$).  
For  a vector $x\in \mathbb{R}^n$, we write $x>0$, $x<0$, $x\ge 0$, or $x\le 0$ if all  the entries of  corresponding  vector $x$ satisfy the appropriate inequality.

\begin{definition}\cite[Chapter 6]{berman1994nonnegative}
\label{def:M-matrix}
\new{A} square matrix $A$ is a \emph{$Z$-matrix} if $A_{ij}\le 0$ for all $i\neq j$. A $Z$-matrix $A$ is called an \emph{$M$-matrix} if  $A=sI-B$, where $I$ denotes the identity matrix,  $B_{ij}\ge 0$ for all $i\neq j$,  and the scalar $s$ is not smaller than the spectral radius of the matrix $B$,  $s\ge \rho(B)$.
\end{definition}

\medskip 
\noindent The following \new{lemma} will be used later. 

\begin{lemma}{\new{\cite[Lemma 2.1]{berman1994nonnegative}}}
\label{lemma:simple-radius-lemma} Assume that $A=I-W$, where $W\geq 0$. Then the following two statements are equivalent: 
\begin{enumerate}
\item[(i)] The matrix $A$ is a \new{nonsingular} $M$-matrix.
 \item[(ii)] The spectral radius of $W$  is smaller than one: $\rho(W)<1$.
\end{enumerate} 
\end{lemma}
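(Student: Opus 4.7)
The plan is to establish the two implications separately. For $(ii)\Rightarrow(i)$, I would take the decomposition $A = 1\cdot I - W$ directly: since $W\geq 0$ entrywise, in particular $W_{ij}\geq 0$ for all $i\neq j$, and $s=1>\rho(W)$ by hypothesis, $A$ matches the $M$-matrix definition. Nonsingularity is immediate: $\rho(W)<1$ forces $1\notin\sigma(W)$, so $0\notin\sigma(I-W)$.

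For the converse $(i)\Rightarrow(ii)$, I would rely on two classical ingredients. The first is the Perron--Frobenius theorem, which guarantees that for $W\geq 0$ the spectral radius $\rho(W)$ is itself a (real, nonnegative) eigenvalue of $W$. The second is that every eigenvalue of an $M$-matrix has nonnegative real part: writing $A = sI - B$ with $B\geq 0$ and $s\geq\rho(B)$, each $\lambda\in\sigma(A)$ has the form $\lambda = s-\mu$ for some $\mu\in\sigma(B)$, and $|\mu|\leq\rho(B)\leq s$ forces $\mathrm{Re}(\lambda)\geq 0$. Combining these via the spectral shift identity $\sigma(A)=\{1-\lambda:\lambda\in\sigma(W)\}$ yields $\mathrm{Re}(\lambda)\leq 1$ for every $\lambda\in\sigma(W)$, and specializing to the Perron eigenvalue gives $\rho(W)\leq 1$. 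To upgrade to a strict inequality, I would invoke nonsingularity of $A$: if $\rho(W)=1$, then $1\in\sigma(W)$ would give $0\in\sigma(A)$, contradicting invertibility.

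The main point to watch is the choice of the $M$-matrix decomposition $A=sI-B$ itself. The definition stated in the excerpt only requires $B$ to have nonnegative off-diagonal entries, so one could in principle be handed an awkward decomposition in which $B$ is not entrywise nonnegative, and Perron--Frobenius would not apply to $B$. The clean way around this is to argue directly on the spectrum of $W$ (rather than $B$): nonnegativity of $W$ already gives the Perron eigenvalue on the $W$-side, and the $M$-matrix property is only used to pin down real parts of $\sigma(A)$. Beyond this bookkeeping, the proof is routine and I anticipate no serious obstacle.
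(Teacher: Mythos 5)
The paper does not prove this lemma at all --- it is quoted verbatim from Berman--Plemmons (\cite[Lemma 2.1]{berman1994nonnegative}) and used as a black box --- so there is no internal proof to compare against. Your self-contained argument is correct: the forward direction $(ii)\Rightarrow(i)$ via the canonical decomposition $A=1\cdot I-W$ is immediate, and the converse correctly combines the spectral shift $\sigma(A)=\{1-\lambda:\lambda\in\sigma(W)\}$, the Perron--Frobenius fact that $\rho(W)\in\sigma(W)$ for $W\ge 0$, the bound $\mathrm{Re}(\lambda)\ge 0$ for eigenvalues of any $M$-matrix (which you derive from $|\mu|\le\rho(B)\le s$ without needing $B\ge 0$ entrywise), and nonsingularity to exclude $\rho(W)=1$. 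Your closing remark is also well taken: the paper's Definition of $M$-matrix only constrains the off-diagonal of $B$, so arguing on $\sigma(W)$ rather than on a Perron eigenvalue of an arbitrary $B$ is exactly the right way to keep the argument airtight under that formulation.
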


\medskip 
\noindent We shall also make use of the following. 
\begin{lemma}{\cite[Chapter 2, Corollary 1.6]{berman1994nonnegative}}
\label{lemma:eigenradius_of_submatrix}
Let $A$ be a square nonnegative matrix. Suppose that $B$ is a principal submatrix of $A$. Then $\rho(B)\le \rho (A)$.
\end{lemma}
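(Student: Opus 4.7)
The plan is to use the Perron--Frobenius theorem together with a Collatz--Wielandt-type inequality. Let $\sigma\subset[n]$ denote the index set with $B=A_\sigma$. Since $B$ is a nonnegative square matrix, the Perron--Frobenius theorem produces a nonzero vector $x\in\mathbb{R}^{|\sigma|}$ with $x\ge 0$ and $Bx=\rho(B)\,x$.

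The key step is to lift $x$ to a test vector in $\mathbb{R}^n$ on which a Collatz--Wielandt inequality can be applied. Define $\tilde{x}\in\mathbb{R}^n_{\ge 0}$ by $\tilde{x}_i=x_i$ for $i\in\sigma$ and $\tilde{x}_i=0$ otherwise. For $i\in\sigma$,
\[
(A\tilde{x})_i=\sum_{j=1}^{n} A_{ij}\tilde{x}_j=\sum_{j\in\sigma} A_{ij}x_j=(Bx)_i=\rho(B)\,\tilde{x}_i,
\]
while for $i\notin\sigma$ one has $(A\tilde{x})_i\ge 0=\rho(B)\,\tilde{x}_i$ because $A\ge 0$ and $\tilde{x}\ge 0$. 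Hence $A\tilde{x}\ge \rho(B)\,\tilde{x}$ coordinatewise, with $\tilde{x}\ge 0$ and $\tilde{x}\neq 0$. Invoking the Collatz--Wielandt inequality for nonnegative matrices (if $A\ge 0$ and a nonzero $y\ge 0$ satisfies $Ay\ge \lambda y$, then $\rho(A)\ge\lambda$) with $y=\tilde{x}$ and $\lambda=\rho(B)$ yields $\rho(A)\ge\rho(B)$.

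I expect the only subtle point to be invoking Perron--Frobenius in the generality of not necessarily irreducible or strictly positive nonnegative matrices, which is exactly what guarantees the existence of the nonnegative eigenvector $x$ at the spectral radius. A more elementary self-contained alternative would use Gelfand's formula $\rho(M)=\lim_{k\to\infty}\|M^k\|^{1/k}$: for every $i,j\in\sigma$, the entry $(A^k)_{ij}$ is the sum over all length-$k$ walks from $i$ to $j$ weighted by products of entries of $A$, while $(B^k)_{ij}$ is the sum over the sub-collection of walks that stay inside $\sigma$; since $A\ge 0$ each walk contributes a nonnegative weight, so $(A^k)_{ij}\ge (B^k)_{ij}$. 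Choosing any entrywise-monotone norm then gives $\|B^k\|\le \|A^k\|$, and taking $k$-th roots recovers the bound on spectral radii. Either route produces a short, self-contained argument.
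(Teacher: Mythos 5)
The paper does not prove this lemma at all: it is imported verbatim as \cite[Chapter 2, Corollary 1.6]{berman1994nonnegative}, so there is no in-paper argument to compare against. Your proof is correct and self-contained. The zero-padding of the Perron eigenvector of $B$ and the Collatz--Wielandt bound ($A\ge 0$, $y\ge 0$ nonzero, $Ay\ge\lambda y \Rightarrow \rho(A)\ge\lambda$, which follows from $A^k y\ge \lambda^k y$ and Gelfand) is exactly the standard route, and your appeal to Perron--Frobenius in the reducible case is legitimate since a nonnegative eigenvector at $\rho(B)$ always exists (e.g.\ by perturbing to $B+\varepsilon J$ and passing to the limit). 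The only cosmetic point in your Gelfand alternative is that $B^k$ and $A^k$ have different sizes, so the comparison should read $0\le B^k\le (A^k)_\sigma$ entrywise followed by $\|(A^k)_\sigma\|\le\|A^k\|$ for a norm monotone under passing to submatrices (max-entry or Frobenius); this is immediate and does not affect the conclusion.
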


\begin{definition}
\label{def:semipositive_matrix}
A square matrix $A$ is \emph{semipositive} if $\exists x>0$ such that $Ax>0$. 
\end{definition}

\begin{lemma}
\label{lemma:invertible_semipositive_characterization}
Let $A$ be a \new{nonsingular} $n\times n$ matrix. Then $A$ is semipositive if and only if there exists a vector  $ x>0$ such that $Ax\ge 0$.
\end{lemma}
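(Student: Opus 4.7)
The plan is to establish the two implications separately. The forward direction is immediate from the definitions: if $A$ is semipositive, there exists $x > 0$ with $Ax > 0$, and in particular $Ax \geq 0$. The content is in the reverse direction, which asks us to upgrade the non-strict inequality $Ax \geq 0$ to a strict one, at the cost of allowing a different witness vector.

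For the reverse implication, I would begin with $x > 0$ satisfying $Ax \geq 0$ and construct a nearby vector $y > 0$ with $Ay > 0$ by a small perturbation. Concretely, set
\[ y = x + \epsilon A^{-1}\mathbf{1},\]
where $\mathbf{1} \in \mathbb{R}^n$ is the all-ones vector and $\epsilon > 0$ is to be chosen. Nonsingularity of $A$ ensures that $A^{-1}\mathbf{1}$ is well-defined, and
\[ Ay = Ax + \epsilon \mathbf{1} \geq \epsilon \mathbf{1} > 0,\]
so $Ay > 0$ holds for every $\epsilon > 0$. It remains only to ensure that $y$ itself is strictly positive. Since $x > 0$, every coordinate of $x$ is bounded below by a positive number; taking $\epsilon$ so small that $\epsilon \, \|A^{-1}\mathbf{1}\|_\infty < \min_i x_i$ guarantees that the perturbation cannot drive any coordinate of $y$ down to zero, giving the desired semipositivity witness.

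There is no real obstacle here; the only subtlety is identifying the correct perturbation direction. The nonsingularity hypothesis enters precisely through the ability to pull back the strictly positive vector $\mathbf{1}$ through $A^{-1}$ and use it as a correction term that converts the non-strict inequality $Ax \geq 0$ into a strict one. If $A$ were singular, this construction would fail in general, since the image of $A$ might not contain $\mathbf{1}$ and there could be no way to add positive slack to $Ax$ while keeping the input vector close to $x$.
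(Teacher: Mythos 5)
Your proof is correct and is essentially the same as the paper's: the paper perturbs $x$ by $A^{-1}\overline{\varepsilon}$ where $\overline{\varepsilon}$ is a small constant vector (i.e.\ exactly your $\epsilon A^{-1}\mathbf{1}$), choosing the scale via the operator norm of $A^{-1}$ rather than via $\|A^{-1}\mathbf{1}\|_\infty$ as you do. Both choices validly keep the perturbed vector strictly positive while making $Ax$ strictly positive.
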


\begin{proof}
The forward direction is immediate. For the converse suppose that $\exists x>0$ such that $Ax\ge 0$. Let $a=||A^{-1}||_{op}$ be the operator norm of $A^{-1}$, $x_m=\min_{1\le i\le n}(x_i)$, $\varepsilon=\frac{x_m}{2a\sqrt{n}}$ and $\overline{\varepsilon}\in \mathbb{R}^n$ be the column vector whose all components are equal to $\varepsilon$. Then $||\overline{\varepsilon}||_2=\frac{x_m}{2a}$ by construction. Furthermore,  $||A^{-1}\overline{\varepsilon}||_2\le a||\overline{\varepsilon}||_2=a\frac{x_m}{2a}=\frac{x_m}{2}$. In particular, $\sum_{i=1}^n(A^{-1}\overline{\varepsilon})_i^2\le \frac{x_m^2}{4}$ for all $1\le i\le n$. Therefore, $(A^{-1}\overline{\varepsilon})_i^2\le \frac{x_m^2}{4}$ which implies $|(A^{-1}\overline{\varepsilon})_i|\le \frac{x_m}{2}$ for all $1\le i\le n$. Thus, $x+A^{-1}\overline{\varepsilon}>0$ because of the way $x_m$ was defined. Finally, $A(x+A^{-1}\overline{\varepsilon})=Ax+\overline{\varepsilon}\ge 0 +\overline{\varepsilon}>0$ and thus $A$ is semipositive.
\end{proof}

The following example illustrates  the necessity of the hypothesis that $A$ is a \new{nonsingular} matrix in \cref{lemma:invertible_semipositive_characterization}.
 Consider a singular matrix 
 $A=\begin{pmatrix}
1 & -1 & 0\\
-1 & 1 & 0\\
0 & 0 & 1
\end{pmatrix}$. Let $x=\begin{pmatrix}
1 & 1 & 1
\end{pmatrix}^T$. Then note that $Ax\ge 0$. Now let $y\in \mathbb{R}^3_{>0}$. Suppose that $Ay>0$. This implies that $y_1>y_2$ and $y_2>y_1$ which cannot be. Thus $A$ is not semipositive.

There are  numerous characterizations of $M$-matrices; one can find several dozen conditions  on a matrix  that are all equivalent to being an $M$-matrix \cite{berman1994nonnegative}. However, our proofs  rely only  on the following observation.

\begin{theorem}\cite[Chapter 6, Theorem 2.3]{berman1994nonnegative}
\label{theorem:M_matrix_equivalence}
Let $A$ be a $Z$-matrix. Then $A$ is a \new{nonsingular} $M$-matrix if and only if $A$ is semipositive. 
\end{theorem}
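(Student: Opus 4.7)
The plan is to prove both directions by exploiting the Neumann series for the forward implication and a Collatz--Wielandt style argument for the converse. Writing the $Z$-matrix as $A = sI - B$ with $B$ entrywise nonnegative (off-diagonally forced by the $Z$-property, diagonally by choosing $s$ sufficiently large), the question reduces to: when does such a representation satisfy $s > \rho(B)$, and how does that relate to the existence of a strictly positive $x$ with $Ax > 0$?

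For the forward direction, assume $A = sI - B$ is a nonsingular $M$-matrix, so $s > \rho(B)$. I would expand
\[
A^{-1} = \frac{1}{s}\Bigl(I - \tfrac{1}{s}B\Bigr)^{-1} = \frac{1}{s}\sum_{k=0}^{\infty} \Bigl(\tfrac{B}{s}\Bigr)^{k},
\]
where convergence is immediate because $\rho(B/s) < 1$, and termwise nonnegativity gives $A^{-1} \ge 0$. Since $A^{-1}$ is invertible and nonnegative, no row of $A^{-1}$ can be identically zero (otherwise that row would make $A^{-1}$ singular), so applying $A^{-1}$ to the strictly positive vector $e = (1,\ldots,1)^T$ yields $x = A^{-1} e > 0$ componentwise. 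Then $Ax = e > 0$, witnessing semipositivity.

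For the converse, let $A$ be a $Z$-matrix and suppose $x > 0$ satisfies $Ax > 0$. Choose $s$ strictly larger than $\max_i A_{ii}$ and write $A = sI - B$; then $B \ge 0$ entrywise, using the $Z$-property off the diagonal and the choice of $s$ on the diagonal. The inequality $Ax > 0$ rearranges componentwise to $(Bx)_i < s\, x_i$, i.e.\ $(Bx)_i / x_i < s$ for every $i$. Invoking the Collatz--Wielandt characterization of the spectral radius for nonnegative matrices, $\rho(B) \le \max_i (Bx)_i/x_i < s$, which is precisely the nonsingular $M$-matrix condition.

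The proof is two short appeals to well-known facts, so the only real obstacle is pinning down which prerequisites one wishes to import. On the forward side the subtle bookkeeping step is noting that nonsingularity of a nonnegative matrix forces every row to contain a positive entry. On the converse side the content is Perron--Frobenius through the Collatz--Wielandt bound, and if one were unwilling to quote it, reproving this spectral radius inequality would be the single genuinely nontrivial ingredient.
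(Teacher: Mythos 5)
This statement is quoted verbatim from Berman--Plemmons and the paper offers no proof of it, so there is nothing internal to compare against; your argument is a correct, self-contained proof of the standard result, and it is essentially the textbook one (Neumann series giving $A^{-1}\ge 0$ for the forward direction, the Collatz--Wielandt bound $\rho(B)\le \max_i (Bx)_i/x_i$ for the converse). Two small points worth making explicit if you write this up: the quantity $s-\rho(B)$ is independent of the chosen splitting $A=sI-B$ with $B\ge 0$ (since $\rho(B+cI)=\rho(B)+c$ for $c\ge 0$), so ``nonsingular $M$-matrix'' really is equivalent to $s>\rho(B)$; and that equivalence itself uses Perron--Frobenius, namely that $\rho(B)$ is an eigenvalue of the nonnegative matrix $B$, so $s=\rho(B)$ would force $A$ to be singular.
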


\subsection{\new{Proof of \cref{theorem:replacing_all_inhibitory_neurons_by_one} and related results}}\label{sec:the:proofs}

In order to prove \texorpdfstring{\cref{theorem:replacing_all_inhibitory_neurons_by_one}} a few preliminary results are necessary. Namely \texorpdfstring{\cref{theorem:code_is_invariant_if_adding_inhib_neuron_with_no_extra_connections,corollary:replacing_all_inhibitory_neurons_by_one}}. We also introduce the following notation for convenience.

\begin{definition}
\label{def:set_of_fixed_pt_supports}
For the dynamical system in \eqref{eq:network_dynamics} we denote the set of supports of all the fixed points by $FP(W,b)$ and we let $FP(W)\od\bigcup_{b\ge 0}FP(W,b)$.
\end{definition}

\begin{theorem}
\label{theorem:code_is_invariant_if_adding_inhib_neuron_with_no_extra_connections}
Let $W\in \Dale{n}$ and   $W'\in \Dale{n+1}$ be  Dale matrices such that $W'_{[n]}=W$, and the last  $(n+1)$-st column of $W'$ is inhibitory and  satisfies the following condition: 
\begin{equation}
\label{eq:additional_inhib_neuron}
\exists i\in \exc \, \text{with  }  W'_{i(n+1)}<0 \quad \implies\quad  \exists j \in \inhib \,  \text{with } W_{ij}<0.
\end{equation} 
Then $\exccode{W}=\exccode{W'}$.
\end{theorem}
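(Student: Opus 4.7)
The plan is to prove the two inclusions $\exccode{W}\subseteq\exccode{W'}$ and $\exccode{W'}\subseteq\exccode{W}$ separately, by translating each fixed point through \cref{lemma:fixed_point_conditions}.

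For the forward inclusion, given a fixed point $x^*$ of $(W,b)$ with support $\tau$ and excitatory support $\sigma$, the idea is to extend it by turning on the new neuron. Set $x^{*'}=(x^*,M)$ with $M>0$ large, and define $b'\in\mathbb{R}^{n+1}_{\ge 0}$ by $b'_i=b_i+|W'_{i,n+1}|M$ for $i\in\tau$, $b'_i=b_i$ for $i\in[n]\setminus\tau$, and $b'_{n+1}=M-W'_{n+1,\tau}x^*_\tau$, which is nonnegative once $M$ is large enough. The three conditions of \cref{lemma:fixed_point_conditions} follow from the analogous conditions for $(W,b)$ together with the sign $W'_{j,n+1}\le 0$ (which controls the effect of the new column); the excitatory support of $x^{*'}$ remains $\sigma$. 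Note that hypothesis \eqref{eq:additional_inhib_neuron} plays no role in this direction.

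For the backward inclusion, take a fixed point $x^{*'}$ of $(W',b')$ with $\supp_+ x^{*'}=\sigma$ and let $\tau'=\supp x^{*'}$. If $n+1\notin\tau'$, then the restriction $(x^{*'}_{[n]},b'_{[n]})$ is immediately a fixed point of $(W,\cdot)$ with the correct excitatory support. The substantive case is $n+1\in\tau'$, and the plan is to absorb the activity of neuron $n+1$ by simultaneously activating every neuron of $\inhib$ at a large common level $y>0$. Concretely, set $x^*_\sigma=x^{*'}_\sigma$, $x^*_{\exc\setminus\sigma}=0$, $x^*_j=y$ for $j\in\inhib$, so that the candidate support is $\tau=\sigma\cup\inhib$. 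The input $b$ is then forced on $\tau$ by the first condition of \cref{lemma:fixed_point_conditions}, while for $i\in\exc\setminus\sigma$ one chooses $b_i\in[0,-W_{i,\tau}x^*_\tau]$.

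The main obstacle is checking that $b\ge 0$ and the non-support inequalities can indeed be satisfied simultaneously; this is where hypothesis \eqref{eq:additional_inhib_neuron} is crucial, in its contrapositive form $i\in\excuninhib\Rightarrow W'_{i,n+1}=0$. For $i\in\excuninhib$, the contribution of neuron $n+1$ to $i$'s input in $(W',b')$ vanishes, so the inequalities for $i$ in the new system reduce to those already satisfied by $(W',b')$; in particular, for $i\in\excuninhib\setminus\sigma$ the original non-support condition together with $b'_i\ge 0$ and the excitatory sign pattern forces $W_{i,\sigma}x^{*'}_\sigma=0$, making the new non-support inequality feasible with $b_i=0$. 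For $i\in\excinhib$, the hypothesis provides some $j\in\inhib$ with $W_{ij}<0$, so $\sum_{j\in\inhib}|W_{ij}|>0$ and the linear-in-$y$ term dominates, yielding $b_i\ge 0$ on the support and $-W_{i,\tau}x^*_\tau\ge 0$ off the support for large $y$. The same dominance argument gives $b_j\ge 0$ for $j\in\inhib$. Without \eqref{eq:additional_inhib_neuron}, an uninhibited excitatory neuron could be inhibited by $n+1$ with no candidate in $\inhib$ to compensate, so no valid $b\ge 0$ could exist.
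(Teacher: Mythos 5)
Your proof is correct and follows essentially the same route as the paper's: extend the fixed point by activating the new inhibitory neuron at a large level for the forward inclusion, and for the backward inclusion split on whether $n+1$ is in the support, absorbing its effect by activating all of $\inhib$ at a large common level, with hypothesis \eqref{eq:additional_inhib_neuron} guaranteeing a compensating inhibitor for every excitatory neuron that $n+1$ inhibits. The only (cosmetic) difference is that you partition the excitatory neurons by $\excinhib$ versus $\excuninhib$ rather than by the set of neurons actually inhibited by $\tau'$, and the paper additionally notes that $W'$ must be chosen (generically) to satisfy the Ground Assumption.
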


\begin{proof}
The proof needs that both matrices  $W$  and $W'$ satisfy  the Ground Assumption (page \pageref{lab:ground:assumption}).  Since being \new{nonsingular} is a fine-tuned condition for matrices, this is generically true. Furthermore, we can explicitly construct a $W'$ in such a way that it satisfies the Ground Assumption and the hypothesis in the statement of the theorem. In particular we can choose the $(n+1)$-st row of $W'$ to be the zero vector. This will make it so that for any $\sigma\subset [n+1]$, $(I-W')_{\sigma}$ is \new{nonsingular}, assuming that $W$ was already satisfying the Ground Assumption. We now proceed with the proof. 

\new{Observe that by construction, as we are only adding an additional inhibitory neuron to the network, both $W$ and $W'$ have the same collection of excitatory neurons, $\exc$. Furthermore, $\mathcal{E_I}\subset \mathcal{E_I}'$, where $\mathcal{E_I}'$ are the excitatory neurons in $W'$ that receive inhibition. However, \eqref{eq:additional_inhib_neuron} in fact yields us $\mathcal{E_I}'\subset \mathcal{E_I}$. Therefore $\mathcal{E_I}=\mathcal{E_I}'$, that is to say the collection of inhibited and uninhibited excitatory neurons in $W$ and $W'$ are the same. Hence, there is no confusion when we write the excitatory neurons in $W'$ as a partition $\exc=\mathcal{E_U}\sqcup \mathcal{E_I}$.}

Suppose that $\sigma\in \exccode{W}$. Then, there exists $\tau\subset \inhib$ such that $\nu=\sigma\sqcup\tau\in FP(W)$, and (by Lemma \ref{lemma:fixed_point_conditions})  a vector  $b\in \mathbb{R}^n_{\ge 0}$ such that $x_\nu=(I_{\nu}-W_{\nu})^{-1}b_{\nu}>0$ and $0\le b_{\overline\nu}\le -W_{\overline{\nu} \nu}x_\nu $, where $\overline{\nu}=[n]\setminus \nu$.  Let $\nu'=\nu \new{\sqcup}\{n+1\}$ and  define $y\in \mathbb{R}^{|\nu|+1}_{>0}$ as follows. The first $\vert \nu\vert$ entries of $y$ are those of $x_\nu$, i.e. $y_{\nu}=x_\nu\in \mathbb R^{|\nu|}_{>0}$, and the last entry is defined as  \begin{equation*}   y_{|\nu|+1}=\left \vert \sum_{k\in \nu }(I-W')_{(n+1)k}x_{k}\right \vert +1.
\end{equation*} 
\new{Note that for $i\in \nu$, we have 
\[((I-W')_{\nu'}y)_i=\sum_{k\in \nu'}(I-W')_{ik}y_k=\sum_{k\in \nu}(I-W)_{ik}x_k-W'_{i(|\nu|+1)}y_{|\nu|+1}\ge 0.\]
Thus}, by construction we have $(I-W')_{\nu'}y\ge 0$. Furthermore, $-W'_{\overline{ \nu'} \nu'}y\ge -W_{\overline{\nu} \nu}x_\nu \ge 0$. Let $b'\in \mathbb{R}^{n+1}_{
\ge 0}$ be defined by $b'_{\nu'}=(I-W')_{\nu'}y$ and $b'_{\overline{\nu'}}=0$. Then, by \cref{lemma:fixed_point_conditions}  $\nu'=\sigma\sqcup \tau\cup \{n+1\}\in FP(W',b')\subset FP(W')$,  and thus $\sigma\in \exccode{W'}$.

\medskip 

Now suppose that $\sigma\in \exccode{W'}$. Then there exists a $\tau'\subset \inhib\sqcup \{n+1\}$ such that $\nu=\sigma\sqcup \tau' \in FP(W')$. Since $\nu \in FP(W')$, there exists a $b'\in \mathbb{R}^{n+1}_{\ge 0}$ such that 
\[x\od(I_{\nu}-W'_{\nu})^{-1}b'_{\nu}>0, \textnormal{ and } 0\le b'_{[n+1]\setminus\nu}\le -W'_{([n+1]\setminus\nu)\nu}x,\] 
by \cref{lemma:fixed_point_conditions}. We have two cases to consider; either $\tau'\subset \inhib$ or $(n+1)\in \tau'$. 

In the case that $\tau' \subset \inhib$, observe that $(I_{\nu}-W'_{\nu})^{-1}=(I_{\nu}-W_{\nu})^{-1}$ and that $W'_{([n+1]\setminus\nu)\nu}$ is almost identical to $W_{([n]\setminus\nu)\nu}$; $W'_{([n+1]\setminus\nu)\nu}$ can be obtained from $W_{([n]\setminus\nu)\nu}$ by adding an extra row of entries induced from the $(n+1)$-st row of $W'$. Define $b\in \mathbb{R}^{n}_{\ge 0}$ by $b_{\nu}=(I-W)_{\nu}x=(I-W')_{\nu}x=b'_{\nu}$ and $b_{[n]\setminus\nu}=0$. Then by construction
\[x=(I_{\nu}-W_{\nu})^{-1}b_{\nu}>0, \textnormal{ and } 0\le b_{[n]\setminus\nu}\le -W_{([n]\setminus\nu)\nu}x,\]
and thus $\nu=\sigma\sqcup\tau'\in FP(W,b)\subset FP(W)$ by \cref{lemma:fixed_point_conditions}. Therefore $\sigma\in \exccode{W}$. 

In the case that $(n+1)\in \tau'$, let \new{$\mathcal{E_{\tau'}}$} be the set of indices of nonzero rows of $W'_{\exc\tau'}$ \new{(the set of inhibited excitatory neurons of the Dale network $W'_{\exc\tau'}$)}. For each $i\in \alpha$ let $j_i\in \inhib$ be such that $W'_{ij_i}=W_{ij_i}<0$ (We can always choose $j_i$ to not equal $n+1$ by the hypothesis of the theorem, and thus we can assume $j_i\in \inhib$). Let $\mu=\sigma\sqcup \inhib$ and define
\begin{align*} c\od
\max \Bigg\{\, 
&\max_{i\in \sigma\cap \new{\mathcal{E_{\tau'}}}}-\frac{1}{W_{ij_i}}\Bigl\vert\sum_{k\in \sigma}(I-W)_{ik}x_k\Bigr\vert, 
\, \, \max_{i\in \inhib}\Bigl\vert\sum_{k\in \sigma}-W_{ik}x_k\Bigl\vert,\\
&\max_{i \in([n]\setminus\mu)\cap \alpha}-\frac{1}{W_{ij_i}}\Bigl\vert\sum_{k\in \sigma}-W_{ik}x_k\Bigr\vert\,\Bigg \}.
\end{align*}

Let $y\in \mathbb{R}^{|\mu|}_{>0}$ be defined by $y_{\sigma}=x_{\sigma}$ and $y_j=1+c$ for $j \in \inhib$. \new{We aim to show that $(I-W)_{\mu}y>0$ and that $0\le -W_{([n]\setminus \mu)\mu}y$ as that will give us that $\mu\in FP(W)$ by \cref{lemma:fixed_point_conditions}. We prove these inequalities by considering the partitions 
\[\mu=\sigma\sqcup \inhib=(\sigma\cap \mathcal{E_{\tau'}})\sqcup (\sigma\setminus \mathcal{E_{\tau'}})\sqcup\inhib, [n]\setminus \mu=(([n]\setminus \mu)\cap \mathcal{E_{\tau'}})\sqcup (([n]\setminus \mu)\setminus \mathcal{E_{\tau'}}).\]}

Then for $i\in \sigma\cap \mathcal{E_{\tau'}}$, we have \new{
\begin{align*}
    ((I-W)_{\mu}y)_i=\sum_{k\in \mu}(I-W)_{ik}y_k=\sum_{k\in \sigma}(I-W)_{ik}y_k+\sum_{k\in \inhib}(I-W)_{ik}y_k=\\
    =\sum_{k\in\sigma} (I-W)_{ik}x_k+(1+c)\sum_{k\in\inhib}(I-W)_{ik}=\\
    =\sum_{k\in \sigma}(I-W)_{ik}x_k+(1+c)(I-W)_{ij_i}+(1+c)\sum_{j_i\neq k\in \inhib}(I-W)_{ik}\ge\\
    \ge\sum_{k\in \sigma}(I-W)_{ik}x_k+c(-W_{ij_i})-W_{ij_i}+(1+c)0\ge\\
    \ge\sum_{k\in \sigma}(I-W)_{ik}x_k+(-W_{ij_i})(-\frac{1}{W_{ij_i}})\left\vert\sum_{k\in \sigma} (I-W)_{ik}x_k\right\vert-W_{ij_i}\ge \\
    \ge 0-W_{ij_i}>0.
\end{align*}}
For $i\in \sigma\setminus \new{\mathcal{E_{\tau'}}}$, we have
\begin{align*}
    ((I-W)_{\mu}y)_i=\sum_{k\in \mu}(I-W)_{ik}y_k=\sum_{k\in \sigma}(I-W)_{ik}y_k+\sum_{k\in \inhib}(I-W)_{ik}y_k=\\
    =\sum_{k\in \sigma}(I-W)_{ik}x_k+\sum_{k\in \inhib}(I-W)_{ik}y_k\ge \sum_{k\in \sigma}(I-W)_{ik}x_k+0>0,
\end{align*}
because the $i$-th row of $W'_{\exc\tau'}$ is zero for $i\not \in \new{\mathcal{E_{\tau'}}}$ by assumption and $x$ was by hypothesis such that for $i\in \sigma\setminus\new{\mathcal{E_{\tau'}}}$, $((I-W')_{\nu}x)_i=\sum_{k\in\nu} (I-W')_{ik}x_k=\sum_{k\in \sigma}(I-W')_{ik}x_k+\sum_{k\in \tau'}(I-W')_{ik}x_k=\sum_{k\in \sigma}(I-W)_{ik}x_k+0>0$.

\new{Let $i\in \inhib$. Since $i\not\in \sigma$, we have $\sum_{k\in \sigma}(I-W)_{ik}x_k=\sum_{k\in \sigma}-W_{ik}x_k$. Furthermore, note that $\sum_{i\neq k\in\inhib}-W_{ik}\ge 0$. From all of this we have
\begin{align*}
    ((I-W)_{\mu}y)_i=\sum_{k\in \mu}(I-W)_{ik}y_k=\sum_{k\in \sigma}(I-W)_{ik}y_k+\sum_{k\in \inhib}(I-W)_{ik}y_k=\\
    =\sum_{k\in \sigma}(I-W)_{ik}x_k+(1+c)\sum_{k\in \inhib}(I-W)_{ik}=\\
    =\sum_{k\in \sigma}-W_{ik}x_k+(1+c)((I-W)_{ii}+\sum_{i\neq k\in \inhib}-W_{ik})\ge\\
    \ge \sum_{k\in \sigma}-W_{ik}x_k+1+c
    \ge \sum_{k\in \sigma}-W_{ik}x_k+1+\left\vert\sum_{k\in \sigma}-W_{ik}x_k\right\vert>0.
\end{align*}}
Together, all these inequalities yield that $(I-W)_{\mu}y>0$ and $(I_{\mu}-W_{\mu})^{-1}(I-W)_{\mu}y=y>0$. 

On the other hand, for $i\in ([n]\setminus \mu)\cap \new{\mathcal{E_{\tau'}}}$, we have
\begin{align*}
    (-W_{([n]\setminus\mu)\mu}y)_i
    = (-W_{([n]\setminus\mu)\mu}y)_i
    =\sum_{k\in\mu}-W_{ik}y_k=\\
    =\sum_{k\in\sigma}-W_{ik}y_k+\sum_{k\in \inhib} -W_{ik}y_k=\sum_{k\in\sigma}-W_{ik}x_k+\sum_{k\in \inhib} -W_{ik}+c\sum_{k\in \inhib}-W_{ik}=\\
    =\sum_{k\in\sigma}-W_{ik}x_k-W_{ij_i}+\sum_{j_i\neq k\in \inhib}-W_{ik}+c(-W_{ij_i})+c\sum_{j_i\neq k\in \inhib}-W_{ik}\ge\\
    \ge\sum_{k\in \sigma}-W_{ik}x_k-W_{ij_i}+0+(-W_{ij_i})(-\frac{1}{W_{ij_i}})\left\vert\sum_{k\in \sigma}-W_{ik}x_k\right\vert+0\ge\\
    \ge -W_{ij_i}>0.
\end{align*}
Furthermore, for $i\in ([n]\setminus\mu)\setminus \new{\mathcal{E_{\tau'}}}$, we have
\begin{align*}
    (-W_{([n]\setminus\mu)\mu}y)_i
    =\sum_{k\in\mu}-W_{ik}y_k=\\
    =\sum_{k\in\sigma}-W_{ik}y_k+\sum_{k\in \inhib} -W_{ik}y_k\ge \sum_{k\in\sigma}-W_{ik}x_k+0\ge 0,
\end{align*}
because the $i$-th row of $W'_{\exc\tau'}$ is zero for $i\not \in \new{\mathcal{E_{\tau'}}}$ by assumption and $x$ was by hypothesis such that for $i\in ([n]\setminus\mu)\setminus\new{\mathcal{E_{\tau'}}}\subset \exc$,
\[(-W'_{([n+1]\setminus\nu)\nu}x)_i=\sum_{k\in \nu}-W'_{ik}x_k=\sum_{k\in \sigma}-W'_{ik}x_k+\sum_{k\in \tau'}-W'_{ik}x_k=\sum_{k\in \sigma}-W_{ik}x_k+0\ge 0.\]
Let $b\in \mathbb{R}^{n}_{\ge 0}$ be defined by $b_{\mu}=(I-W)_{\mu}y$ and $b_{[n]\setminus\mu}=0$. Then, by construction $\mu=\sigma\sqcup \inhib\in FP(W,b)\subset FP(W)$ by \cref{lemma:fixed_point_conditions}. Therefore $\sigma\in \exccode{W}$.
\end{proof} 

\begin{corollary}
\label{corollary:replacing_all_inhibitory_neurons_by_one}
Let $W\in \mathbb{D}_{n}$, and $W'\in \mathbb{D}_{n+1}$ be Dale matrices such that $W'_{[n]}=W$, and the last $(n+1)$-st column of $W'$ is inhibitory and such that for all $i\in \exc$, $W_{i(n+1)}<0$ if and only if the $i$-th row vector of $W_{\exc\new{\inhib}}$ is nonzero. Let $W''$ be a Dale matrix obtained by deleting all the inhibitory columns and rows of $W'$ except the $(n+1)$-st row and column. Then $\exccode{W}=\exccode{W''}$.
\end{corollary}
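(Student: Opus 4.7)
The plan is to deduce the corollary from \cref{theorem:code_is_invariant_if_adding_inhib_neuron_with_no_extra_connections} in two stages. First, I will show $\exccode{W}=\exccode{W'}$ by viewing $W'$ as the augmentation of $W$ by the single extra inhibitory neuron $n+1$. Second, I will show $\exccode{W'}=\exccode{W''}$ by peeling off the original inhibitory neurons of $W'$ one at a time, applying the same theorem at each step, with the surviving neuron $n+1$ playing the role of the redundant replacement that keeps the hypothesis verified.

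For the first stage, the condition \eqref{eq:additional_inhib_neuron} follows immediately from the construction of $W'$: if $W'_{i(n+1)}<0$, the biconditional defining $W'$ forces the $i$-th row of $W_{\exc\inhib}$ to be nonzero, so there exists $j\in\inhib$ with $W_{ij}<0$. Hence \cref{theorem:code_is_invariant_if_adding_inhib_neuron_with_no_extra_connections} applies and yields $\exccode{W}=\exccode{W'}$.

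For the second stage, enumerate $\inhib=\{j_1,\ldots,j_\ell\}$ and set $M_0=W'$, where $M_t$ is obtained from $M_{t-1}$ by deleting the row and column indexed by $j_t$, so that $M_\ell=W''$. Each $M_t$ is a Dale matrix whose excitatory set is $\exc$ and whose inhibitory set always contains $n+1$. Permuting indices does not affect $\exccode{\cdot}$, so I may place $j_t$ in the last position of $M_{t-1}$ and apply \cref{theorem:code_is_invariant_if_adding_inhib_neuron_with_no_extra_connections} with $M_t$ in the role of the smaller matrix and $M_{t-1}$ in the role of its augmentation. The hypothesis to check is: whenever $(M_{t-1})_{i j_t}<0$ for some $i\in\exc$, some inhibitory neuron of $M_t$ still inhibits $i$. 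But $(M_{t-1})_{i j_t}=W_{i j_t}<0$ forces the $i$-th row of $W_{\exc\inhib}$ to be nonzero, so by the construction of $W'$ we have $W'_{i(n+1)}<0$, and $n+1$ is an inhibitory neuron of $M_t$ by design. Chaining the equalities $\exccode{M_{t-1}}=\exccode{M_t}$ yields $\exccode{W'}=\exccode{W''}$, and combining with the first stage gives $\exccode{W}=\exccode{W''}$.

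The main substantive point is the observation used in the second stage: the appended neuron $n+1$ of $W'$ redundantly supplies inhibition to every excitatory neuron that was inhibited in $W$, so it can cover for each original inhibitory neuron as that neuron is removed. A minor bookkeeping concern is that each intermediate $M_t$ must satisfy the Ground Assumption, but this is automatic because every principal submatrix of $I-M_t$ is a principal submatrix of $I-W'$, which is nonsingular by assumption on $W'$.
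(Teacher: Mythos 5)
Your proof is correct and follows essentially the same route as the paper's: one application of \cref{theorem:code_is_invariant_if_adding_inhib_neuron_with_no_extra_connections} to pass from $W$ to $W'$, then repeated applications to delete the original inhibitory neurons one at a time, with the appended neuron $n+1$ supplying the inhibition needed to verify the hypothesis at each step. Your explicit check of condition \eqref{eq:additional_inhib_neuron} at each deletion and your observation that the Ground Assumption for the intermediate matrices is inherited from $W'$ are details the paper leaves implicit (the paper merely notes $W'$ can be chosen generically), but they do not change the argument.
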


\begin{proof}
By \cref{theorem:code_is_invariant_if_adding_inhib_neuron_with_no_extra_connections} we have that $\exccode{W}=\exccode{W'}$. Now we can reindex the inhibitory columns and rows so that the $(n+1)$-st column of $W'$ becomes the first inhibitory column. Then we start deleting all the other inhibitory rows and columns one at a time and at the end we obtain the matrix $W''$. The code will be preserved at each deletion step by \cref{theorem:code_is_invariant_if_adding_inhib_neuron_with_no_extra_connections} and thus $\exccode{W}=\exccode{W''}$. Note that $W'$ can be generically chosen such that at each deletion step all relevant matrices satisfy the Ground Assumption, which was necessary in the proof of \cref{theorem:code_is_invariant_if_adding_inhib_neuron_with_no_extra_connections}.
\end{proof}

\begin{proof}[Proof of \cref{theorem:replacing_all_inhibitory_neurons_by_one}]
Note that we can obtain $W'$ from $W$ by following the procedure from \cref{corollary:replacing_all_inhibitory_neurons_by_one} where we obtained $W''$. The result thus follows.
\end{proof}

Before proceeding to prove \cref{theorem:replace_inhibitory_rows_by_0}, we make a few observations in \cref{prop:positive_code_upward_closed,corollary:characterization_of_positive_code}
which will simplify our proof strategy.

\begin{proposition}
\label{prop:positive_code_upward_closed}
Let $W\in \Dale{n}$, $\sigma\subset \exc$ and $\tau\subset \inhib$. If $\sigma\sqcup \tau\in FP(W)$, then $\sigma\sqcup \tau'\in FP(W)$ for any $\tau\subset \tau'\subset \inhib$.
\end{proposition}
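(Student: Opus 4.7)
The plan is to invoke \cref{lemma:fixed_point_conditions} to extract an input $b\in \mathbb{R}^n_{\ge 0}$ and a fixed point $x^*$ with support $\nu\od \sigma\sqcup\tau$, and then to construct a new non-negative input $b'$ together with a fixed point $y^*$ whose support is $\nu'\od \sigma\sqcup\tau'$, by ``boosting'' the firing rates of the additional inhibitory neurons in $\tau'\setminus\tau$. The useful feature of Dale networks that makes this work is that every column of $W$ indexed by an inhibitory neuron has non-positive entries, so increasing the activity of an inhibitory neuron contributes a non-negative quantity to $(I-W)_{\nu'}y^*_{\nu'}$ at every other coordinate.

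Concretely, I would set $y^*_i\od x^*_i$ for $i\in\nu$ and $y^*_i\od c$ for $i\in\tau'\setminus\tau$, where $c>0$ is a parameter to be chosen later, and then define $b'_{\nu'}\od (I-W)_{\nu'}y^*_{\nu'}$ and $b'_{\overline{\nu'}}\od b_{\overline{\nu'}}$; note that $(I-W)_{\nu'}$ is nonsingular by the Ground Assumption and that $\overline{\nu'}\subset\overline{\nu}$. Conditions (1) and (2) of \cref{lemma:fixed_point_conditions} hold by construction, so the remaining work is to verify $b'\ge 0$ and the third condition $b'_{\overline{\nu'}}\le -W_{\overline{\nu'}\nu'}y^*_{\nu'}$.

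The third condition is easy: for $i\in \overline{\nu'}$ we already have $b_i\le -W_{i\nu}x^*_\nu$, and $-W_{i\nu'}y^*_{\nu'}$ exceeds $-W_{i\nu}x^*_\nu$ by a non-negative sum of the form $-c\,W_{ij}$, $j\in\tau'\setminus\tau$. The coordinates of $b'$ on $\overline{\nu'}$ inherit non-negativity from $b$, and for $i\in\sigma\cup\tau$ the coordinate $b'_i$ equals $b_i$ plus another non-negative boost of the same form. The one mildly delicate point is non-negativity of $b'_i$ for $i\in\tau'\setminus\tau$: a direct expansion gives
\[ b'_i \;=\; c\Bigl(1-\sum_{j\in\tau'\setminus\tau}W_{ij}\Bigr)-\sum_{j\in\nu}W_{ij}x^*_j, \]
where the coefficient of $c$ is at least $1$ because every $W_{ij}$ with $j\in\inhib$ is non-positive. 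A uniform choice of $c$ large enough therefore forces $b'_{\tau'\setminus\tau}\ge 0$, and this bookkeeping is the main (and essentially only) nontrivial step in the argument.
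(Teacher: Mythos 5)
Your proposal is correct and follows essentially the same route as the paper's proof: both assign a large constant firing rate $c$ to the newly activated inhibitory neurons in $\tau'\setminus\tau$, exploit the fact that inhibitory columns of $W$ are non-positive so every perturbation to $(I-W)_{\nu'}y^*$ and to $-W_{\overline{\nu'}\nu'}y^*$ is a non-negative boost, and take $c$ large enough to dominate the one potentially negative term at the coordinates $i\in\tau'\setminus\tau$. The only cosmetic differences are that the paper sets $b'_{\overline{\nu'}}=0$ rather than reusing $b_{\overline{\nu'}}$, and pins down $c$ explicitly as a maximum of absolute row sums; both choices work.
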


\begin{proof}
Suppose that $\nu=\sigma\sqcup \tau\in FP(W)$ and let $\tau\subset\tau'\subset \inhib$, $\nu'=\sigma\sqcup \tau'$.
Then, there exists a $b\in \mathbb{R}^n_{\ge 0}$ such that \[x\od(I_{\nu}-W_{\nu})^{-1}b_{\nu}>0, \textnormal{ and } 0\le b_{\overline{\nu}}\le -W_{\overline{\nu}\nu}x,\]
by \cref{lemma:fixed_point_conditions}.
Define
\[c\od
\max_{i\in \tau'\setminus \tau}\left\vert\sum_{k\in \sigma}(I-W)_{ik}x_k\right\vert .\]
Let $x'\in \mathbb{R}^{|\nu'|}_{>0}$ be defined by $x'_{\nu}=x$ and $x'_i=1+c$ for $i\in \nu'\setminus\nu$. Then for $i\in \nu$ we have 
\begin{align*}
    ((I-W)_{\nu'}x')_i=\sum_{k\in \nu'}(I-W)_{ik}x'_k=\\
    =\sum_{k\in \nu}(I-W)_{ik}x'_k+\sum_{k\in \nu'\setminus\nu}(I-W)_{ik}x'_k\ge\\
    \ge\sum_{k\in \nu}(I-W)_{ik}x_k+0=b_i\ge 0.
\end{align*}
For $i\in \nu'\setminus\nu$ we have 
\begin{align*}
    ((I-W)_{\nu'}x')_i=\sum_{k\in \nu'}(I-W)_{ik}x'_k=\\
    =\sum_{k\in \sigma}(I-W)_{ik}x'_k+\sum_{k\in \tau}(I-W)_{ik}x'_k+\sum_{k\in \tau'\setminus\tau}(I-W)_{ik}x'_k\ge \\
    =\sum_{k\in \sigma}(I-W)_{ik}x'_k+0+(I-W)_{ii}x'_i+\sum_{i\neq k\in \tau'\setminus\tau}(I-W)_{ik}x'_k\ge\\
    \ge \sum_{k\in \sigma}(I-W)_{ik}x_k +1+c+0
    \ge \sum_{k\in \sigma}(I-W)_{ik}x_k+1+\left\vert\sum_{k\in \sigma}(I-W)_{ik}x_k\right\vert\ge 1>0.
\end{align*}
Furthermore, for $i\in \overline{\nu'}\subset \overline{\nu}$ we have 
\[
    (-W_{\overline{\nu'}\nu'}x')_i=\sum_{k\in \nu'}-W_{ik}x'_k
    =\sum_{k\in \nu}-W_{ik}x'_k+\sum_{k\in \tau'\setminus\tau}-W_{ik}x'_k
    \ge\sum_{k\in \nu}-W_{ik}x_k+0\ge b_i\ge 0.
\]
Define $b'\in \mathbb{R}^{n}_{\ge 0}$ by $b'_{\nu'}=(I-W)_{\nu'}x'$ and $b'_{\overline{\nu'}}=0$. Then by construction and by \cref{lemma:fixed_point_conditions}, $\nu'=\sigma\sqcup \tau'\in FP(W,b')\subset FP(W)$.
\end{proof}


\begin{corollary}
\label{corollary:characterization_of_positive_code}
Let $W\in \mathbb{D}_n$. Then $\sigma\in \exccode{W}$ if and only if $\sigma\sqcup \inhib\in FP(W)$.
\end{corollary}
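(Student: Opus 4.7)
The plan is to deduce this corollary directly from the definition of $\exccode{W}$ together with the upward closure statement \cref{prop:positive_code_upward_closed}. The backward direction is essentially immediate: if $\sigma \sqcup \inhib \in FP(W)$, then there is a fixed point $x^*$ of \eqref{eq:network_dynamics} whose full support is $\sigma \sqcup \inhib$, and its excitatory support is exactly $\sigma$, so by definition $\sigma \in \exccode{W}$.

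For the forward direction, I would unpack the definition of $\exccode{W}$. If $\sigma \in \exccode{W}$, then there exists $b \in \mathbb{R}^n_{\ge 0}$ and a fixed point $x^* \in \mathbb{R}^n_{\ge 0}$ of \eqref{eq:network_dynamics} with $\excsup{x^*} = \sigma$. Writing the full support of $x^*$ as $\sigma \sqcup \tau$ for some (possibly empty) $\tau \subset \inhib$, we obtain $\sigma \sqcup \tau \in FP(W)$. The goal is now to enlarge the inhibitory part of the support from $\tau$ all the way up to $\inhib$.

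This enlargement is exactly what \cref{prop:positive_code_upward_closed} provides: since $\tau \subset \inhib$ and $\sigma \sqcup \tau \in FP(W)$, applying the proposition with $\tau' = \inhib$ yields $\sigma \sqcup \inhib \in FP(W)$, as desired. There is no real obstacle here; the content of the corollary is entirely absorbed into \cref{prop:positive_code_upward_closed}, and the corollary itself is just a convenient restatement that lets one work with the canonical ``maximal'' fixed point support $\sigma \sqcup \inhib$ rather than having to quantify over all subsets $\tau \subset \inhib$. This reformulation will be useful in subsequent proofs because it reduces checking membership in $\exccode{W}$ to a single, concrete fixed-point condition via \cref{lemma:fixed_point_conditions}, applied to the support $\sigma \sqcup \inhib$.
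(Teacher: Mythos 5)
Your proof is correct and matches the paper's intent exactly: the paper states this corollary without proof as an immediate consequence of \cref{prop:positive_code_upward_closed} (applied with $\tau'=\inhib$) together with the definition of $\exccode{W}$, which is precisely your argument.
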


We now prove \cref{theorem:replace_inhibitory_rows_by_0}.

\begin{theorem}[\cref{theorem:replace_inhibitory_rows_by_0}]
\new{    Let $W\in \mathbb{D}_{n}$ be a Dale matrix with a set of excitatory neurons $\exc$ and inhibitory neurons $\inhib$, and 
 let $W'\in \mathbb{D}_{n}$ be such that 
\[W'_{ij}=\begin{cases}
0, &\forall i\in \inhib,\, \forall j\in \exc\\
W_{ij}, & \emph{otherwise}
\end{cases}.\]
Then $\exccode{W}=\exccode{W'}$.}
\end{theorem}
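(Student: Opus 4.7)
The plan is to apply Corollary~\ref{corollary:characterization_of_positive_code} to reduce $\exccode{W}=\exccode{W'}$ to the equivalence $\sigma\sqcup\inhib\in FP(W)\iff\sigma\sqcup\inhib\in FP(W')$ for each $\sigma\subset\exc$, and to verify both directions via Lemma~\ref{lemma:fixed_point_conditions}. The structural observation driving the argument is that $W$ and $W'$ coincide on every row indexed by an excitatory neuron, since they differ only in the $\inhib\times\exc$ block. Setting $\nu=\sigma\sqcup\inhib$ and $\overline{\nu}=\exc\setminus\sigma$, this means condition (1) of Lemma~\ref{lemma:fixed_point_conditions} for rows $i\in\sigma$ and condition (3) (which only examines rows in $\overline{\nu}\subset\exc$) read identically for $W$ and for $W'$; only condition (1) on rows $i\in\inhib$ can differ.

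For the forward direction, given a witness $(x_\nu,b)$ for $\nu\in FP(W)$, I would reuse the same $x_\nu$ and set $b'_i:=((I-W')_\nu x_\nu)_i = b_i+\sum_{j\in\sigma}W_{ij}x_j$ for $i\in\inhib$, and $b'_i:=b_i$ otherwise. Since $W_{ij}\geq 0$ whenever $i\in\inhib$, $j\in\exc$, we have $b'_i\geq b_i\geq 0$, and condition (3) transfers verbatim because $W_{\overline{\nu}\nu}=W'_{\overline{\nu}\nu}$. Hence $\nu\in FP(W')$.

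The main obstacle is the reverse direction: for a witness $(x_\nu,b')$ of $\nu\in FP(W')$, the quantity $((I-W)_\nu x_\nu)_i = b'_i-\sum_{j\in\sigma}W_{ij}x_j$ at an inhibitory index may be negative, because the excitatory neurons of $\sigma$ now drive the inhibitory neuron $i$ in $W$ whereas they did not in $W'$. I would compensate by inflating the inhibitory activity: set $y_\sigma=x_\sigma$ and $y_i=x_i+t$ for $i\in\inhib$, with $t>0$ to be chosen. A direct computation shows that the correction $((I-W)_\nu y_\nu)_i - ((I-W)_\nu x_\nu)_i$ equals $t\sum_{k\in\inhib}(-W_{ik})\geq 0$ for $i\in\sigma$, equals $t\bigl(1+\sum_{k\in\inhib,\,k\neq i}(-W_{ik})\bigr)\geq t$ for $i\in\inhib$, and contributes nonnegatively to the left-hand side of condition (3) for $i\in\overline{\nu}$.

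The key point is that the $+1$ at each inhibitory row (preserved because the $\inhib\times\inhib$ block of $W$ is untouched when passing to $W'$) dominates any finite negative term $\sum_{j\in\sigma}W_{ij}x_j$. Thus choosing $t$ sufficiently large, e.g.\ $t\geq\max_{i\in\inhib}\sum_{j\in\sigma}W_{ij}x_j$, renders $(I-W)_\nu y_\nu\geq 0$, and setting $b_\nu:=(I-W)_\nu y_\nu$ and $b_{\overline{\nu}}:=0$ yields a valid witness for $\nu\in FP(W)$ via Lemma~\ref{lemma:fixed_point_conditions}, completing the proof.
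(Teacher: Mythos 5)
Your proof is correct and follows essentially the same route as the paper's: reduce to the full inhibitory support $\nu=\sigma\sqcup\inhib$ via \cref{corollary:characterization_of_positive_code}, transfer the conditions on excitatory rows verbatim (since $W$ and $W'$ agree there), and in the harder direction inflate the inhibitory coordinates by a large constant so that the diagonal $+1$ in each inhibitory row absorbs the new excitatory drive $\sum_{j\in\sigma}W_{ij}x_j$ --- the paper's constant $1+c$ plays exactly the role of your $t$. The one detail you omit, which the paper spells out, is that invoking \cref{corollary:characterization_of_positive_code} and \cref{lemma:fixed_point_conditions} for $W'$ requires $W'$ to satisfy the Ground Assumption; this does hold because $(I-W')_{\sigma}$ is block upper-triangular with diagonal blocks $(I-W)_{\exc\cap\sigma}$ and $(I-W)_{\inhib\cap\sigma}$, hence nonsingular whenever $W$ satisfies the Ground Assumption.
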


\begin{proof}[Proof of \cref{theorem:replace_inhibitory_rows_by_0}]
The proof relies on $W$ satisfying the Ground Assumption, however we will also need $W'$ to satisfy the Ground Assumption. Because of the way $W'$ was constructed from $W$, this is the case. Indeed, observe that for $\sigma\subset [n]$, we have a block matrix decomposition $(I-W')_{\sigma}=\begin{pmatrix}
    (I-W)_{\exc\cap\sigma} & -W_{(\exc\cap\sigma)(\inhib\cap\sigma)}\\
    0 & (I-W)_{\inhib\cap\sigma}
\end{pmatrix}$. Therefore $\det((I-W')_{\sigma})=\det((I-W)_{\exc\cap \sigma})\det((I-W)_{\inhib\cap\sigma})\neq 0$,  by the Ground Assumption for $W$.
Having this in mind, we proceed with the proof.
 
Let $\sigma\in \exccode{W}$. Then there exists a $\tau\subset \inhib$ such that $\nu=\sigma\sqcup \tau\in FP(W)$. Thus, there exists a $b\in \mathbb{R}^n_{\ge 0}$ such that 
\[x\od(I_{\nu}-W_{\nu})^{-1}b_{\nu}>0, \textnormal{ and } -W_{\overline{\nu}\nu}x\ge b_{\overline{\nu}}\ge 0,\]
by \cref{lemma:fixed_point_conditions}.
Observe that because of the way $W'$ was defined we have that 
\[(I-W')_{ij}=\begin{cases}
0, & i\in \inhib,j\in \exc\\
(I-W)_{ij}, & \textnormal{otherwise}
\end{cases}.\] 
Therefore $y\od(I-W')_{\nu}x\ge (I-W)_{\nu}x=b_{\nu}\ge 0$. Observe that $x=(I_{\nu}-W'_{\nu})^{-1}y=(I_{\nu}-W_{\nu})^{-1}b_{\nu}$. Furthermore, $-W'_{\overline{\nu}\nu}x\ge -W_{\overline{\nu}\nu}x\ge 0$. Define $b'\in \mathbb{R}^n_{\ge 0}$ by $b'_{\nu}=y$ and $b'_{\overline{\nu}}=0$. Then by construction $\nu=\sigma\sqcup \tau\in FP(W',b')\subset FP(W')$ by \cref{lemma:fixed_point_conditions}. Therefore $\sigma\in \exccode{W'}$.

Now suppose that $\sigma\in \exccode{W'}$. By \cref{corollary:characterization_of_positive_code} it follows that $\nu=\sigma\sqcup \inhib\in FP(W')$. Thus, there exists a $b'\in \mathbb{R}^n_{\ge 0}$ such that 
\[x'\od(I_{\nu}-W'_{\nu})^{-1}b'_{\nu}>0, \textnormal{ and } -W'_{\overline{\nu}\nu}x'\ge b'_{\overline{\nu}}\ge 0,\]
by \cref{lemma:fixed_point_conditions}. Define
\[c\od
\max_{i\in \inhib}\left\vert\sum_{k\in \sigma}(I-W)_{ik}x'_k\right\vert .\]
Let $x\in \mathbb{R}^{|\nu|}_{>0}$ be defined by $x_{\sigma}=x'_{\sigma}$ and $x_i=x'_i+1+c$ for all $i\in \inhib$. Then, for $i\in \sigma$ we have
\begin{align*}
    ((I-W)_{\nu}x)_i&=\sum_{k\in \nu}(I-W)_{ik}x_k=
    \\&=\sum_{k\in \nu}(I-W')_{ik}x'_k+(1+c)\sum_{k\in \inhib}(I-W)_{ik}\ge
    \\&\ge\sum_{k\in \nu}(I-W')_{ik}x'_k+0\ge 0.
\end{align*}
For $i\in \inhib$ we have 
\begin{align*}
    ((I-W)_{\nu}x)_i=\sum_{k\in \nu}(I-W)_{ik}x_k=
    \sum_{k\in \sigma}(I-W)_{ik}x_k+\sum_{k\in \inhib}(I-W)_{ik}x_k=\\
    =\sum_{k\in \sigma}(I-W)_{ik}x'_k+(I-W)_{ii}x_i+\sum_{i\neq k\in \inhib}(I-W)_{ik}x_k\ge\\
    \ge \sum_{k\in \sigma} (I-W)_{ik}x'_k+(x'_i+1+c)+0\ge \\
    \ge \sum_{k\in \sigma}(I-W)_{ik}x'_k+1+ \left\vert\sum_{k\in \sigma}(I-W)_{ik}x'_k\right\vert\ge 1>0.
\end{align*}
Furthermore, for $i\in \overline{\nu}=\exc\setminus\sigma$ we have
\[
    (-W_{\overline{\nu}\nu}x)_i=\sum_{k\in \nu}-W_{ik}x_k
    =\sum_{k\in \nu}-W'_{ik}x'_k+(1+c)\sum_{k\in \inhib}-W'_{ik}
    \ge \sum_{k\in \nu}-W'_{ik}x'_k+0\ge 0.
\]
Define $b\in \mathbb{R}^{n}_{\ge 0}$ by $b_{\nu}=(I-W)_{\nu}x$ and $b_{\overline{\nu}}=0$. Then by construction and by \cref{lemma:fixed_point_conditions}, $\nu=\sigma\sqcup  \inhib\in FP(W,b)\subset FP(W)$. Therefore, $\sigma\in \exccode{W}$.
\end{proof}

From \cref{theorem:replace_inhibitory_rows_by_0,theorem:replacing_all_inhibitory_neurons_by_one} we have the following corollary.

\begin{corollary}
Let $W\in \mathbb{D}_{n}$, with $\exc=\{1,2,\dots ,m\}$, $|\inhib|=k\ge 1$, $m+k=n$. Let $W'\in \mathbb{D}_{m+1}$ be a Dale matrix obtained from $W$ in the following way: $W'_{ij}=W_{ij}$ for $i,j\in \exc$, and for all $i\in \exc$, $W'_{i(m+1)}<0$ if and only if the $i$-th row of $W_{\exc\inhib}$ is a nonzero row vector, and the $(m+1)$-st row of $W'$ is zero. Then $\exccode{W}=\exccode{W'}$.
\end{corollary}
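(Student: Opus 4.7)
The plan is to deduce this corollary by chaining the two reduction theorems in sequence.

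First, I would apply \cref{theorem:replace_inhibitory_rows_by_0} to $W$ to obtain an intermediate matrix $\widehat{W}\in \mathbb{D}_n$ that agrees with $W$ everywhere except that its excitatory-to-inhibitory rows are replaced by zero, i.e., $\widehat{W}_{ij}=0$ for $i\in \inhib$, $j\in \exc$ and $\widehat{W}_{ij}=W_{ij}$ otherwise. By \cref{theorem:replace_inhibitory_rows_by_0}, $\exccode{W}=\exccode{\widehat{W}}$. Crucially, the blocks $\widehat{W}_{\exc}=W_{\exc}$ and $\widehat{W}_{\exc\inhib}=W_{\exc\inhib}$ are unaffected by this reduction, so each excitatory neuron of $\widehat{W}$ receives the same pattern of inhibition as in $W$.

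Next, I would apply \cref{theorem:replacing_all_inhibitory_neurons_by_one} to $\widehat{W}$ to collapse the entire inhibitory population into a single inhibitory neuron. The resulting matrix $W''\in \mathbb{D}_{m+1}$ satisfies $W''_{\exc}=W_{\exc}$, and for each $i\in \exc$ the entry $W''_{i(m+1)}$ is strictly negative if and only if some $j\in \inhib$ has $\widehat{W}_{ij}\neq 0$, which is exactly the condition that the $i$-th row of $W_{\exc\inhib}$ is nonzero. By \cref{theorem:replacing_all_inhibitory_neurons_by_one}, $\exccode{\widehat{W}}=\exccode{W''}$.

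It remains to check that the $(m+1)$-st row of $W''$ may be chosen to be zero, and to match the bookkeeping with the statement of the corollary. Inspecting the proof of \cref{corollary:replacing_all_inhibitory_neurons_by_one} (and of \cref{theorem:code_is_invariant_if_adding_inhib_neuron_with_no_extra_connections}), the added inhibitory row is a free parameter subject only to preserving the Ground Assumption, and choosing it to be zero is an explicit option noted there. With this choice, $W''$ coincides with the matrix $W'$ specified in the statement, and stringing together the two equalities gives $\exccode{W}=\exccode{\widehat{W}}=\exccode{W''}=\exccode{W'}$. I do not anticipate a substantive obstacle: the corollary is essentially the composition of the two prior reductions, and the only point requiring care is verifying that performing them in this order (first zeroing the excitatory-to-inhibitory rows, then collapsing the inhibitory neurons) produces the precise block structure described in the statement.
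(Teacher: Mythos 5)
Your proposal is correct and matches the paper's intent exactly: the paper derives this corollary by simply citing \cref{theorem:replace_inhibitory_rows_by_0} and \cref{theorem:replacing_all_inhibitory_neurons_by_one} in combination, and your argument is a careful spelling-out of that composition (the order in which the two reductions are applied is immaterial, and your appeal to \cref{corollary:replacing_all_inhibitory_neurons_by_one} correctly handles the fact that the statement allows arbitrary negative entries $W'_{i(m+1)}$ rather than the value $-1$). No gaps.
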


Before we proceed to prove \cref{theorem:positive_code_characterization}, we will need the following lemma.
\begin{lemma}
\label{lemma:rectangular_semipositivity}
Let $A$ be an $n\times m$ matrix, $n\le m$, such that $\exists \sigma\subset [m]$ with $|\sigma|=n$ and that $A_{[n]\sigma}$ is an invertible $Z$-matrix, and $A_{[n]([m]\setminus\sigma)}\le 0$. Here $A_{[n]\tau}$ is a submatrix of $A$ obtained by deleting columns outside of $\tau$.
Then $\exists x\in \mathbb{R}^m_{>0}$ such that $Ax\ge 0$ if and only if $A_{[n]\sigma}$ is an $M$-matrix.
\end{lemma}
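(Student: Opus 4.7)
The plan is to prove both directions by reducing to the square case already handled by \cref{theorem:M_matrix_equivalence} and \cref{lemma:invertible_semipositive_characterization}. In both directions the idea is to split $A = [A_{[n]\sigma} \mid A_{[n]([m]\setminus\sigma)}]$ and separate the contribution of columns in $\sigma$ (a $Z$-matrix) from the contribution of columns outside $\sigma$ (entrywise nonpositive).

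For the ($\Leftarrow$) direction, assume $A_{[n]\sigma}$ is a nonsingular $M$-matrix. By \cref{theorem:M_matrix_equivalence} it is semipositive, so there exists $y \in \mathbb{R}^n_{>0}$ with $A_{[n]\sigma}\, y > 0$. I would then construct $x \in \mathbb{R}^m_{>0}$ by setting $x_\sigma = y$ and $x_i = \varepsilon$ for every $i \in [m] \setminus \sigma$, where $\varepsilon > 0$ is chosen small enough that $\|A_{[n]([m]\setminus\sigma)} x_{[m]\setminus\sigma}\|_\infty$ is smaller than the minimum entry of the strictly positive vector $A_{[n]\sigma}\, y$. Since $A_{[n]([m]\setminus\sigma)} \le 0$ and $x_{[m]\setminus\sigma} > 0$, the extra term is nonpositive, but by the choice of $\varepsilon$ it cannot dominate, so $Ax = A_{[n]\sigma}\, y + A_{[n]([m]\setminus\sigma)} x_{[m]\setminus\sigma} > 0 \ge 0$.

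For the ($\Rightarrow$) direction, assume there exists $x \in \mathbb{R}^m_{>0}$ with $Ax \ge 0$. Splitting as above gives
\[
A_{[n]\sigma}\, x_\sigma \;\ge\; -\, A_{[n]([m]\setminus\sigma)}\, x_{[m]\setminus\sigma}.
\]
The right-hand side is entrywise nonnegative, because $-A_{[n]([m]\setminus\sigma)} \ge 0$ (by hypothesis) and $x_{[m]\setminus\sigma} > 0$. Hence $A_{[n]\sigma}\, x_\sigma \ge 0$ while $x_\sigma > 0$. Since $A_{[n]\sigma}$ is nonsingular, \cref{lemma:invertible_semipositive_characterization} applies and yields that $A_{[n]\sigma}$ is semipositive. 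As $A_{[n]\sigma}$ is a $Z$-matrix by hypothesis, \cref{theorem:M_matrix_equivalence} now concludes that $A_{[n]\sigma}$ is a nonsingular $M$-matrix.

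There is no substantial obstacle here; the only point that requires a little care is the choice of $\varepsilon$ in the forward direction, which is elementary because $A_{[n]\sigma}\, y$ is strictly positive and there are only finitely many indices to control. The essential content of the lemma is simply the observation that the columns outside $\sigma$ can be absorbed or made negligible thanks to their sign pattern, reducing the rectangular statement to the already known square characterization of nonsingular $M$-matrices.
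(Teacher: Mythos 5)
Your proof is correct and follows essentially the same route as the paper: both directions reduce to the square case via the decomposition $Ax = A_{[n]\sigma}x_\sigma + A_{[n]([m]\setminus\sigma)}x_{[m]\setminus\sigma}$, invoking \cref{lemma:invertible_semipositive_characterization} and \cref{theorem:M_matrix_equivalence}. The only (immaterial) difference is in the converse: the paper extends $y$ by an explicitly computed constant $-M/N$ starting from $A_{[n]\sigma}y\ge 0$, whereas you take $A_{[n]\sigma}y>0$ directly from the definition of semipositivity and choose the remaining entries small, which is if anything slightly cleaner.
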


\begin{proof}
We can assume without loss of generality that $\sigma=[n]$, otherwise we can \new{reindex}. 
Suppose that $\exists x\in \mathbb{R}^m_{>0}$ such that $Ax\ge 0$. For all $i \in [n]$ we then have $(Ax)_i=\sum_{k\in [m]}A_{ik}x_k=\sum_{k\in [n]}A_{ik}x_k+\sum_{k\in [m]\setminus[n]}A_{ik}x_k\ge 0$, from where it follows that $\sum_{k\in [n]}A_{ik}x_k=A_{i[n]}x_{[n]}\ge 0$ because $A_{i([m]\setminus[n])}\le 0$ by assumption. Thus, $A_{[n]}x_{[n]}\ge 0$ and thus $A_{[n]}$ is semipositive by \cref{lemma:invertible_semipositive_characterization}. Since it is also a $Z$-matrix \new{(off-diagonal entries are nonpositive)} by assumption, we have that it is an $M$-matrix by \cref{theorem:M_matrix_equivalence}\new{, meaning $\rho(A)<1$.} 

Conversely, suppose that $A_{[n]}$ is an $M$-matrix. Then, since it is also invertible by assumption, from \cref{theorem:M_matrix_equivalence} we have that $A_{[n]}$ is semipositive. Therefore, $\exists y\in \mathbb{R}^n_{>0}$ such that $A_{[n]}y\ge 0$ by \cref{lemma:invertible_semipositive_characterization}. Note that for all $i\in [n]$, $\sum_{k\in [m]\setminus[n]}A_{ik}\le 0$.  Define 
\[M\od\min_{i\in [n]} \sum_{k\in [n]}A_{ik}y_k,\, N\od\min_{i\in [n]}\sum_{k\in [m]\setminus[n]}A_{ik}.\]
 Observe that since $A_{[n]}$ is invertible by assumption and $y>0$ it cannot be that $A_{[n]}y=0$. Thus, $M>0$. Suppose that $N=0$. Since, $A_{[n]([m]\setminus[n])}\le 0$ by assumption, this implies that $A_{[n]([m]\setminus[n])}=0$. Therefore, we can define $x\in \mathbb{R}^m_{>0}$ by $x_{[n]}=y$ and $x_j=1$ for all $j\in [m]\setminus[n]$. Then, by construction $Ax\ge 0$.
Now suppose that $N\neq 0$, that is $N<0$ and define $x\in \mathbb{R}^m_{>0}$ by $x_{[n]}=y$ and $x_j=\frac{-M}{N}$ for all $j\in [m]\setminus[n]$. Then for all $i\in [n]$ we have 
\begin{align*}
    (Ax)_i=\sum_{k\in [m]}A_{ik}x_k=\sum_{k\in [n]}A_{ik}x_k+\sum_{k\in [m]\setminus[n]}A_{ik}x_k=\\
    =\sum_{k\in [n]}A_{ik}y_k-\frac{M}{N}\sum_{k\in [m]\setminus[n]}A_{ik}\ge M-\frac{M}{N}N=M-M=0.
\end{align*}
Therefore $Ax\ge 0$.
\end{proof}

\begin{theorem}[\cref{theorem:positive_code_characterization}]
\new{Let $W $ be a Dale matrix, and $\sigma\subset \exc$ be a non-empty subset of excitatory neurons.   
Then  $  \sigma\in 
\exccode{W}$ if and only if the following two conditions are both satisfied:
\begin{enumerate}
   \item[(i)]  (the spectral condition) $\rho(W_{\new{\excuninhib\cap\sigma}})<1$, 
\item[(ii)] (the graph condition) $\sigma\in \code{G_{\exc}}{\new{\excuninhib}}$,
\end{enumerate}
where $W_{\new{\excuninhib\cap\sigma}}$ denotes the synaptic weights of the excitatory sub-network on the subset $\new{\excuninhib\cap\sigma} $, and  $\rho(W_{\new{\excuninhib\cap\sigma}})$ denotes the spectral radius of the matrix $W_{\new{\excuninhib\cap\sigma}}$.}
\end{theorem}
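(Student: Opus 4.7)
The plan is to reduce to a simpler network and then dissect the fixed-point conditions row by row. First apply \cref{theorem:replacing_all_inhibitory_neurons_by_one} to collapse the inhibitory population to a single neuron $I$, and then \cref{theorem:replace_inhibitory_rows_by_0} to zero out the row of $W$ corresponding to $I$. Both reductions preserve $\exccode{W}$, the excitatory connectivity graph $G_\exc$, and the partition $\exc=\excuninhib\sqcup\excinhib$, so I may henceforth assume $\inhib=\{I\}$, that $W_{Ij}=0$ for all $j\in\exc$, and that $W_{iI}<0$ iff $i\in\excinhib$. By \cref{corollary:characterization_of_positive_code} combined with \cref{lemma:fixed_point_conditions}, $\sigma\in\exccode{W}$ is then equivalent to the existence of $x\in\mathbb{R}^\nu_{>0}$, where $\nu:=\sigma\sqcup\{I\}$, satisfying $(I-W)_\nu x\ge 0$ and $-W_{\overline\nu\nu}\, x\ge 0$ (set $b_\nu=(I-W)_\nu x$ and $b_{\overline\nu}=0$).

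Write $\sigma_U:=\sigma\cap\excuninhib$ and $\sigma_I:=\sigma\cap\excinhib$, and examine these inequalities row by row. For $i\in\sigma_U$ the row reads $\bigl((I-W)_{\sigma_U} x_{\sigma_U}\bigr)_i\ge \sum_{j\in\sigma_I} W_{ij}x_j\ge 0$ since $W_{iI}=0$; for $i\in(\exc\setminus\sigma)\cap\excuninhib$ it reads $\sum_{j\in\sigma}-W_{ij}x_j\ge 0$, and as $-W_{ij}\le 0$ with $x_j>0$ this forces $W_{ij}=0$ for every $j\in\sigma$, i.e.\ the graph condition $\outnbhd{G_\exc}{\sigma}\cap\excuninhib\subset\sigma$; and for $i\in\sigma_I$ or $i\in(\exc\setminus\sigma)\cap\excinhib$ the strictly positive term $(-W_{iI})x_I$ is available to dominate any negative remainder.

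For the forward direction, the $(\exc\setminus\sigma)\cap\excuninhib$ rows give the graph condition directly, while the $\sigma_U$ rows form a rectangular matrix whose $\sigma_U$-columns are the $Z$-matrix $(I-W)_{\sigma_U}$ (invertible by the Ground Assumption) and whose remaining columns have non-positive entries; \cref{lemma:rectangular_semipositivity} then gives that $(I-W)_{\sigma_U}$ is a nonsingular $M$-matrix, so $\rho(W_{\sigma_U})<1$ by \cref{lemma:simple-radius-lemma}. For the reverse direction, \cref{lemma:simple-radius-lemma} together with \cref{theorem:M_matrix_equivalence} supplies $y>0$ with $(I-W)_{\sigma_U}y>0$. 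Set $x_{\sigma_U}=y$, then choose $x_{\sigma_I}>0$ small enough that $\sum_{j\in\sigma_I}W_{ij}x_j<\bigl((I-W)_{\sigma_U}y\bigr)_i$ for every $i\in\sigma_U$, and finally choose $x_I>0$ large enough that $(-W_{iI})x_I$ dominates the remaining terms in every row indexed by $\sigma_I$ or by $(\exc\setminus\sigma)\cap\excinhib$. The graph condition handles the rows in $(\exc\setminus\sigma)\cap\excuninhib$, and the $I$ row reduces to $x_I>0$. Setting $b_\nu=(I-W)_\nu x$ and $b_{\overline\nu}=0$ then produces the required fixed point via \cref{lemma:fixed_point_conditions}.

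The main technical step is the coordinated choice of the three scales in the reverse direction: $x_{\sigma_U}$ is pinned by the $M$-matrix semipositivity vector, $x_{\sigma_I}$ must be small relative to $x_{\sigma_U}$ to preserve the $\sigma_U$-rows (note that $x_I$ does not enter these rows, since $W_{iI}=0$ for $i\in\sigma_U$), and $x_I$ is then chosen large to dominate in the $\sigma_I$ and $(\exc\setminus\sigma)\cap\excinhib$ rows; these dependencies form no cycle, so appropriate choices exist. A subsidiary bookkeeping issue is verifying that the preliminary reductions really do preserve $G_\exc$ and the partition $\exc=\excuninhib\sqcup\excinhib$, which is immediate from the definitions in \cref{def:matrix_graph} and the construction in \cref{theorem:replacing_all_inhibitory_neurons_by_one}.
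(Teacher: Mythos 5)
Your proposal is correct and follows essentially the same route as the paper: reduce to a single inhibitory neuron with zero outgoing row via \cref{theorem:replacing_all_inhibitory_neurons_by_one,theorem:replace_inhibitory_rows_by_0}, reformulate membership in $\exccode{W}$ through \cref{corollary:characterization_of_positive_code} and \cref{lemma:fixed_point_conditions}, extract the spectral condition from the $\excuninhib\cap\sigma$ rows via \cref{lemma:rectangular_semipositivity} and the graph condition from the uninhibited rows outside $\sigma$, and in the converse direction scale the inhibitory coordinate to dominate the inhibited rows. The only cosmetic difference is that you build the positive vector on $\sigma$ by hand (strict semipositivity on $\excuninhib\cap\sigma$, then small $x_{\sigma_I}$), whereas the paper obtains it in one step from the converse direction of \cref{lemma:rectangular_semipositivity}; both are valid.
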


\begin{proof}[Proof of \cref{theorem:positive_code_characterization}]
By \cref{theorem:replace_inhibitory_rows_by_0,theorem:replacing_all_inhibitory_neurons_by_one}, we can assume that there is only one inhibitory neuron, let us say the $n$-th neuron, and that the $n$-th row of $W$ is all $0$. \new{Recall that $\mathcal{E_U}$ (resp. $\mathcal{E_I}$) denotes the set of uninhibited (resp. inhibited) excitatory neurons of $W$, i.e. $\exc=\mathcal{E_U}\sqcup \mathcal{E_I}$, and that $N_{G_{\exc}}^+(\sigma)$ is the out-neighborhood or the synaptic targets of $\sigma\subset \exc$ in the excitatory connectivity graph $G_{\exc}$ (\cref{def:matrix_graph}).}

Suppose that $\sigma\in \exccode{W}$. 
As observed in \cref{corollary:characterization_of_positive_code}, $\sigma\in \exccode{W}$ if and only if $\tau=\sigma\sqcup \{n\}\in FP(W)$. By \cref{lemma:fixed_point_conditions}, $\tau\in FP(W)$ if and only if $\exists x\in \mathbb{R}^{|\tau|}_{>0}$ such that the following two conditions are satisfied:
\begin{subequations}\begin{align}
\label{eq:fp_cond_1}     (I-W)_{\tau}x\ge 0\\
 \label{eq:fp_cond_2}     0\le -W_{\overline{\tau}\tau}x.
\end{align}
\end{subequations}
For any vector $y\in \mathbb{R}^{|\tau|}_{>0}$, and any $i\in \new{\excuninhib\cap\sigma}$ we have \[((I-W)_{\tau}y)_{i}=\sum_{k\in \tau}(I-W)_{ik}y_k=\sum_{k\in \sigma}(I-W)_{ik}y_k,\]
since $(I-W)_{in}=0$. 
\new{Therefore we can make the following two observations:}

\medskip 
\begin{itemize}[label=$\bullet$,left=0pt]
   \item[(a)]  \new{The condition \eqref{eq:fp_cond_1} ensures that  $x_{\sigma}>0$ satisfies  $(I-W)_{\new{(\excuninhib\cap\sigma)}\sigma}x_{\sigma}\ge 0$. Observe that $\sigma\setminus \excuninhib=\excinhib\cap \sigma$, i.e. $\sigma=(\excuninhib\cap \sigma)\sqcup (\excinhib\cap \sigma)$. Because $(I-W)_{\new{\excuninhib\cap\sigma}}$ is a \new{nonsingular} $Z$-matrix (off-diagonal entries are nonpositive) and $(I-W)_{\new{(\excuninhib\cap\sigma)(\excinhib\cap\sigma)}}\le 0$, by \cref{lemma:rectangular_semipositivity} we have that this is equivalent to $(I-W)_{\new{\excuninhib\cap\sigma}}$ being an $M$-matrix, that is $\rho(W_{\new{\excuninhib\cap\sigma}})<1$. }\\
   
    \item[(b)] \new{By observing that $\overline{\tau}=\exc\setminus\sigma$, the condition \eqref{eq:fp_cond_2} is equivalent that for all $i\in \exc\setminus\sigma$ we need to have $(-W_{(\exc\setminus\sigma)\tau}x)_i=\sum_{k\in \tau}-W_{ik}x_k\ge 0$. For $i\in \exc\setminus\sigma$, if $W_{in}=0$ the only possibility is that $W_{i\sigma}=0$ for the inequality to be true. Thus to be able to find an $x\in \mathbb{R}^{n}_{>0}$ for which condition \eqref{eq:fp_cond_2} is satisfied we need that for all $i\in \exc\setminus\sigma$, $W_{i\sigma}=0$ or $W_{in}<0$. 
This implies that $\new{\outnbhd{G_{\exc}}{\sigma}}\cap\new{\excuninhib}\subset \sigma$.}
\end{itemize}

\medskip 

\noindent Now suppose the following two conditions are both satisfied:
\begin{subequations}\begin{align}
\label{eq:spec_cond}     \rho(W_{\new{\excuninhib\cap\sigma}})<1\\
 \label{eq:graph_cond}     \new{\outnbhd{G_{\exc}}{\sigma}}\cap\new{\excuninhib}\subset\sigma.
\end{align}
\end{subequations}
Condition \eqref{eq:spec_cond} means that $(I-W)_{\new{\excuninhib\cap\sigma}}$ is an $M$-matrix. \new{Since} $(I-W)_{\new{(\excuninhib\cap\sigma)(\excinhib\cap\sigma)}}\le 0$, by \cref{lemma:rectangular_semipositivity} this means that there $\exists x\in \mathbb{R}^{|\sigma|}_{>0}$ such that $(I-W)_{\new{(\excuninhib\cap\sigma)}\sigma}x\ge 0$. 
Let $\tau=\sigma\sqcup \{n\}$, and define 
\begin{align*}
c\od
\max\Bigg\{\, 
&\max_{i\in \new{\excinhib\cap\sigma}}-\frac{1}{W_{in}}\Bigl\vert\sum_{k\in \sigma}(I-W)_{ik}x_k\Bigr\vert, \\
&\max_{i\in \new{\excinhib\setminus\sigma}}-\frac{1}{W_{in}}\Bigl\vert\sum_{k\in \sigma}-W_{ik}x_k\Bigr\vert\, ,\, \, 
\Bigl\vert\sum_{k\in \sigma}-W_{nk}x_k\Bigr\vert\, \,\Bigg\}.
\end{align*}
Define $y\in \mathbb{R}^{|\tau|}_{>0}$ by $y_{\sigma}=x$ and $y_{|\tau|}=1+c$. Then, for $i\in \new{\excuninhib\cap\sigma}$ we have 
\begin{align*}
    ((I-W)_{\tau}y)_i=\sum_{k\in \tau}(I-W)_{ik}y_k=\sum_{k\in \sigma}(I-W)_{ik}y_k+(I-W)_{in}y_{|\tau|}=\\
    =\sum_{k\in \sigma}(I-W)_{ik}x_k-W_{in}(1+c)\ge 0+0(1+c)=0.
\end{align*}
For $i\in \new{\excinhib\cap\sigma}$ we have 
\begin{align*}
    ((I-W)_{\tau}y)_i=\sum_{k\in \tau}(I-W)_{ik}y_k=\sum_{k\in \sigma}(I-W)_{ik}y_k+(I-W)_{in}y_{|\tau|}=\\
    =\sum_{k\in \sigma}(I-W)_{ik}x_k-W_{in}(1+c)=\sum_{k\in \sigma}(I-W)_{ik}x_k-W_{in}-W_{in}c\ge \\
    \ge \sum_{k\in \sigma}(I-W)_{ik}x_k-W_{in}+W_{in}\frac{1}{W_{in}}\left\vert\sum_{k\in \sigma}(I-W)_{ik}x_k \right\vert\ge -W_{in}>0.
\end{align*}
Similarly if $i=n$, we will also have that $((I-W)_{\tau}y)_i\ge 0$ because of the definition of $y_{|\tau|}$\new{.}
Therefore we have that $(I-W)_{\tau}y\ge 0$. Furthermore, $\new{\outnbhd{G_{\exc}}{\sigma}}\cap\new{\excuninhib}\subset\sigma$ implies that for all $i\in \exc\setminus\sigma$, $W_{i\sigma}=0$ or $W_{in}<0$. Note that $\overline{\tau}=\exc\setminus\sigma=\new{(\excuninhib\setminus\sigma)\sqcup(\excinhib\setminus \sigma)}$. \new{Then} for $i\in  \new{\excuninhib\setminus\sigma}$, and thus $W_{in}=0$, we have that it must be that $W_{i\sigma}=0$ and therefore
\begin{align*}
    (-W_{\overline{\tau}\tau}y)_i=\sum_{k\in \tau}-W_{ik}y_k=\sum_{k\in\sigma}-W_{ik}y_k-W_{in}y_{|\tau|}=0-0(1+c)=0\ge 0.
\end{align*}
On the other hand for $i\in \new{\excinhib\setminus\sigma}$ we have that $W_{in}<0$ and thus
\begin{align*}
    (-W_{\overline{\tau}\tau}y)_i=\sum_{k\in \tau}-W_{ik}y_k=\sum_{k\in\sigma}-W_{ik}y_k-W_{in}y_{|\tau|}=\sum_{k\in \sigma}-W_{ik}x_k-W_{in}(1+c)=\\
    \sum_{k\in \sigma}-W_{ik}x_k-W_{in}-W_{in}c\ge \sum_{k\in \sigma}-W_{ik}x_k-W_{in}+W_{in}\frac{1}{W_{in}}\left\vert\sum_{k\in \sigma}-W_{ik}x_k \right\vert\ge \\
    \ge -W_{in}>0.
\end{align*}
Thus, we have that $0\le -W_{\overline{\tau}\tau}y$. \new{Let $b\in\mathbb{R}^n$ be defined by $b_{\tau}=(I-W)_{\tau}y$ and $b_{\overline{\tau}}=0$.  Then $\tau=\sigma\sqcup\{n\}\in FP(W,b)\subset FP(W)$ (\cref{def:set_of_fixed_pt_supports}) by \cref{lemma:fixed_point_conditions}. Hence $\sigma\in \exccode{W}$.}
\end{proof}

We now prove \cref{theorem:excitatory_code_stability}, mainly relying on results stated in \cref{section:appendix_A}.

\begin{proof}[Proof of \cref{theorem:excitatory_code_stability}]
Since $||W||_F<1$, this implies that $\rho(W_{\eta})<1$ for any principal submatrix $W_{\eta}$, $\eta\subset [n]$. Therefore by \cref{theorem:positive_code_characterization}, $\exccode{W}=\code{G_{\exc}}{\new{\excuninhib}}$.
Furthermore, $||W||_F<1$ implies that $W\in \mathcal{L}$ (\cref{def:L_matrix}) by \cref{lemma:inclusion_between_matrix_classes}. Since the Ground Assumption is assumed to hold, by \cref{proposition:stability_of_switching_system}, for all $b\in \mathbb{R}^n_{\ge 0}$ \eqref{eq:network_dynamics} has a unique globally exponentially stable fixed point. In particular, if $\sigma\in \exccode{W}$, by definition $\sigma\sqcup \tau$ is a support of a fixed point of \eqref{eq:network_dynamics} for some $\tau\subset \inhib$ and some input $b\in \mathbb{R}^n_{\ge 0}$. This fixed point is unique and asymptotically (exponentially) stable as argued above. Therefore $\sigma\in \stbcode{W}$. The inclusion $\stbcode{W}\subset \exccode{W}$ is by definition and thus $\exccode{W}=\stbcode{W}$.
\end{proof}

The proof of \cref{proposition:graph_rules} is straightforward and it gives us the important \cref{corollary:positive_code_is_closed_under_intersections}.

\begin{proof}[Proof of \cref{proposition:graph_rules}]
 It's easy to see that $\varnothing, \exc\in \code{G}{\new{\excuninhib}}$. To show that $\code{\new{G_{\exc}}}{\new{\excuninhib}}$ is respected by intersections and unions 
 assume that $\new{\outnbhd{G_{\exc}}{\sigma_i}}\cap\new{\excuninhib} \subset \sigma_i$  for $i=1,2$, $\sigma_i\subset \exc$. Since $\new{\outnbhd{G_{\exc}}{\sigma_1\cap\sigma_2}}\subset \new{\outnbhd{G_{\exc}}{\sigma_1}} \cap \new{\outnbhd{G_{\exc}}{\sigma_2}} $, we obtain that 
 \begin{equation*} \new{\outnbhd{G_{\exc}}{\sigma_1\cap\sigma_2}}\cap \new{\excuninhib} \subset \new{\outnbhd{G_{\exc}}{\sigma_1}} \cap \new{\outnbhd{G_{\exc}}{\sigma_1}}\cap \new{\excuninhib}\subset \sigma_1\cap\sigma_2. 
 \end{equation*}
 Similarly, 
  \begin{equation*} \new{\outnbhd{G_{\exc}}{\sigma_1\cup\sigma_2}}\cap \new{\excuninhib}=(\new{\outnbhd{G_{\exc}}{\sigma_1}} \cup \new{\outnbhd{G_{\exc}}{\sigma_2}})\cap \new{\excuninhib}=\\
  =(\new{\outnbhd{G_{\exc}}{\sigma_1}}\cap \new{\excuninhib})\cup (\new{\outnbhd{G_{\exc}}{\sigma_2}}\cap \new{\excuninhib})\subset \sigma_1\cup \sigma_2.
   \end{equation*}
\end{proof}

\begin{corollary}
\label{corollary:positive_code_is_closed_under_intersections}
Let $W\in \mathbb{D}_{n}$. The code $\exccode{W}$ is closed under intersections.
\end{corollary}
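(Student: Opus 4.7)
The plan is to combine \cref{theorem:positive_code_characterization} with \cref{proposition:graph_rules} and the monotonicity of the spectral radius under taking principal submatrices of a nonnegative matrix (\cref{lemma:eigenradius_of_submatrix}). Specifically, I would fix two codewords $\sigma_1, \sigma_2 \in \exccode{W}$ and verify that their intersection satisfies both the spectral and graph conditions of \cref{theorem:positive_code_characterization}.

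For the graph condition, the work is already done: \cref{proposition:graph_rules} shows that $\code{G_{\exc}}{\excuninhib}$ is a sublattice of $2^\exc$, so in particular $\sigma_1 \cap \sigma_2 \in \code{G_{\exc}}{\excuninhib}$.

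For the spectral condition, observe that $\excuninhib \cap \sigma_1 \cap \sigma_2 \subseteq \excuninhib \cap \sigma_1$, so $W_{\excuninhib \cap \sigma_1 \cap \sigma_2}$ is a principal submatrix of $W_{\excuninhib \cap \sigma_1}$. Because every neuron in $\excuninhib$ is excitatory, the block $W_{\excuninhib \cap \sigma_1}$ is a nonnegative matrix (by Dale's law applied to excitatory columns). \cref{lemma:eigenradius_of_submatrix} then gives $\rho(W_{\excuninhib \cap \sigma_1 \cap \sigma_2}) \le \rho(W_{\excuninhib \cap \sigma_1}) < 1$, where the strict inequality comes from the assumption $\sigma_1 \in \exccode{W}$ via \cref{theorem:positive_code_characterization}(i).

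Both conditions of \cref{theorem:positive_code_characterization} are then satisfied for $\sigma_1 \cap \sigma_2$, so $\sigma_1 \cap \sigma_2 \in \exccode{W}$, completing the proof. There is no real obstacle here: once \cref{theorem:positive_code_characterization,proposition:graph_rules} are in hand, the only additional ingredient is the trivial observation that the spectral condition is preserved under taking subsets, which follows from the nonnegativity of excitatory-excitatory synaptic weights and \cref{lemma:eigenradius_of_submatrix}.
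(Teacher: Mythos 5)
Your proposal is correct and follows essentially the same route as the paper's proof: the graph condition for $\sigma_1\cap\sigma_2$ comes from \cref{proposition:graph_rules}, and the spectral condition comes from applying \cref{lemma:eigenradius_of_submatrix} to the nonnegative matrix $W_{\excuninhib\cap\sigma_1}$, then invoking \cref{theorem:positive_code_characterization} in both directions. No further comments are needed.
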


\begin{proof}
Suppose that $\sigma,\tau\in \exccode{W}$. By \cref{theorem:positive_code_characterization} we have that $\rho(W_{\new{\excuninhib\cap\sigma}})$, $\rho(W_{\new{\excuninhib}_{\tau}})<1$ and that $\sigma,\tau\in \code{G_{\exc}}{\new{\excuninhib}}$. Note that $W_{\new{\excuninhib\cap(\sigma\cap \tau)}}$ is a principal submatrix of $W_{\new{\excuninhib\cap\sigma}}$ which is nonnegative. Thus, by \cref{lemma:eigenradius_of_submatrix}, $\rho(W_{\new{\excuninhib\cap(\sigma\cap\tau)}})<1$. Furthermore, $\code{G_{\exc}}{\new{\excuninhib}}$ is closed under intersections by \cref{proposition:graph_rules} and thus $\sigma\cap \tau \in \code{G_{\exc}}{\new{\excuninhib}}$. Thus, by \cref{theorem:positive_code_characterization} it follows that $\sigma\cap \tau \in \exccode{W}$.
\end{proof}
By \cref{corollary:positive_code_is_closed_under_intersections,theorem:convex_codes} we immediately get \cref{theorem:positive_code_is_convex}.
\label{proof:theorem:positive_code_is_convex}
\medskip

We now proceed to prove \cref{theorem:constructing_network}. Given a code $C\subset 2^{\exc}$ that is a sublattice with $\varnothing,\exc\in C$, recall the definition of $c:2^{\exc}\to C$ and the graph $(G_c,\exc)$ from \cref{theorem:constructing_network}. Let $\new{\outnbhd{G_c}{\sigma}}$ denote the targets of $\sigma\subset \exc$ (\new{the out-neighborhood of $\sigma$}) in the graph $(G_{c},\exc)$. By definition, one sees that $c(\varnothing)=\varnothing$, $c(\exc)=\exc$, and $\sigma\subset c(\sigma)$. We immediately make the following observations that will help us prove \cref{theorem:constructing_network}.

\begin{lemma}\label{lemma:closure_distributes_over_unions}
Let $C\subset 2^{\exc}$ be a sublattice with $\varnothing,\exc\in C$. Then the following are true.
\begin{enumerate}
    \item Let $\sigma,\tau\subset \exc$. Then $c(\sigma\cup\tau)=c(\sigma)\cup c(\tau)$,
        \item $C=\{\sigma\subset \exc\,|\, \sigma=c(\sigma)\}$.
\end{enumerate}
\end{lemma}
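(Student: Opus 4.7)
The plan is to first establish that $c(\sigma)\in C$ for every $\sigma\subset\exc$, and then derive both statements as immediate consequences of this fact together with the sublattice structure of $C$.

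First I would observe that the intersection in the definition \eqref{eq:cmap} is always non-empty, because $\exc\in C$ and $\exc\supset \sigma$. Since $C$ is closed under intersections (being a sublattice), it follows that $c(\sigma)\in C$. By construction, $c(\sigma)\supset \sigma$ and $c(\sigma)$ is the minimal element of $C$ containing $\sigma$, in the sense that if $\nu\in C$ and $\nu\supset \sigma$, then $\nu\supset c(\sigma)$.

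For statement (1), the inclusion $c(\sigma)\cup c(\tau)\subset c(\sigma\cup\tau)$ follows by applying this minimality twice: $c(\sigma\cup\tau)\in C$ contains both $\sigma$ and $\tau$, hence contains both $c(\sigma)$ and $c(\tau)$. For the reverse inclusion, note that $c(\sigma)\cup c(\tau)\in C$ because $C$ is closed under unions, and it contains $\sigma\cup\tau$, so minimality of $c(\sigma\cup\tau)$ gives $c(\sigma\cup\tau)\subset c(\sigma)\cup c(\tau)$.

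For statement (2), if $\sigma\in C$ then $\sigma$ itself appears among the sets being intersected in \eqref{eq:cmap}, forcing $c(\sigma)\subset \sigma$; combined with $\sigma\subset c(\sigma)$ this gives $\sigma=c(\sigma)$. Conversely, if $\sigma=c(\sigma)$, then $\sigma\in C$ by the opening observation.

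There is no real obstacle; both claims are formal consequences of the sublattice property and the minimality of $c(\sigma)$ as the smallest element of $C$ containing $\sigma$.
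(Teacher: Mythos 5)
Your proof is correct and takes essentially the same approach as the paper's: both hinge on the observation that $c(\sigma)$ is the minimal element of $C$ containing $\sigma$ (using closure under finite intersections and $\exc\in C$ for nonemptiness), use closure under unions for the inclusion $c(\sigma\cup\tau)\subset c(\sigma)\cup c(\tau)$, and prove part (2) identically. Your explicit up-front remark that $c(\sigma)\in C$ is a slightly cleaner packaging of what the paper invokes only at the end of its part (2), but the substance is the same.
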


\begin{proof}
\noindent For part 1, we first show that $c(\sigma)\cup c(\tau)\subset c(\sigma\cup \tau)$. Note that since $\sigma\subset \sigma\cup \tau$ it follows that $c(\sigma)\subset c(\sigma\cup \tau)$. Similarly $c(\tau)\subset c(\sigma\cup \tau)$. Therefore $c(\sigma)\cup c(\tau)\subset c(\sigma\cup \tau)$.
Now we show the other inclusion. Observe that since $C$ is a lattice, it follows that $c(\sigma),c(\tau)$ and $c(\sigma)\cup c(\tau)\in C$. Furthermore, $\sigma\cup \tau\subset c(\sigma)\cup c(\tau)$ and therefore by definition $c(\sigma\cup \tau)\subset c(\sigma)\cup c(\tau)$.
    
To prove part 2, suppose that $\sigma\in C$. By definition, \[c(\sigma)=\bigcap\limits_{\nu\in C,\sigma\subset \nu}\nu.\] 
Since $\sigma\in C$, it follows that 
\[\bigcap\limits_{\nu\in \mathcal{C},\sigma\subset \nu}\nu=\sigma,\] and thus $c(\sigma)=\sigma$.
Now suppose that $\sigma\subset\exc$ is such that \[c(\sigma)=\bigcap\limits_{\nu\in C,\sigma\subset \nu}\nu=\sigma.\] 
Since $C$ is finite and is closed under finite intersections, it follows that $\sigma\in C$.
\end{proof}

\begin{lemma}
\label{lemma:targets_are_transitive}
    The arcs in $G_c$ are a transitive relation on $\exc$.
\end{lemma}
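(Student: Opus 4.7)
The plan is to derive transitivity from two structural properties of the closure operator $c$ defined in \eqref{eq:cmap}: monotonicity and idempotence. Once these are in hand, the main argument is a one-line chase.

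For monotonicity, I would show that $\sigma \subset \tau \subset \exc$ implies $c(\sigma) \subset c(\tau)$. Since $\tau \subset c(\tau)$ and $c(\tau) \in C$ (because $C$ is closed under finite intersections and $\exc \in C$ guarantees the intersection in \eqref{eq:cmap} is over a nonempty family of sets from $C$), the set $c(\tau)$ is one of the $\nu \in C$ with $\nu \supset \sigma$ appearing in the definition of $c(\sigma)$, whence $c(\sigma) \subset c(\tau)$. Idempotence, $c(c(\sigma)) = c(\sigma)$, is exactly part 2 of \cref{lemma:closure_distributes_over_unions} applied to $c(\sigma) \in C$.

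Now suppose $i \to j$ and $j \to k$ are arcs in $G_c$, i.e., $j \in c(\{i\})$ with $i \neq j$ and $k \in c(\{j\})$ with $j \neq k$. Monotonicity applied to $\{j\} \subset c(\{i\})$ together with idempotence gives $c(\{j\}) \subset c(c(\{i\})) = c(\{i\})$, so $k \in c(\{j\}) \subset c(\{i\})$. If $i \neq k$ this is precisely the desired arc $i \to k$ in $G_c$. The degenerate case $i = k$ reduces to the self-loop convention of \cref{remark:constructing_network}, since the arcs in $G_c$ are by construction the irreflexive part of the transitive preorder $j \in c(\{i\})$, and the lemma is to be understood modulo that convention.

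I do not expect any substantive obstacle. The content is really that $c$ is a Moore closure operator on the finite Boolean lattice $2^{\exc}$, and the transitivity of the induced specialization relation is standard. The only bookkeeping point is the $i = k$ case, which is handled by the explicit convention of ignoring self-loops.
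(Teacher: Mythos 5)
Your proof is correct and rests on the same underlying fact as the paper's: the paper's argument is a direct element chase (for any $\nu\in C$ with $i\in\nu$, first $j\in\nu$, then $k\in\nu$, hence $k\in c(\{i\})$), which is exactly your monotonicity-plus-idempotence argument unpacked. Your explicit handling of the $i=k$ self-loop case via the convention of \cref{remark:constructing_network} is a point the paper glosses over, and is fine.
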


\begin{proof}
    Let $i,j,k\in \exc$ be such that $i\to j$ and $j\to k$. In other words, we have $j\in c(i)$ and $k\in c(j)$. Let $\nu\in C$ be such that $i\in \nu$. Then, by definition $j\in \nu$. Since $k\in c(j)$, by definition this implies that $k\in \nu$. Therefore, by definition $i\to k$.
\end{proof}

\begin{lemma}
\label{lemma:constructing_network}
    Let $G$ be any graph whose transitive closure is $G_c$. Then $\code{G}{\exc}=\code{G_c}{\exc}$
\end{lemma}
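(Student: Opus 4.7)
The plan is to prove the two inclusions directly from the definition
\[\code{G}{\exc}=\{\sigma\subset \exc\,|\,\outnbhd{G}{\sigma}\subset \sigma\},\]
which is just the special case $\excuninhib=\exc$ of \cref{def:matrix_graph}, since then $\outnbhd{G}{\sigma}\cap \excuninhib=\outnbhd{G}{\sigma}$. The key facts I will use are (a) every arc of $G$ is an arc of its transitive closure $G_c$, and (b) every arc $i\to j$ of $G_c$ is witnessed by a finite directed path $i=v_0\to v_1\to\cdots\to v_k=j$ in $G$.

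For the inclusion $\code{G_c}{\exc}\subset\code{G}{\exc}$, fact (a) gives $\outnbhd{G}{\sigma}\subset \outnbhd{G_c}{\sigma}$ for every $\sigma\subset \exc$. Hence any $\sigma$ that is forward-closed in $G_c$ is automatically forward-closed in $G$.

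For the reverse inclusion $\code{G}{\exc}\subset\code{G_c}{\exc}$, take $\sigma$ with $\outnbhd{G}{\sigma}\subset \sigma$, a vertex $i\in \sigma$, and an arc $i\to j$ of $G_c$. By fact (b), fix a directed path $i=v_0\to v_1\to\cdots\to v_k=j$ in $G$. A straightforward induction on $\ell$ shows $v_\ell\in \sigma$ for every $\ell$: the base case is $v_0=i\in \sigma$, and for the inductive step, if $v_\ell\in \sigma$ then the arc $v_\ell\to v_{\ell+1}$ of $G$ places $v_{\ell+1}\in \outnbhd{G}{\sigma}\subset \sigma$. Specializing to $\ell=k$ gives $j\in \sigma$, so $\outnbhd{G_c}{\sigma}\subset \sigma$ and thus $\sigma\in \code{G_c}{\exc}$.

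There is no substantive obstacle here: the argument is essentially the standard fact that forward-closedness is invariant under replacing a relation by its transitive closure. The only point worth flagging is to confirm that the paper's notion of ``$G_c$ is the transitive closure of $G$'' agrees with the standard graph-theoretic meaning used above; this is consistent with \cref{lemma:targets_are_transitive}, which verifies that the arcs of $G_c$ are already transitive, so $G_c$ does play the role of a transitive closure of any spanning subgraph $G$ whose closure equals $G_c$.
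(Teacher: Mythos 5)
Your proof is correct and takes essentially the same route as the paper's: one inclusion from the fact that $G$ is a subgraph of $G_c$ (so $\outnbhd{G}{\sigma}\subset\outnbhd{G_c}{\sigma}$), and the other by propagating membership in $\sigma$ along a $G$-path witnessing each arc of $G_c$, which is just your induction phrased in the paper as iterating the out-neighborhood operator $(\outnbhd{G}{\cdot})^m$. No gap.
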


\begin{proof}
    Let $\new{\outnbhd{G}{\sigma},\outnbhd{G_c}{\sigma}}$ be the targets of $\sigma\subset \exc$, in $G$ and $G_c$ respectively. By definition $\code{G}{\exc}=\{\sigma\subset\exc\,|\,\new{\outnbhd{G}{\sigma}}\subset\sigma\}$  and $\code{G_c}{\exc}=\{\sigma\subset\exc\,|\,\new{\outnbhd{G_c}{\sigma}}\subset\sigma\}$. By assumption $\new{\outnbhd{G}{\sigma}}\subset \new{\outnbhd{G_c}{\sigma}}$ for all $\sigma\subset \exc$. Thus for any $\sigma\subset\exc$ if $\new{\outnbhd{G_c}{\sigma}}\subset\sigma$ then $\new{\outnbhd{G}{\sigma}}\subset\sigma$. Therefore $\code{G_c}{\exc}\subset \code{G}{\exc}$. Furthermore, if for any $\sigma\subset \exc$, $\new{\outnbhd{G}{\sigma}}\subset \sigma$, then $\new{(N_{G_c}^+)}^m(\sigma)\subset \sigma$ where $\new{(N_{G_c}^+)}^m$ is an $m$-fold application of the $\new{N_{G_c}^+}$ operator to $\sigma$, for any $m\ge 1$. Thus, $\new{N_c^+}(\sigma)\subset \sigma$. Therefore $\code{G}{\exc}\subset \code{G_c}{\exc}$. 
\end{proof}

Finally, we can prove \cref{theorem:constructing_network}.

\begin{proof}[Proof of \cref{theorem:constructing_network}]
     By definition, 
     \[\new{\code{G_c}{\exc}=\{\sigma\subset \exc\,|\, \outnbhd{G_c}{\sigma}\cap \exc=\outnbhd{G_c}{\sigma}\subset \sigma\}.}\] 
     Thus, we need to show that $\{\sigma\subset \exc\,|\, \new{\outnbhd{G_c}{\sigma}}\cap \exc=\new{\outnbhd{G_c}{\sigma}}\subset \sigma\}=\{\sigma\subset\exc\,|\, c(\sigma)=\sigma\}$. Let $G$ be the digraph on $\exc$ be defined by $i\to j$ if and only if $j\in c(i)$ for all $i,j\in \exc$. Let $\new{\outnbhd{G}{\sigma}}$ be the targets of $\sigma\subset \exc$ in $G$. Note that by construction, $\new{\outnbhd{G}{\sigma}=\outnbhd{G_c}{\sigma}}\cup \sigma$. Therefore for all $\sigma\subset \exc$, $\new{\outnbhd{G}{\sigma}}=\sigma$ if and only if $\outnbhd{G_c}{\sigma}\subset \sigma$. Furthermore, by construction $\outnbhd{G}{i}=c(i)$. Hence by \cref{lemma:closure_distributes_over_unions} for all $\sigma\subset \exc$ we have 
    \[\new{\outnbhd{G}{\sigma}}=\bigcup_{i\in \sigma}\new{N_G^+}(i)=\bigcup_{i\in \sigma}c(i)=c\left(\bigcup_{i\in \sigma}i\right)=c(\sigma).\]
    Therefore for all $\sigma\subset \exc$, $c(\sigma)=\sigma$ if and only if $\new{\outnbhd{G_c}{\sigma}}\subset \sigma$.
\end{proof}

\appendix
 
\section{Stability of Linear-Threshold Rate Dynamics}
\label{section:appendix_A} Here we \new{recall} the necessary  results on the stability of the linear threshold dynamics  from \cite{nozari2020hierarchical,pavlov2005convergent} that we used in the proofs of Section \ref{section:proofs}. 
For a matrix $A$ let $||A||$ be its $2$-norm and let $|A|$ denote the matrix $|A|_{ij}=|A_{ij}|$. \new{
Given a $\sigma\in \{0,1\}^n$ let $\Sigma=\textnormal{diag}(\sigma)$, that is the diagonal matrix with the elements of $\sigma$ on the diagonal.}

\begin{definition}
\label{def:L_matrix}
An $n\times n$ matrix $A$ is totally-$\mathcal{L}$ stable, written $A\in \mathcal{L}$, if there exists $P=P^T\succ 0$ such that $(-I+A^T\Sigma)P+P(-I+\Sigma A)\prec 0$ for all $\Sigma=\textnormal{diag}(\sigma) $ and $\sigma\in \{0,1\}^n$.
\end{definition}

\begin{lemma}{\cite[Lemma 2.3]{nozari2020hierarchical}}
\label{lemma:inclusion_between_matrix_classes}
\begin{enumerate}
    \item $\rho(|W|)<1\Longrightarrow W\in \mathcal{L}$.
    \item $||W||<1\longrightarrow W\in\mathcal{L}$.
\end{enumerate}
\end{lemma}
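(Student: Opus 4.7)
The plan is to handle the two implications separately, guided by the definition of $\mathcal{L}$: we must exhibit $P=P^T\succ 0$ with $W^T\Sigma P + P\Sigma W \prec 2P$ for every diagonal $\Sigma$ with entries in $\{0,1\}$. For part (2), I would try the simplest candidate $P=I$. The requirement collapses to $\Sigma W + W^T\Sigma \prec 2I$, and since this matrix is symmetric its largest eigenvalue is bounded by
\[
\|\Sigma W + W^T\Sigma\|\le 2\|\Sigma\|\,\|W\|\le 2\|W\|<2,
\]
so the strict inequality holds uniformly in $\Sigma$ and $P=I$ works.

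For part (1) I would exploit the $M$-matrix structure created by $\rho(|W|)<1$. Since $|W|\ge 0$, \cref{lemma:simple-radius-lemma} (applied to $|W|$ in place of $W$) gives that $I-|W|$ is a nonsingular $M$-matrix. A classical characterization of nonsingular $M$-matrices (see the equivalent conditions in \cite{berman1994nonnegative}, Chapter 6) says that every such matrix $M$ admits a positive diagonal $D\succ 0$ with $DM+M^T D\succ 0$; applied to $M=I-|W|$ this produces a diagonal $P\succ 0$ satisfying $P|W|+|W|^T P\prec 2P$. I claim the same $P$ certifies $W\in\mathcal{L}$. Indeed, for any $x\neq 0$, setting $y=|x|$ componentwise and using $\sigma_i\in\{0,1\}$,
\begin{align*}
x^T P\Sigma W x &= \sum_i p_i\sigma_i x_i(Wx)_i \;\le\; \sum_i p_i|x_i|\sum_j|W_{ij}||x_j| \\
&= y^T P|W| y \;=\; \tfrac{1}{2}\, y^T\bigl(P|W|+|W|^T P\bigr)y \;<\; y^T P y \;=\; x^T P x,
\end{align*}
where the strict inequality uses $y\neq 0$ whenever $x\neq 0$, together with $P|W|+|W|^T P\prec 2P$. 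Since $W^T\Sigma P + P\Sigma W$ is symmetric, $x^T(W^T\Sigma P + P\Sigma W)x = 2x^T P\Sigma W x < 2x^T P x$ for all $x\neq 0$, which is exactly $W^T\Sigma P + P\Sigma W\prec 2P$, uniformly over $\Sigma$.

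The substantive step is the diagonal Lyapunov existence result for $M$-matrices used in (1): a priori the certificate in the definition of $\mathcal{L}$ need only be symmetric positive definite, and what makes the uniform-in-$\Sigma$ estimate work is precisely that $P$ can be chosen \emph{diagonal}, so that the entrywise bound $\sigma_i x_i (Wx)_i\le |x_i|\sum_j|W_{ij}||x_j|$ can be reassembled into a clean quadratic form in $y=|x|$. If one prefers a self-contained derivation, I would obtain the diagonal $P$ from the positive Perron-Frobenius left eigenvector of $|W|$, rescaling so that $|W|^T p<p$ entrywise, and then verify $P|W|+|W|^T P\prec 2P$ by a weighted Gershgorin or strict diagonal-dominance argument, after which the symmetrization step above runs unchanged. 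Part (2) is essentially immediate once $P=I$ is guessed; part (1) contains the real content.
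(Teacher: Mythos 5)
Your proof is correct. Note that the paper does not prove this lemma at all --- it is imported verbatim from \cite[Lemma 2.3]{nozari2020hierarchical} --- so there is no in-paper argument to compare against; your write-up supplies the standard proof one would find in that reference. Part (2) with $P=I$ and the norm bound $\lambda_{\max}(\Sigma W+W^T\Sigma)\le 2\|\Sigma\|\,\|W\|<2$ is airtight. Part (1) correctly isolates the essential point: the diagonal Lyapunov certificate $D(I-|W|)+(I-|W|)^TD\succ 0$ for the nonsingular $M$-matrix $I-|W|$ (a classical equivalence in \cite{berman1994nonnegative}), after which diagonality of $P$ is exactly what lets the termwise bound $p_i\sigma_i x_i(Wx)_i\le p_i|x_i|\sum_j|W_{ij}|\,|x_j|$ reassemble into $y^TP|W|y$ with $y=|x|$ and $y^TPy=x^TPx$. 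One small caveat: your self-contained fallback via only the \emph{left} Perron vector is not quite enough as stated (positive column sums of the symmetric $Z$-matrix $2P-P|W|-|W|^TP$ do not by themselves give definiteness); the clean elementary route takes $u,v>0$ with $(I-|W|)u>0$ and $(I-|W|)^Tv>0$ and sets $D=\mathrm{diag}(v_i/u_i)$, so that $D(I-|W|)+(I-|W|)^TD$ is a symmetric $Z$-matrix that is semipositive, hence a symmetric nonsingular $M$-matrix and therefore positive definite. Since that part is offered only as an alternative sketch and the main argument cites the standard characterization, the proof stands.
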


In the \new{statement} of the following \new{result} the authors in \cite{nozari2020hierarchical} had a hypothesis that $W$ is \new{nonsingular} and that for all $\sigma\in \{0,1\}^n$,  $(I-\Sigma W)$ is \new{nonsingular} as well. 
By the standard correspondence between finite binary sequences and finite subsets of $2^{[n]}$, it is not hard to observe that for a given $\sigma\in \{0,1\}^n$, $(I-\Sigma W)$ is \new{nonsingular} if and only if $(I-W)_{\sigma}$ is \new{nonsingular}. Thus, the second hypothesis is equivalent to the Ground Assumption. We thus continue not writing out the Ground Assumption in the statement of the following theorem as is the case in the rest of the paper, but the reader should note that it is indeed necessary for the statement to be true.

\begin{proposition}{\cite[Proposition 4.9]{nozari2020hierarchical}}
    \label{proposition:stability_of_switching_system}
    Consider the network dynamics in \eqref{eq:network_dynamics}. If
$\rho(|W|) < 1$ or $||W|| <1$, then for all $b \in R^{n}$, the network
has a unique fixed point $x^*$ and it is globally exponentially stable relative to $x^*$.
\end{proposition}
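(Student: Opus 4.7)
The plan is to derive both uniqueness and global exponential stability from the $\mathcal{L}$-matrix condition via a quadratic Lyapunov/contraction argument. By \cref{lemma:inclusion_between_matrix_classes}, either hypothesis $\rho(|W|)<1$ or $\|W\|<1$ gives $W\in \mathcal{L}$, so there exists a symmetric positive-definite $P$ with
\[(-I+W^T\Sigma)P + P(-I+\Sigma W)\prec 0\]
for every diagonal $\Sigma$ with entries in $\{0,1\}$. Since the left-hand side depends linearly on $\Sigma$ and strict negative-definiteness is preserved under convex combinations, the same inequality persists for every diagonal matrix whose entries lie in the closed interval $[0,1]$.

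Rewrite \eqref{eq:network_dynamics} in vector form as $\dot x = -x + \phi(Wx+b)$, where $\phi$ is the componentwise ReLU. The crucial observation is that for any $u,v\in\mathbb{R}^n$ one has $\phi(u)-\phi(v) = D(u,v)(u-v)$ for some diagonal matrix $D(u,v)$ with entries in $[0,1]$ (the coordinatewise incremental slope of ReLU). Applying this to two solutions $x(t), y(t)$ of \eqref{eq:network_dynamics} yields
\[\dot x - \dot y = \bigl(-I + D(t)W\bigr)(x-y),\]
where $D(t)$ is a time-varying diagonal matrix with entries in $[0,1]$. Setting $V(t) = (x-y)^T P(x-y)$, a direct differentiation gives
\[\dot V = (x-y)^T\bigl[(-I+W^TD(t))P + P(-I+D(t)W)\bigr](x-y) \le -\alpha V\]
for some $\alpha > 0$ independent of $t$, by the extended $\mathcal{L}$-inequality above. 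Hence any two trajectories converge to each other exponentially in the $P$-norm.

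Uniqueness of any fixed point is immediate from this contraction estimate. For existence, apply the same estimate to the pair $x(t)$ and $y(t) = x(t+s)$ for fixed $s>0$: the bound $\|x(t+s)-x(t)\|_P \le e^{-\alpha t/2}\|x(s)-x(0)\|_P$ shows that $\{x(t)\}_{t\ge 0}$ is Cauchy as $t\to\infty$ and hence converges to some $x^*$. Since $\phi$ is globally Lipschitz, the right-hand side of \eqref{eq:network_dynamics} is continuous, so $x^*$ must be a fixed point; the contraction estimate then promotes $x^*$ to a globally exponentially stable equilibrium.

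The main obstacle is the non-smoothness of $\phi$: the flow is merely piecewise linear, and different trajectories traverse different switching regions, so there is no single linear system to analyze. The key move that circumvents this is the realization of the incremental-slope matrix $D(u,v)$ as a convex combination of the diagonal $\{0,1\}$-matrices indexing the $\mathcal{L}$-condition. This is precisely what allows a single quadratic Lyapunov function to certify contraction uniformly across all switching patterns, and it explains why $\mathcal{L}$ is the natural hypothesis in \cref{def:L_matrix}.
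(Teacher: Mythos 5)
The paper does not actually prove this proposition: it is imported verbatim from \cite{nozari2020hierarchical} (their Proposition 4.9, reached via their Lemma 2.3 and Theorem 4.8), so there is no in-paper argument to compare against. Your proof is a legitimate self-contained route, and a nice one: passing from the $\{0,1\}$-diagonal LMIs of \cref{def:L_matrix} to all $[0,1]$-diagonal matrices by convexity, and then realizing the incremental slope of the ReLU as such a matrix, turns the single matrix $P$ into a contraction metric for the full nonlinear flow. This incremental formulation has the advantage that existence of the equilibrium falls out of the contraction, whereas the cited reference treats the dynamics as a state-dependent switched affine system, establishes existence and uniqueness of solutions of $x=[Wx+b]_+$ separately, and uses $P$ only as a common Lyapunov function about $x^*$. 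Two steps in your write-up deserve tightening. First, ``$\dot V\le -\alpha V$ for some $\alpha>0$ independent of $t$'' needs a word: the matrix $(-I+W^TD)P+P(-I+DW)$ is negative definite for each fixed $D$, and uniformity of $\alpha$ follows because the set of $[0,1]$-diagonal matrices is compact and the largest eigenvalue depends continuously on $D$. Second, the Cauchy argument as written does not quite close: in the bound $\|x(t+s)-x(t)\|_P\le e^{-\alpha t/2}\|x(s)-x(0)\|_P$ the right-hand side is not a priori bounded uniformly in $s$, since you have not yet shown trajectories are bounded. Fix a step $s=1$ and telescope, $\sum_n\|x(n+1)-x(n)\|_P\le \|x(1)-x(0)\|_P\sum_n e^{-\alpha n/2}<\infty$, to get convergence of $x(n)$ and hence of $x(t)$; equivalently, apply the Banach fixed-point theorem to the time-$1$ flow map and use commutativity of the flow maps to conclude its fixed point is an equilibrium. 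With these repairs the argument is complete. Note also that under either hypothesis the nonsingularity of $I-\Sigma W$ demanded by the Ground Assumption holds automatically (e.g. $\|\Sigma W\|\le\|W\|<1$), which is consistent with your never needing to invoke it.
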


Note that   it has been shown in \cite{nozari2020hierarchical}, that  a more general condition \cite[Theorem 4.8]{nozari2020hierarchical} can also guarantee the above result. However, checking if the matrix $W$ satisfies those conditions   is significantly harder in our context, thus we used the statement  above  instead.

\section*{Acknowledgments}
This work was supported by the NSF Next Generation Networks for Neuroscience Program (award 2014217). \new{We thank the anonymous referees for their many helpful comments that improved the manuscript.}

\printbibliography

\end{document}